\documentclass[letterpaper,11pt]{article}

\usepackage{amsfonts}
\usepackage{indentfirst}
\usepackage{amssymb}
\usepackage{amsmath}
\usepackage{eufrak}
\usepackage{amsthm}
\usepackage{hyperref}
\usepackage{cite}
\usepackage{amscd}

\pagestyle{plain}

\addtolength{\oddsidemargin}{-1cm}
\addtolength{\textwidth}{2cm}
\addtolength{\topmargin}{-1.6cm}
\addtolength{\textheight}{2.5cm}
\addtolength{\footskip}{0.5cm}
\setlength{\parskip}{0.6ex}

\newtheorem{theorem}{Theorem}[section]
\newtheorem{lemma}[theorem]{Lemma}
\newtheorem{proposition}[theorem]{Proposition}
\newtheorem{corollary}[theorem]{Corollary}
\newtheorem{remark}{Remark}[section]
\newtheorem{definition}{Definition}[section]

\title{Simultaneous Deformations of Lie Algebroids and Lie Subalgebroids}
\author{Xiang Ji\\Department of Mathematics\\The Pennsylvania State University}
\date{September 30, 2012}

\begin{document}

\maketitle

\begin{abstract}
$L_\infty$-algebra is an algebraic structure suitable for describing deformation problems. In this paper we construct two $L_\infty$-algebras, one to control the deformations of Lie algebroids, and the other to control the deformations of Lie subalgebroids. We also combine these two $L_\infty$-algebras into one to control the simultaneous deformations of a Lie algebroid and its Lie subalgebroids. The results generalize the deformation theory of Lie algebra and Lie subalgebras. Applications of our results include deformations of foliations, deformations of complex structures and deformations of homomorphisms of Lie algebroids.
\end{abstract}

\tableofcontents
\hypersetup{linktocpage}

\section{Introduction}

\par Deformation theories arose from the middle of the 20th century, first originated from deforming algebraic structures like Lie algebras, and then developed into the field of geometry to deform geometric structures such as Riemann structures, Poisson(symplectic) structures and so on (see \cite{pommaret} for an introduction). $L_\infty$-algebras turn out to be the right language for describing deformation theories, which include differential graded Lie algebras as special cases. An $L_\infty$-algebra controls or governs the deformation if there is a one-one correspondence between the deformations and the Maurer-Cartan elements of the $L_\infty$-algebra. The reader may find introduction to $L_\infty$-algebras in \cite{stasheff} and \cite{schatz1}. 

\par First let us recall the definition of Lie algebroids and Lie subalgebroids. A \textsl{Lie algebroid} is a vector bundle $A$ over some smooth manifold $M$ together with a Lie bracket $[\cdot,\cdot]$ on the set $\Gamma(A)$ of sections of $A$ and a bundle map $\rho:A\to TM$, called the anchor, such that
\begin{itemize}
\item $\rho:\Gamma(A)\to \mathfrak{X}(M)$ is a homomorphism of Lie algebras,
\item $[v,f\cdot w]=f[v,w]+(\rho(v)f)\cdot w$ for any $v, w \in\Gamma(A)$ and $f\in\mathcal{C}^{\infty}(M)$.
\end{itemize}
When the base manifold $M$ is a point, the Lie algebroid $A$ becomes a Lie algebra. Also the tangent bundle $TM$ is a Lie algebroid with the Lie bracket of vector fields and the anchor $\rho=Id_{TM}:TM\to TM$. A \textsl{Lie subalgebroid} of $A$ is a subbundle $E\subset A$ over some closed submanifold $S\subset M$, such that 
\begin{itemize}
\item $\rho(E)\subset TS$,
\item for any $e_1,e_2\in \Gamma(E)$ and any extension $\tilde{e}_1,\tilde{e}_2\in\Gamma(A)$, $[\tilde{e}_1,\tilde{e}_2]|_S\in \Gamma(E)$ and it does not depend on the choice of extensions.
\end{itemize} 
When $M$ is a point, the concept of Lie subalgebroids coincides with the concept of Lie subalgebras. For any submanifold $S\subset M$, $TS$ is always a Lie subalgebroid of $TM$. Other examples can be found in \cite{weinstein} and \cite{mackenzie}.

\par We explain the definition of $L_\infty$-algebras in Section \ref{sec2}. A method invented by T. Voronov to construct $L_\infty$-algebras (\cite{voronov1}, \cite{voronov2}) which will be used later in the paper is also introduced. In order to describe the simultaneous deformation, we also include a result of Y. Fr$\acute{e}$gier and M. Zambon (\cite{zambon}). 

\par To solve the deformation problem of Lie algebroids and Lie subalgebroids, we are going to find two $L_\infty$-algebras, one to govern the deformations of Lie algebroids and the other to govern the deformations of Lie subalgebroids. These are the major parts of our paper (Theorem \ref{thm1} and Theorem \ref{thm2}). Our results generalize the deformation theory of Lie algebras and Lie subalgebras in \cite{zambon1}. The $L_\infty$-algebra governing the deformations of a Lie algebroid turns out to be a differential graded Lie algebra. M. Crainic and I. Moerdijk get similar results in \cite{crainic}, but our method saves a lot of arguments. We also describe some alternative approaches to solve the two deformation problem using different viewpoints in these sections.

\par In Section \ref{sec5}, we combine the two $L_\infty$-algebras into one to govern the simultaneous deformations of a Lie algebroid and its Lie subalgebroid. This is Corollary \ref{cor2}. In the last section (Section \ref{sec6}), we give some applications to our results including deformations of foliations, deformations of complex structures and deformations of homomorphisms of Lie algebroids.

\par \textbf{Acknowledgments.} I would endlessly thank my advisor Prof. Ping Xu for his encouragements and a lot of helpful as well as insightful discussions and suggestions, especially in interchanging different viewpoints to attack the problem. I also want to thank Rajan Mehta for discussions and useful information supplied on super languages. 

\section{$L_\infty$-algebras and Maurer-Cartan elements}
\label{sec2}
\subsection{Definition of $L_\infty$-algebras}
\par Whenever describing deformations of objects, no matter algebraic or geometric, the $L_\infty$-algebra is always the right language to adopt. The $L_\infty$-algebras can be viewed as generalizations of differential graded Lie algebras.

\begin{definition}
A \textsl{differential graded Lie algebra}(DGLA) is a $\mathbb{Z}$-graded vector space $V=\oplus_{n\in \mathbb{Z}} V_n$ over $\mathbb{R}$ together with a graded antisymmetric bilinear bracket $[\cdot,\cdot]:V_i\otimes V_j\to V_{i+j}$ and a differential $d: V\to V$ satisfying
\begin{itemize}
\item graded antisymmetry: $[a,b]=-(-1)^{|a|\cdot |b|}[b,a]$,
\item Jacobi identity: $[a,[b,c]]=[[a,b],c]+(-1)^{|a|\cdot |b|}[b,[a,c]]$,
\item Leibniz rule: $d[a,b]=[da,b]+(-1)^{|a|}[a,db]$,
\item differential condition: $d^2=0$ and $|da|=|a|+1$,
\end{itemize}
for homogeneous $a,b,c\in V$. Here $|a|$ denotes the degree of $a$, i.e. $|a|=n$ if $a\in V_n$.
\end{definition}

\par The concept of $L_\infty$-algebra is first introduced by Lada and Stasheff \cite{stasheff}. There are several equivalent definitions. In the following, we shall list some of them, and indicate their equivalences. The first one follows Lada and Stasheff and the rule of \textsl{Koszul sign} is needed. Given a graded vector space $V$, there are two natural actions of the symmetric group $S_n$ on $\otimes ^n V$. The symmetric action is defined by
\[
	\sigma(a_1\otimes \cdots \otimes a_n)=(-1)^{|a_j||a_{j+1}|}a_1\otimes \cdots a_{j+1}\otimes a_j \otimes \cdots \otimes a_n
\]
where $\sigma\in S_n$ is the transposition of the $j$-th and the $(j+1)$-th element. This extends to a well-defined action of $S_n$ on $\otimes^n V$. In general we use $e(\tau,a_1,\cdots,a_n)$ to denote the sign resulting from the action of a general $\tau\in S_n$ on $a_1\otimes\cdots\otimes a_n$ with $a_1,\cdots,a_n$ homogeneous, i.e.
\[
	\tau(a_1\otimes \cdots \otimes a_n)=e(\tau,a_1,\cdots,a_n)a_{\tau(1)}\otimes\cdots\otimes a_{\tau(n)}.
\]
When the elements $a_1, \cdots, a_n$ are clear from the context, we abbreviate $e(\tau,a_1,\cdots,a_n)$ to be $e(\tau)$. The skew-symmetric action is defined by
\[
	\tau(a_1\otimes \cdots \otimes a_n)=(-1)^{\tau}e(\tau)a_{\tau(1)}\otimes\cdots\otimes a_{\tau(n)},
\]
where $(-1)^{\tau}$ is $1$ if $\tau$ is even and $-1$ if $\tau$ is odd. 
\par A multi-linear map $l_n:\otimes^n V\to V$ is symmetric if 
\[
	l_n(a_{\tau(1)}\otimes\cdots\otimes a_{\tau(n)})=e(\tau)l_n(a_1\otimes\cdots\otimes a_n)
\]
for arbitrary homogeneous elements $a_1,\cdots,a_n\in V$. A multi-linear map $m_n:\otimes^n V\to V$ is antisymmetric if 
\[
	l_n(b_{\tau(1)}\otimes\cdots\otimes b_{\tau(n)})=(-1)^{\tau}e(\tau)l_n(b_1\otimes\cdots\otimes b_n)
\]
for arbitrary homogeneous elements $b_1,\cdots,b_n\in V$.

\par In the following we use $S_{i,n-i}$ to denote the collection of all $(i,n-i)$ shuffles in $S_n$, i.e. $\tau\in S_{i,n-i}$ if and only if $\tau(1)<\cdots<\tau(i)$ and $\tau(i+1)<\cdots<\tau(n)$.

\begin{definition}
An $L_\infty$-algebra is a $\mathbb{Z}$-graded vector space $V$ over $\mathbb{R}$ together with a family of antisymmetric multi-linear maps $l_k:\otimes^n V\to V$ of degree $2-k$ ($k\ge 0$) such that for any homogeneous $a_1,\cdots,a_n\in V$,
\[
	\sum_{i+j=n+1} (-1)^{i(j-1)}\sum_{\tau \in S_{i,n-i}}(-1)^{\tau}e(\tau)l_i(l_j(a_{\tau(1)},\cdots,a_{\tau(j)}),a_{\tau(j+1)},\cdots,a_{\tau(n)})=0.
\]
\end{definition}

\par An $L_\infty$-algebra $(V,\{l_k\})$ is called flat if $l_0=0$. A flat $L_\infty$-algebra becomes a DGLA if $l_k=0$ for $k\ge 3$.

\par In the above definition, if shifting the degree of $V$ by $1$, all antisymmetric multi-linear maps $l_k$'s become symmetric multi-linear maps of degree $1$. In this case, the definition of $L_\infty$-algebras should be modified as follows.

\begin{definition}
\label{l1}
An $L_\infty[1]$-algebra is a $\mathbb{Z}$-graded vector space $V$ over $\mathbb{R}$ together with a family of symmetric multi-linear maps $m_k:\otimes^n V\to V$ of degree $1$ ($k\ge 0$) such that for any homogeneous $b_1,\cdots,b_n\in V$,
\[
	\sum_{i+j=n+1} \sum_{\tau \in S_{i,n-i}}e(\tau)m_i(m_j(b_{\tau(1)},\cdots,b_{\tau(j)}),b_{\tau(j+1)},\cdots,b_{\tau(n)})=0.
\]
\end{definition}

\par Given a $\mathbb{Z}$-graded vector space $V$ and $n\in \mathbb{Z}$, $V[n]$ is the graded vector space defined by $(V[n])_k=V_{n+k}$. 

\begin{remark}
An $L_\infty$-algebra structure on $V$ is equivalent to an $L_\infty[1]$-algebra structure on $V[1]$ by the shift isomorphism:
\begin{align*}
	(\otimes^n V)[n] &\to \otimes^n (V[1])\\
a_1\otimes\cdots\otimes a_n & \mapsto (-1)^{(n-1)|a_1|+(n-2)|a_2|+\cdots+|a_{n-1}|}a_1\otimes\cdots\otimes a_n.
\end{align*}
\end{remark}

\par Using super language, $L_\infty$-algebra is described more elegantly in \cite{shoikhet}. The details of super and graded manifolds can be found in \cite{schatz}.

\begin{definition}[\cite{shoikhet}]
An $L_\infty[1]$-algebra is a $\mathbb{Z}$-graded vector space $V$ and an odd vector field $X_Q$ of degree $1$ on the super vector space $V[1]$ such that $[X_Q,X_Q]=0$.
\end{definition}

\begin{remark}
Given an $L_\infty[1]$-algebra $(V,\{m_k\})$, the super vector field $X_Q$ is determined by
\[
	X_Q: \mathcal{C}^{\infty}(V[1])=\oplus_{n=0}^{\infty}\wedge^n (V^*) \to \mathcal{C}^{\infty}(V[1]),
\] 
such that for any $\eta\in V^*\subset\mathcal{C}^{\infty}(V[1])$, $X_Q(\eta)\in \mathcal{C}^{\infty}(V[1])$ and any $k\ge 0$, $v_1,\cdots,v_k \in V$
\[
	X_Q(\eta)(v_1,\cdots,v_k)=<m_k(v_1,\cdots,v_n),\eta>
\]
where $<\cdot,\cdot>$ denotes the canonical pairing between $V$ and $V^*$. Conversely, given a vector field $X_Q$ of degree $1$ on $V[1]$, the multi-linear maps $m_k$'s can be recovered from the above formula as well. The condition $[X_Q,X_Q]=0$ is equivalent to the vanishing conditions in Definition.\ref{l1}.
\end{remark}

\par Stasheff (\cite{stasheff}) also gives a conceptual approach to define $L_\infty$-algebras. Given a $\mathbb{Z}$-graded vector space $V$, the tensor algebra $\mathcal{T}(V)=\sum_{k=0}^{\infty}\otimes^{k} V$ is a coassociative coalgebra with the comultiplication $\Delta: \mathcal{T}(V)\to \mathcal{T}(V) \otimes \mathcal{T}(V)$ defined by
\[
	\Delta(x_1\otimes\cdots\otimes x_n)=\sum_{k=0}^{n}(x_1\otimes\cdots\otimes x_k)\otimes(x_{k+1}\otimes\cdots\otimes x_n)
\]
Let $\mathcal{S}(V)$ be the symmetric product of $V$, i.e. the subspace of $\mathcal{T}(V)$ consisting of elements of the form
\[
	v_1\odot\cdots\odot v_n=\sum_{\tau\in S_n}e(\tau,v_1,\cdots,v_n)v_{\tau(1)}\otimes\cdots\otimes v_{\tau(n)}.
\]
$\mathcal{S}(V)$ is a subcoalgebra of $\mathcal{T}(V)$. Any symmetric multi-linear map can automatically define over $\mathcal{S}(V)$. Suppose the maps $\{m_k:\otimes^k V\to V\}$ make $V$ into an $L_\infty[1]$-algebra. They yield a coderivation $M$ of degree 1 on the coalgebra $(\mathcal{S}(V),\Delta)$ by
\[
	M(x_1\odot\cdots\odot x_n)=\sum_{r+s=n}\sum_{\tau\in S_{r,s}}e(\tau)m_r(x_{\tau(1)}\odot\cdots\odot x_{\tau(r)})\odot x_{\tau(r+1)}\odot\cdots\odot x_{\tau(n)}
\]
which satisfies $\Delta\circ M=(M\otimes Id+ Id\otimes M)\circ \Delta$. The vanishing condition in Definition.\ref{l1} is equivalent to $M\circ M=0$. A coderivation of degree 1 satisfying $M\circ M=0$ is called a codifferential.

\begin{definition}
A $\mathbb{Z}$-graded vector space $V$ is an $L_\infty[1]$-algebra if there is a codifferential $M$ on the coassociative subcoalgebra $(\mathcal{S}(V),\Delta)$ .
\end{definition}

\par All the above four definitions of $L_\infty$-algebras are equivalent. In the following we mainly apply Definition \ref{l1}. For any $v\in V$, we use $v[1]$ to denote the corresponding element in $V[1]$.

\begin{remark} 
\label{cohomology}
Given a flat $L_\infty[1]$-algebra $(V,\{m_k\})$, $m_1$ is a differential of $V$ according to its vanishing axioms. $(V,m_1)$ becomes a cochain complex and the cohomology $H^*(V,m_1)$ turns out to be a Lie algebra with the bracket induced by $m_2:\otimes^2 V\to V$.
\end{remark}

\subsection{$L_\infty$-algebras by higher derived brackets}

\par T. Voronov provides an approach to construct $L_\infty$-algebras via higher derived brackets.

\begin{definition}
A \textsl{V-algebra} is a graded Lie algebra $(V,[\cdot,\cdot])$, which is the direct sum of two Lie subalgebras $V=\mathfrak{a}\oplus \mathfrak{b}$ and one of them $\mathfrak{a}$ is abelian.
\end{definition}

\par We denote the projection $V\to \mathfrak{a}$ by $P$, then $\mathfrak{b}=Ker(P)$.

\begin{theorem}[\cite{voronov1}]
\label{v1}
Suppose $(V,[\cdot,\cdot],\mathfrak{a},P)$ is a V-algebra and $\Delta\in V$ a homogeneous element of degree $1$ which preserves $Ker(P)$ (i.e. $[\Delta,\mathfrak{b}]\subset \mathfrak{b}$) and satisfies $[\Delta,\Delta]=0$, then $\mathfrak{a}$ becomes an $L_\infty[1]$-algebra with structure maps $m_0:\mathbb{R}\to V_1$ mapping $1$ to $P(\Delta)$ and
\begin{eqnarray}
	m_k(a_1,\cdots,a_k)=P[\cdots[[\Delta,a_1],a_2],\cdots,a_k]
\end{eqnarray}
for $k\ge 1$.
\end{theorem}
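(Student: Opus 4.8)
**The plan is to verify the $L_\infty[1]$-axioms directly from the defining formula $m_k(a_1,\dots,a_k)=P[\cdots[[\Delta,a_1],a_2],\cdots,a_k]$**, reducing everything to the graded Jacobi identity in $V$ together with the two structural hypotheses: that $\mathfrak{a}$ is abelian and that $[\Delta,\mathfrak{b}]\subset\mathfrak{b}$. First I would record the symmetry of the $m_k$: since the $a_i$ all lie in the abelian subalgebra $\mathfrak{a}$, swapping two adjacent inner brackets $[[\cdots,a_i],a_{i+1}]$ costs exactly the Koszul sign, because the graded Jacobi identity gives $[[x,a_i],a_{i+1}]-(-1)^{|a_i||a_{i+1}|}[[x,a_{i+1}],a_i]=[x,[a_i,a_{i+1}]]=0$. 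Hence each $m_k$ is graded-symmetric as required, and it has degree $1$ because $\Delta$ has degree $1$ and the $a_i$ are inserted without shift (the shift is already built into working with $\mathfrak{a}\subset V$ as an $L_\infty[1]$-algebra).

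The core computation is the higher Jacobi relation. I would introduce the shorthand $\Delta_{a_1\cdots a_k}:=[\cdots[[\Delta,a_1],a_2],\cdots,a_k]$ for the iterated bracket \emph{before} applying $P$, so that $m_k(a_1,\dots,a_k)=P(\Delta_{a_1\cdots a_k})$. The key lemma is a ``derivation-type'' identity: because $P$ projects onto the abelian part along $\mathfrak{b}=\ker P$, and because $[\Delta,\mathfrak{b}]\subset\mathfrak{b}$, one shows that for $a\in\mathfrak{a}$,
\[
  [\Delta_{a_1\cdots a_j},\,a_{j+1}] \;=\; \Delta_{a_1\cdots a_{j+1}} \;+\; [\,(1-P)\Delta_{a_1\cdots a_j},\,a_{j+1}\,],
\]
but more useful is the observation that $[P\xi,\eta]$ and $[\xi,\eta]$ differ by $[(1-P)\xi,\eta]\in[\mathfrak b,\mathfrak a]$, whose $P$-projection is controlled. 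Concretely I would expand
\[
  \sum_{i+j=n+1}\ \sum_{\tau\in S_{i,n-i}} e(\tau)\, m_i\big(m_j(a_{\tau(1)},\dots,a_{\tau(j)}),a_{\tau(j+1)},\dots,a_{\tau(n)}\big)
\]
by substituting $m_j=P\,\Delta_{\cdots}$ and $m_i=P[\cdots[\Delta,-],\dots]$, then use that $P[\Delta,\mathfrak b]=0$ (from $[\Delta,\mathfrak b]\subset\mathfrak b$) to replace the inner $P$ by the identity wherever an extra bracket with $\Delta$ is applied afterwards, so the whole sum collapses to a single expression of the form $P$ applied to a signed sum of fully-iterated brackets $[\cdots[[[\Delta,\Delta],a_{\sigma(1)}],\dots],a_{\sigma(n)}]$ grouped by shuffles. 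The graded Jacobi identity then reorganizes this into $\tfrac12 P[[\Delta,\Delta],a_1\cdots a_n]$-type terms, and $[\Delta,\Delta]=0$ kills everything. The bookkeeping of signs — matching $e(\tau)$ against the Koszul signs produced by Jacobi and by moving $\Delta$ past the $a_i$'s — is where I would be most careful.

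**The main obstacle will be exactly this sign- and shuffle-bookkeeping**: showing that after inserting the projectors the telescoping really does produce $P$ of an iterated bracket of $[\Delta,\Delta]$ with all the $a_i$, with no stray terms. The conceptual content is light (only graded Jacobi, abelianness of $\mathfrak a$, and $\Delta$-invariance of $\mathfrak b$ are used), but making the cancellation transparent requires either an inductive argument on $n$ — assuming the identity for fewer arguments and peeling off $a_n$ — or the cleaner route of first proving the auxiliary identity that $\mathrm{ad}_\Delta$ together with the $\mathrm{ad}_{a_i}$'s generate, on $\mathfrak b$, an honest representation-like structure so that $P\circ\mathrm{ad}_\Delta$ behaves as a twisted differential. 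I would present the inductive version, since it isolates the single nontrivial step (commuting one more $\mathrm{ad}_{a_{j+1}}$ past the partially-built bracket, using $[\mathfrak a,\mathfrak a]=0$ to discard the unwanted term) and makes the role of each hypothesis visible. Finally I would note the $m_0$ claim is immediate: the $n=1$ instance of the $L_\infty[1]$ relation reads $m_1(m_0)+m_0$-corrections $=P[\Delta,P\Delta]=P[\Delta,\Delta]-P[\Delta,(1-P)\Delta]$, and since $(1-P)\Delta\in\mathfrak b$ gives $[\Delta,(1-P)\Delta]\in\mathfrak b$ hence $P$ of it vanishes, this is $\tfrac12 P[\Delta,\Delta]=0$, so $m_0=P(\Delta)$ is a legitimate degree-$1$ structure constant and the curved $L_\infty[1]$-axioms hold.
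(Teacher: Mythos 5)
First, note that the paper does not prove this statement at all: it is imported verbatim from Voronov's work, so your proposal can only be measured against the standard argument in the cited source. Your overall strategy is the right one and is essentially Voronov's: symmetry of the $m_k$ from abelianness of $\mathfrak{a}$ via $[[x,a],b]-(-1)^{|a||b|}[[x,b],a]=[x,[a,b]]=0$ is correct, the degree count is correct, and the treatment of the $n=0$ relation $m_1(m_0)=P[\Delta,P\Delta]=-P[\Delta,(1-P)\Delta]=0$ (using $[\Delta,\mathfrak{b}]\subset\mathfrak{b}$) is correct apart from a spurious factor of $\tfrac12$.

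However, the mechanism you give for the higher Jacobi identities has a genuine gap. You propose to ``replace the inner $P$ by the identity'' on the grounds that $P[\Delta,\mathfrak{b}]=0$. The error term produced by this replacement is $P[\cdots[[\Delta,(1-P)\Delta_{J}],a_{\tau(j+1)}],\cdots,a_{\tau(n)}]$; the element $[\Delta,(1-P)\Delta_J]$ does lie in $\mathfrak{b}$, but $[\mathfrak{b},\mathfrak{a}]$ is not contained in $\mathfrak{b}$ in general, so after bracketing with even one further $a_i$ the term leaves $\ker P$ and is not killed by the outer $P$. These error terms vanish only after summing over all shuffles, and establishing that cancellation is the entire content of the proof --- so the step you defer as ``bookkeeping'' is not bookkeeping but the missing argument. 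The clean route (Voronov's) runs in the opposite direction: starting from $0=P\bigl(\mathrm{ad}_{a_n}\cdots \mathrm{ad}_{a_1}\tfrac12[\Delta,\Delta]\bigr)$, expand by the Leibniz rule for the commuting derivations $\mathrm{ad}_{a_i}$ (they commute because $[a_i,a_j]=0$) into a shuffle sum of terms $P[\Delta_I,\Delta_J]$, then insert $1=P+(1-P)$ in each factor: the $(1-P)$--$(1-P)$ terms die because $\mathfrak{b}$ is a subalgebra, the $P$--$P$ terms die because $\mathfrak{a}$ is abelian, and the cross terms are identified with the Jacobiator using the already-established symmetry of the $m_k$. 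You should rebuild the central step along these lines; as written, the term-by-term telescoping you describe does not hold.
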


\begin{remark}
The structure maps $m_k$'s are called the higher derived brackets of $\Delta$. The element $\Delta$ in the theorem is called a Maurer-Cartan element of the V-algebra. Particularly, if $\Delta$ lies in $Ker(P)$, then $[\Delta,Ker(P)]\subset Ker(P)$ is automatic, and $\Delta$ is a Maurer-Cartan element if and only if $[\Delta,\Delta]=0$. In this case, the corresponding $L_\infty[1]$-algebra $\mathfrak{a}$ is flat. The original theorem is valid for any derivation $D:V\to V$ of degree $1$ which preserves $Ker(P)$ and satisfies $D\circ D=0$. Here we only use the special case of an inner derivation $D=[\Delta,\cdot]$.
\end{remark}

\par Given the conditions in the above remark, there is also an $L_\infty[1]$-algebra structure on $V[1]\oplus \mathfrak{a}$.

\begin{theorem}[\cite{voronov2}]
\label{v2}
Suppose $(V,[\cdot,\cdot],\mathfrak{a},P)$ is a V-algebra and $\Delta\in Ker(P)_1$ satisfying $[\Delta,\Delta]=0$, then $V[1]\oplus\mathfrak{a}$ becomes a flat $L_\infty[1]$-algebra with structure maps
\begin{eqnarray}
	m_1(v[1],a)=(-[\Delta,v][1],P(v+[\Delta,a])),\\
	m_2(v[1],w[1])=(-1)^{|v|}[v,w][1],\\
	m_k(v[1],a_1,\cdots,a_{k-1})=P[\cdots[[v,a_1],a_2],\cdots,a_{k-1}],\\
	m_k(a_1,\cdots,a_k)=P[\cdots[[\Delta,a_1],a_2],\cdots,a_k],
\end{eqnarray}
for $k\ge 2$, and all other combinations vanish. Here $a_i\in\mathfrak{a}$ and $v,w$ are homogeneous elements in $V$.
\end{theorem}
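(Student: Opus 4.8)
The plan is to deduce this result from Theorem \ref{v1} by a clever choice of V-algebra, rather than verifying the $L_\infty[1]$-identities by brute force. The key observation is that the graded Lie algebra $V$ carries a natural larger V-algebra structure once we shift and add a copy of $\mathfrak{a}$. Concretely, I would consider the graded vector space $W = V[1] \oplus \mathfrak{a}$ and realize it inside a bigger graded Lie algebra. One standard trick (and the one I expect Voronov uses) is to work with the graded Lie algebra $\widehat{V}$ of ``$V$-valued polynomial vector fields'' or, more concretely, to take $\widehat{V} = V \ltimes V[1]$ — the semidirect product where $V[1]$ is an abelian ideal on which $V$ acts by the adjoint action (with the degree shift). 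Then set $\widehat{\mathfrak a} = \mathfrak a \oplus V[1]$ (this is abelian, being a sum of abelian pieces that bracket to zero after the shift) and $\widehat{\mathfrak b} = \mathfrak b$, so that $\widehat V = \widehat{\mathfrak a} \oplus \widehat{\mathfrak b}$ is a V-algebra with projection $\widehat P$ onto $\widehat{\mathfrak a}$.

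First I would check that $\Delta$, viewed as an element of $\widehat V$ of degree $1$, still satisfies $[\Delta,\Delta]=0$ and lies in $\widehat{\mathfrak b} = \mathfrak b = \mathrm{Ker}(P)$, so that Theorem \ref{v1} applies and produces a flat $L_\infty[1]$-algebra structure on $\widehat{\mathfrak a} = V[1]\oplus\mathfrak a$. Next I would compute the higher derived brackets $\widehat m_k(x_1,\dots,x_k) = \widehat P[\cdots[[\Delta,x_1],x_2],\dots,x_k]$ for $x_i \in \widehat{\mathfrak a}$, separating the cases according to how many of the arguments lie in the $V[1]$-summand versus the $\mathfrak a$-summand. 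When all arguments lie in $\mathfrak a$, the bracket takes place entirely in the $V$-part and we recover $m_k(a_1,\dots,a_k) = P[\cdots[[\Delta,a_1],a_2],\dots,a_k]$, the last line of the theorem. When exactly one argument is $v[1] \in V[1]$, the adjoint action of $V$ on the ideal $V[1]$ together with the projection produces $m_k(v[1],a_1,\dots,a_{k-1}) = P[\cdots[[v,a_1],\dots,a_{k-1}]$ — here one has to track the shift sign carefully, and the bracket $[\Delta,v]$ contributes the $v[1]$-component of $m_1$, namely $-[\Delta,v][1]$, while the projection of $v$ onto $\mathfrak a$ plus the projection of $[\Delta,a]$ assemble into the $\mathfrak a$-component $P(v+[\Delta,a])$ of $m_1(v[1],a)$. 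When two arguments lie in $V[1]$, since $V[1]$ is an \emph{abelian} ideal the bracket $[v[1],w[1]]$ inside $\widehat V$ vanishes, but $[[\Delta, v[1]], w[1]]$ need not: $[\Delta,v[1]]$ has a component back in $V$ (this is the subtlety of the semidirect-product bracket versus a direct sum), yielding $m_2(v[1],w[1]) = (-1)^{|v|}[v,w][1]$. Any configuration with three or more $V[1]$-arguments dies because after two adjoint actions one has exhausted the ideal. This case analysis, together with flatness inherited from $\Delta \in \mathrm{Ker}(\widehat P)$, gives exactly the five displayed structure maps.

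The main obstacle I anticipate is pinning down the precise graded Lie algebra $\widehat V$ so that its bracket reproduces \emph{all five} formulas simultaneously with the correct Koszul signs — in particular getting the sign $(-1)^{|v|}$ in $m_2$ and the shift-induced signs in $m_1$ to come out consistently requires being scrupulous about the convention $v \mapsto v[1]$ and about which side of the semidirect product the action sits on. A secondary check is verifying that $\widehat{\mathfrak a}$ is genuinely a Lie subalgebra (abelian): $[\mathfrak a, \mathfrak a] = 0$ by hypothesis, $[\mathfrak a, V[1]] \subseteq V[1]$ must be arranged to vanish for the \emph{pair} to be abelian — in fact $[\mathfrak a, V[1]]$ lands in $V$, not in $V[1]$, so strictly speaking $\widehat{\mathfrak a}$ is a subspace complement rather than a subalgebra, and the honest statement is that we only need $\widehat{\mathfrak a}$ abelian as a subspace with $[\widehat{\mathfrak a}, \widehat{\mathfrak a}] \subseteq \widehat{\mathfrak b}$; I would state this carefully. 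If this semidirect-product packaging turns out to be awkward, the fallback is the direct route: expand the claimed $M \circ M = 0$ on the coalgebra $\mathcal{S}(V[1]\oplus\mathfrak a)$ and check the components degree by degree, using $[\Delta,\Delta]=0$, the graded Jacobi identity in $V$, and the fact that $\mathfrak a$ is abelian — tedious but mechanical.
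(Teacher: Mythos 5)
First, a point of reference: the paper does not prove this statement at all --- Theorem \ref{v2} is imported verbatim from \cite{voronov2} --- so there is no in-paper argument to compare yours against. Judged on its own terms, your main route has a genuine gap: the V-algebra you build does not satisfy the hypotheses of Theorem \ref{v1}, and its derived brackets are provably not the five maps in the statement. In $\widehat V=V\ltimes V[1]$ with $V[1]$ an abelian ideal carrying the shifted adjoint action, one has $[\mathfrak a,V[1]]\subseteq V[1]$, which is nonzero in general, so $\widehat{\mathfrak a}=\mathfrak a\oplus V[1]$ is neither abelian nor even a subalgebra and Theorem \ref{v1} does not apply. Your proposed weakening ``$[\widehat{\mathfrak a},\widehat{\mathfrak a}]\subseteq\widehat{\mathfrak b}$'' is not the hypothesis of Theorem \ref{v1} (abelianness is what makes the derived brackets symmetric and the $L_\infty$ identities close), and it is moreover false for your bracket, since $[\mathfrak a,V[1]]$ lands back inside $V[1]\subseteq\widehat{\mathfrak a}$; your parenthetical claim that it lands in $V$ contradicts your own definition of the semidirect product.

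Independently of that, the inner derivation $[\Delta,\cdot]$ on $V\ltimes V[1]$ cannot produce three of the five formulas. Since $\Delta\in V$ and $V[1]$ is an ideal, $[\Delta,v[1]]=\pm[\Delta,v][1]$ lies in $V[1]$, so $\widehat P[\Delta,v[1]]$ has no $\mathfrak a$-component and the term $P(v)$ in $m_1(v[1],a)$ never appears; $[[\Delta,v[1]],w[1]]\in[V[1],V[1]]=0$, so your construction yields $m_2(v[1],w[1])=0$ rather than $(-1)^{|v|}[v,w][1]$ (the assertion that ``$[\Delta,v[1]]$ has a component back in $V$'' is false for the bracket you defined); and the iterated brackets $[\cdots[[\Delta,v[1]],a_1],\ldots]$ never leave $V[1]$, so they cannot equal $P[\cdots[[v,a_1],a_2],\cdots,a_{k-1}]\in\mathfrak a$. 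The missing ingredient is the canonical odd \emph{outer} derivation $\delta$ of $V\ltimes V[1]$ with $\delta(v[1])=v$ and $\delta|_V=0$: it is $\delta$, not $\mathrm{ad}_\Delta$, that de-suspends and generates $P(v)$, $(-1)^{|v|}[v,w][1]$ and the mixed brackets. This is precisely why the remark following Theorem \ref{v1} stresses that Voronov's construction holds for an arbitrary square-zero derivation preserving $Ker(P)$, and why the cited source is titled ``Higher Derived Brackets for \emph{Arbitrary} Derivations'': the relevant generator is $D=\delta+[\Delta,\cdot]$, which is not inner. Even after that correction the mixed brackets acquire extra $V[1]$-components of the form $[\cdots[[\Delta,v],a_1],\ldots][1]$ absent from the stated formulas, so the packaging is more delicate than one application of Theorem \ref{v1} to a single semidirect product; your fallback of expanding $M\circ M=0$ on $\mathcal S(V[1]\oplus\mathfrak a)$ directly is the honest route, but it is only gestured at, not carried out.
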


\par We use $\mathfrak{a}_{\Delta}^P$ and $(V[1]\oplus\mathfrak{a})_{\Delta}^{P}$ to denote the $L_\infty[1]$-algebras constructed in the previous theorems.

\subsection{Maurer-Cartan elements and twisted $L_\infty$-algebras}

\par To describe deformations by $L_\infty$-algebras, the idea is to let deformations correspond to Maurer-Cartan elements of the $L_\infty$-algebras.

\begin{definition}
A formal Maurer-Cartan element in an $L_\infty[1]$-algebra $(V,\{m_k\})$ is an element $v$ of degree $0$ satisfying the Maurer-Cartan equation
\begin{eqnarray}
	\sum_{k=0}^{\infty}\frac{1}{k!}m_k(v,\cdots,v)=0.
\end{eqnarray}
\end{definition}
\par Here we encounter the convergence problem. If the structure maps $m_k=0$ for $k$ large, the problem of convergence can be avoided. Otherwise we need make the infinite sum well-defined. In our paper we will use the analyticity condition to ensure the convergence.

\par It turns out the $L_\infty$-algebra structure can be twisted by its Maurer-Cartan elements (\cite{getzler}). Y. Fr$\acute{e}$gier and M. Zambon also show in \cite{zambon} how this works under T.Voronov's construction. 

\par To describe this, we need the following notation. Given a V-algebra $(V,[\cdot,\cdot],\mathfrak{a},P)$ and $\phi\in \mathfrak{a}_0$, we can define a linear map $P_{\phi}=P\circ e^{[\cdot, \phi]}:V\to \mathfrak{a}$ by 
\begin{eqnarray}
\label{exponents}
	v\mapsto \sum_{n=0}^{\infty}\frac{[\cdots[v,\overbrace{\phi],\cdots,\phi]}^{n \text{ times}}}{n!}.
\end{eqnarray}

Given $\Delta\in Ker(P)_1$ s.t. $[\Delta,\Delta]=0$ (i.e. $\Delta$ is a Maurer-Cartan element of the V-algebra), then $\phi\in \mathfrak{a}_0$ is a Maurer-Cartan element of $\mathfrak{a}_{\Delta}^P$ if and only if $P_{\phi}(\Delta)=0$. Here the convergence is also a problem. To make the map $P_{\phi}$ meaningful, as this moment, we require $\phi\in \mathfrak{a}_0$  being nilpotent, i.e. $(ad_\phi)^n=0$ when $n$ is large.

\begin{theorem}[\cite{zambon}]
\label{zam1}
Suppose $(V,[\cdot,\cdot],\mathfrak{a},P)$ is a V-algebra and $\Delta\in Ker(P)_1$ satisfies $[\Delta,\Delta]=0$. If $\phi$ is a Maurer-Cartan element of $\mathfrak{a}_{\Delta}^P$ which is nilpotent, then 
\begin{itemize}
\item $(V,[\cdot,\cdot],\mathfrak{a},P_{\phi})$ forms a V-algebra and $\mathfrak{a}_{\Delta}^{P_{\phi}}$ is an $L_\infty[1]$-algebra,
\[
	\tilde{\phi} \text{ is a Maurer-Cartan element of } \mathfrak{a}_{\Delta}^{P_{\phi}}\Leftrightarrow \phi+\tilde{\phi} \text{ is a Maurer-Cartan element of } \mathfrak{a}_{\Delta}^P;
\]
\item for any $\tilde{\Delta}\in Ker(P)_1$ and $\tilde{\phi}\in \mathfrak{a}_0$
\[
	(\tilde{\Delta},\tilde{\phi}) \text{ is a MC-element of } (V[1]\oplus\mathfrak{a})_{\Delta}^{P_{\phi}}\Leftrightarrow 
	\left\{
		\begin{array}{l}
			[\Delta+\tilde{\Delta},\Delta+\tilde{\Delta}]=0\\
			\phi+\tilde{\phi} \text{ is a MC element of } \mathfrak{a}_{\Delta+\tilde{\Delta}}^P
		\end{array}
	\right..
\]
\end{itemize}
\end{theorem}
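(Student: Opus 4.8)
The plan is to realize the passage from $P$ to $P_\phi$ as conjugation by a Lie algebra automorphism, which reduces the twisted statements to untwisted ones. Write $\Phi:=e^{[\cdot,\phi]}$ for the operator in (\ref{exponents}). Since $\phi\in\mathfrak a_0$ is nilpotent and $[\cdot,\phi]$ is a degree-$0$ derivation of $(V,[\cdot,\cdot])$ (graded Jacobi), the series terminates and $\Phi$ is a degree-$0$ automorphism of the graded Lie algebra. Because $\mathfrak a$ is abelian, $[a,\phi]=0$ for $a\in\mathfrak a$, so $\Phi$ fixes $\mathfrak a$ pointwise; hence $P_\phi=P\circ\Phi$ restricts to the identity on $\mathfrak a$, is therefore idempotent with image $\mathfrak a$, and $\ker P_\phi=\Phi^{-1}(\ker P)=\Phi^{-1}(\mathfrak b)$ is a Lie subalgebra. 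This already gives that $(V,[\cdot,\cdot],\mathfrak a,P_\phi)$ is a V-algebra. I would also record once and for all the two elementary identities that drive the bookkeeping: for $\psi\in\mathfrak a_0$ the derivations $[\cdot,\phi]$ and $[\cdot,\psi]$ commute (again because $\mathfrak a$ is abelian), whence $e^{[\cdot,\phi]}\circ e^{[\cdot,\psi]}=e^{[\cdot,\phi+\psi]}$, and consequently $(P_\phi)_\psi=P_\phi\circ e^{[\cdot,\psi]}=P_{\phi+\psi}$.

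Next I would observe that the hypothesis ``$\phi$ is a Maurer--Cartan element of $\mathfrak a_\Delta^P$'', i.e.\ $P_\phi(\Delta)=0$, says exactly that $\Delta\in\ker(P_\phi)_1$, equivalently that $\Delta':=\Phi(\Delta)$ lies in $\ker(P)_1$; together with $[\Delta',\Delta']=\Phi[\Delta,\Delta]=0$ this makes $\Delta'$ a Maurer--Cartan element of the original V-algebra. Since $\ker P_\phi$ is a subalgebra containing $\Delta$, Theorems \ref{v1} and \ref{v2} apply verbatim to $(V,[\cdot,\cdot],\mathfrak a,P_\phi,\Delta)$ and produce the flat $L_\infty[1]$-algebras $\mathfrak a_\Delta^{P_\phi}$ and $(V[1]\oplus\mathfrak a)_\Delta^{P_\phi}$. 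Comparing structure maps, and using $\Phi|_{\mathfrak a}=\mathrm{id}$ together with the automorphism property, gives $\mathfrak a_\Delta^{P_\phi}=\mathfrak a_{\Delta'}^P$ as $L_\infty[1]$-algebras, and shows that $(\Phi,\mathrm{id})\colon(V[1]\oplus\mathfrak a)_\Delta^{P_\phi}\to(V[1]\oplus\mathfrak a)_{\Delta'}^P$ intertwines all the structure maps of Theorem \ref{v2}, hence is a (strict) isomorphism of $L_\infty[1]$-algebras; in particular it preserves Maurer--Cartan elements.

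The two bullets then follow by unwinding. For the first, $\tilde\phi$ is a Maurer--Cartan element of $\mathfrak a_\Delta^{P_\phi}=\mathfrak a_{\Delta'}^P$ iff $P_{\tilde\phi}(\Delta')=0$ (the criterion recalled before the theorem, legitimate since $\Delta'$ is a Maurer--Cartan element of the V-algebra), and $P_{\tilde\phi}(\Delta')=P\,e^{[\cdot,\tilde\phi]}e^{[\cdot,\phi]}\Delta=P_{\phi+\tilde\phi}(\Delta)$, which vanishes iff $\phi+\tilde\phi$ is a Maurer--Cartan element of $\mathfrak a_\Delta^P$. For the second, $(\tilde\Delta,\tilde\phi)$ is a Maurer--Cartan element of $(V[1]\oplus\mathfrak a)_\Delta^{P_\phi}$ iff $(\Phi\tilde\Delta,\tilde\phi)$ is one of $(V[1]\oplus\mathfrak a)_{\Delta'}^P$, so it suffices to compute the Maurer--Cartan equation of the latter from the explicit maps of Theorem \ref{v2} and split it into its $V[1]$- and $\mathfrak a$-components. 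The $V[1]$-component is $[\Delta',\Phi\tilde\Delta]+\tfrac{1}{2}[\Phi\tilde\Delta,\Phi\tilde\Delta]=0$, equivalent (since $[\Delta',\Delta']=0$) to $[\Delta'+\Phi\tilde\Delta,\Delta'+\Phi\tilde\Delta]=0$; the $\mathfrak a$-component, after collecting terms according to how many arguments land in the $V[1]$-slot and resumming the exponential series, equals $P_{\tilde\phi}(\Delta'+\Phi\tilde\Delta)-P(\Delta')=P_{\tilde\phi}(\Delta'+\Phi\tilde\Delta)$. Since $\Delta'+\Phi\tilde\Delta=\Phi(\Delta+\tilde\Delta)$, the first condition becomes $\Phi[\Delta+\tilde\Delta,\Delta+\tilde\Delta]=0$, i.e.\ $[\Delta+\tilde\Delta,\Delta+\tilde\Delta]=0$, and the second becomes $P_{\phi+\tilde\phi}(\Delta+\tilde\Delta)=0$, which --- granted the first --- is precisely the statement that $\phi+\tilde\phi$ is a Maurer--Cartan element of $\mathfrak a_{\Delta+\tilde\Delta}^P$.

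I expect the main obstacle to be that last computation: verifying that the Maurer--Cartan series for $(V[1]\oplus\mathfrak a)_{\Delta'}^P$ reorganizes into the clean closed form above. Concretely one must track the binomial multiplicities with which $\tilde\Delta$-type and $\tilde\phi$-type entries occupy the $k$ slots of $m_k$, notice that the $m_k$ with a single $V[1]$-entry carry an extra factor $k$ that turns $\tfrac{1}{k!}$ into $\tfrac{1}{(k-1)!}$, and recognize the two surviving series as $P_{\tilde\phi}$ applied to $\Phi\tilde\Delta$ and to $\Delta'$; the grading is harmless since everything in play sits in degree $0$ of $V[1]\oplus\mathfrak a$. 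A secondary, purely routine, point is checking that $(\Phi,\mathrm{id})$ respects each structure map of Theorem \ref{v2}, and a caveat worth flagging is that, exactly as for $\phi$, the manipulations with $e^{[\cdot,\tilde\phi]}$ require $\tilde\phi$ to be nilpotent (or an analogous analyticity hypothesis) for the series to converge.
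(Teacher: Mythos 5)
The paper does not prove this theorem at all --- it is imported verbatim from Fr\'egier--Zambon \cite{zambon} (with Remark \ref{convergence} relaxing the nilpotency hypothesis under analyticity), so there is no in-paper proof to compare against. Your argument is, as far as I can check, correct and complete, and it follows what is essentially the standard route: since $\phi$ has degree $0$ and $\mathfrak a$ is abelian, $\Phi=e^{[\cdot,\phi]}$ is a Lie algebra automorphism fixing $\mathfrak a$ pointwise, so $V=\mathfrak a\oplus\Phi^{-1}(\mathfrak b)$ with $P_\phi$ the associated projection, and the MC hypothesis $P_\phi(\Delta)=0$ is exactly what places $\Delta$ in $\ker(P_\phi)_1$ so that Theorems \ref{v1} and \ref{v2} apply; the strict isomorphisms $\mathfrak a_\Delta^{P_\phi}\cong\mathfrak a_{\Delta'}^P$ and $(\Phi,\mathrm{id})$ then reduce both bullets to the identity $e^{[\cdot,\tilde\phi]}\circ e^{[\cdot,\phi]}=e^{[\cdot,\phi+\tilde\phi]}$ and the resummation $\sum_k\tfrac1{k!}\bigl(k\,m_k(w[1],\tilde\phi,\dots,\tilde\phi)+m_k(\tilde\phi,\dots,\tilde\phi)\bigr)=P_{\tilde\phi}(w+\Delta')$, which you carry out correctly (the binomial/factorial bookkeeping and the sign $(-1)^{|v|}$ in $m_2$ both check out, and all elements involved sit in degree $0$ of $V[1]\oplus\mathfrak a$ so Koszul signs are trivial). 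Your closing caveat is the right one and is exactly what the paper's Remark \ref{convergence} is for: the series defining $P_{\tilde\phi}$ and the Maurer--Cartan sums for a general $\tilde\phi\in\mathfrak a_0$ need either nilpotency or an analyticity hypothesis to converge; this is a hypothesis on the setting rather than a gap in your argument.
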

\begin{remark}
\label{convergence}
In this theorem, the nilpotent condition can be removed if the convergence of the Maurer-Cartan equation is guaranteed in advance. Specially in our paper, we will use the analyticity condition to ensure the convergence and in this case the theorem still holds without requiring $\phi$ nilpotent.
\end{remark}
\par We will use this construction to control the simultaneous deformation of Lie algebroids and Lie subalgebroids in Section \ref{sec5}.

\section{Deformations of Lie algebroids}
\label{sec3}
\par In this section we come to the problem of deforming a Lie algebroid. We will construct a differential graded Lie algebra to control the deformation. Given a Lie algebroid $(A,[\cdot,\cdot],\rho)$ over $M$, consider the corresponding supermanifold $A[1]$. The set of smooth functions on $A[1]$,
\[
	\mathcal{C}^\infty(A[1])=\Gamma(\sum_{n=0}^{\infty}\wedge^n(A[1])^*)
\]
is a graded commutative associative algebra. Here $(A[1])^*$ is a graded vector bundle in which the degree $1$ component is $A^*$ and all other components are trivial. Denote the set of all super vector fields over $A[1]$ by $\mathfrak{X}(A[1])$. $X\in \mathfrak{X}(A[1])$ can be treated as a derivation of smooth functions, i.e.
\[
	X: \mathcal{C}^\infty(A[1])\to \mathcal{C}^\infty(A[1]),
\] 
satisfying $X(f\cdot g)=(X(f))\cdot g+(-1)^{|X|\cdot|f|}f\cdot X(g)$. Here $X$ is homogeneous of degree $|X|$ and $f$ is homogeneous of degree $|f|$. Similar to the nonsuper case, $\mathfrak{X}(A[1])$ is a graded Lie algebra with respect to the bracket
\[
	[[X,Y]]=X\circ Y-(-1)^{|X|\cdot |Y|}Y\circ X.
\]

\par Given the Lie algebroid $(A,[\cdot,\cdot],\rho])$, we can define a super vector field $X_Q$ by the following: for any $\xi\in \mathcal{C}^{\infty}(A[1])$ of degree $n$, $X_Q(\xi)\in\mathcal{C}^{\infty}(A[1])$ is of degree $n+1$ and determined by
\begin{eqnarray}
\label{X_Qdef}
	X_Q(\xi)(v_0,\cdots,v_n)=\sum_{i=0}^{n} (-1)^i \rho(v_i)(\xi(v_0,\cdots,\hat{v_i},\cdots,v_n))\nonumber\\
		\qquad +\sum_{i<j}(-1)^{i+j}\xi([v_i,v_j],v_0,\cdots,\hat{v_i},\cdots,\hat{v_j},\cdots,v_n),
\end{eqnarray}
for any $v_0,\cdots,v_n\in \Gamma(A)$. It is easy to see $X_Q$ is of degree $1$. It can also be verified that $X_Q$ defined in this way satisfies $[[X_Q,X_Q]]=0$.
\begin{remark}
In literature, a super vector field of odd degree satisfying $[[X_Q,X_Q]]=0$ is called a homological vector field. For more details on homological vector fields and supermanifolds, see \cite{schatz} and \cite{vaintrob}.
\end{remark}

\par Locally we can write out the vector field $X_Q$ explicitly. Suppose $(\xi_1,\cdots,\xi_n)$ is a local frame of $A$ over some open chart $(U,x^1,\cdots,x^m)$ of $M$. The structure constants of the Lie algebroid is determined by
\begin{eqnarray}
	[\xi_i,\xi_j]=C_{ij}^k(x) \xi_k ,\\
	\rho(\xi)=b_i^t(x)\frac{\partial}{\partial x^t}.
\end{eqnarray}
Treat the dual coframe $(\xi^1,\cdots,\xi^n)$ as the coordinates on the fibers of $A|_U$, then
\begin{eqnarray}
\label{X_Qlocal}
	X_Q|_U = \frac{1}{2}C_{ij}^k(x) \xi^i\wedge \xi^j \wedge \frac{\partial}{\partial \xi^k}-b_i^t(x)\xi^i\frac{\partial}{\partial x^t}.
\end{eqnarray}
Here $\xi^i$ is of degree $1$, $\displaystyle \frac{\partial}{\partial \xi^i}$ is of degree $-1$, $\displaystyle \frac{\partial}{\partial x^t}$ is of degree $0$ and $C_{ij}^k(x), b_i^t(x)\in \mathcal{C}^{\infty}(M)$ are both of degree $0$.

\par It is well known that the Lie algebroid structure on $A$ and the $Q$-vector field $X_Q$ are equivalent.
\begin{proposition}
Given a vector bundle $A\to M$, there is a one-one correspondence between the Lie algebroid structures on $A$ and the homological vector fields of degree $1$ on $A[1]$. 
\end{proposition}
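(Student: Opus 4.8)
The plan is to exhibit explicit maps in both directions between the set of Lie algebroid structures on $A$ and the set of homological vector fields of degree $1$ on $A[1]$, and check they are mutually inverse. In one direction the map is already essentially written down: from $([\cdot,\cdot],\rho)$ we produce $X_Q$ via formula (\ref{X_Qdef}), and it is remarked that this $X_Q$ has degree $1$ and satisfies $[[X_Q,X_Q]]=0$; so this direction only needs us to verify those two claims, which is a direct (if slightly tedious) computation, most transparently carried out in the local coordinates of (\ref{X_Qlocal}).

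For the reverse direction, I would start with an arbitrary homogeneous degree-$1$ vector field $X$ on $A[1]$ and use it to \emph{define} candidate structure maps. A degree-$1$ derivation of $\mathcal{C}^\infty(A[1])$ sends $\Gamma(A^*)$ (the degree-$1$ functions) into $\Gamma(\wedge^2 A^*)$ and sends $\mathcal{C}^\infty(M)$ (the degree-$0$ functions) into $\Gamma(A^*)$; dualizing these two components gives a bracket $[\cdot,\cdot]:\Gamma(A)\times\Gamma(A)\to\Gamma(A)$ and an anchor $\rho:\Gamma(A)\to\mathfrak{X}(M)$, essentially by reading (\ref{X_Qdef}) in degrees $1$ and $0$. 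I would then show: (i) because $X$ is a derivation (not merely linear), the candidate bracket is a genuine bilinear operation on sections satisfying the Leibniz rule $[v,fw]=f[v,w]+(\rho(v)f)w$, and $\rho$ is $\mathcal{C}^\infty(M)$-linear, i.e. bundle-map-valued; (ii) the condition $X^2=\tfrac12[[X,X]]=0$, applied to degree-$0$ functions, forces $\rho$ to be a Lie algebra homomorphism, and applied to degree-$1$ functions, forces the Jacobi identity for $[\cdot,\cdot]$. It is cleanest to organize (ii) by noting that $[[X,X]]$ is again a derivation, hence determined by its action on generators $\mathcal{C}^\infty(M)$ and $\Gamma(A^*)$, so vanishing on these generators is equivalent to vanishing identically.

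Finally I would check the two constructions are inverse to each other: starting from $([\cdot,\cdot],\rho)$, building $X_Q$, and extracting a bracket and anchor returns exactly $([\cdot,\cdot],\rho)$ — immediate from comparing degree components of (\ref{X_Qdef}) — and conversely applying the construction to $X$ and then re-forming the vector field recovers $X$, because both are degree-$1$ derivations agreeing on the generating functions. I expect the main obstacle to be the careful bookkeeping in step (ii): translating $X^2=0$ into the Jacobi identity and the homomorphism property requires keeping track of Koszul signs and of how the derivation property distributes $X$ over products of functions of mixed degree. The local formula (\ref{X_Qlocal}) is a useful sanity check — one can verify there that $[[X_Q,X_Q]]=0$ unpacks into the standard structure equations $C_{ij}^l C_{lk}^m + \text{cyclic} = b_i^t\partial_t C_{jk}^m + \text{cyclic}$ and $b_i^t\partial_t b_j^s - b_j^t\partial_t b_i^s = C_{ij}^k b_k^s$, i.e. precisely Jacobi plus the anchor being a homomorphism — but the coordinate-free argument via derivations is the one I would present, since it makes the equivalence structurally obvious rather than computational.
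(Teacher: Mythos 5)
Your proposal is correct; note that the paper states this proposition without proof, citing it as well known (it is essentially Vaintrob's observation), so there is no argument in the text to compare against. The route you take --- reading off $\rho$ and $[\cdot,\cdot]$ from the action of a degree-$1$ derivation on the generators $\mathcal{C}^\infty(M)$ and $\Gamma(A^*)$, using the derivation property for tensoriality/Leibniz and $[[X,X]]=0$ on generators for the anchor-homomorphism and Jacobi identities --- is the standard one and would serve as a complete proof here.
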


\par From the fact $[[X_Q,X_Q]]=0$, we also get
\begin{proposition}
\label{dgla}
$(\mathfrak{X}(A[1]), [[\cdot,\cdot]], [[X_Q,\cdot]])$ is a differential graded Lie algebra with differential $d= [[X_Q,\cdot]]$.
\end{proposition}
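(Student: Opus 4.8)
The plan is to recognize that this statement is an instance of a purely formal fact: for any graded Lie algebra $(L,[\cdot,\cdot])$ and any homogeneous element $\Delta\in L_1$ with $[\Delta,\Delta]=0$, the triple $(L,[\cdot,\cdot],\mathrm{ad}_\Delta)$ with $\mathrm{ad}_\Delta=[\Delta,\cdot]$ is a DGLA. I would apply this with $L=\mathfrak{X}(A[1])$, $[\cdot,\cdot]=[[\cdot,\cdot]]$, and $\Delta=X_Q$. The two inputs needed — that $\mathfrak{X}(A[1])$ is a graded Lie algebra, and that $X_Q$ is homogeneous of degree $1$ with $[[X_Q,X_Q]]=0$ — have both been established above.

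First I would note that graded antisymmetry and the Jacobi identity for $[[\cdot,\cdot]]$ hold simply because $\mathfrak{X}(A[1])$ is a graded Lie algebra, so only the Leibniz rule and the differential condition for $d=[[X_Q,\cdot]]$ remain to be checked. The degree statement $|da|=|a|+1$ is immediate from $|X_Q|=1$. For the Leibniz rule I would invoke the graded Jacobi identity $[a,[b,c]]=[[a,b],c]+(-1)^{|a||b|}[b,[a,c]]$ with $a=X_Q$ and $b,c$ arbitrary homogeneous vector fields on $A[1]$; since $|X_Q|=1$ this reads $[[X_Q,[[b,c]]]]=[[[[X_Q,b]],c]]+(-1)^{|b|}[[b,[[X_Q,c]]]]$, which is precisely $d[[b,c]]=[[db,c]]+(-1)^{|b|}[[b,dc]]$.

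Finally, for $d^2=0$ I would apply the same Jacobi identity with $a=b=X_Q$ and $c$ arbitrary: the term $[[[[X_Q,X_Q]],c]]$ vanishes because $[[X_Q,X_Q]]=0$, while the sign $(-1)^{|X_Q||X_Q|}=-1$ carries the remaining term to the other side, forcing $2\,[[X_Q,[[X_Q,c]]]]=0$, i.e. $d^2(c)=0$ for all $c$. I expect no real obstacle here beyond careful bookkeeping of Koszul signs; the only piece of genuine content, namely $[[X_Q,X_Q]]=0$, is exactly the homological condition equivalent to the Lie algebroid axioms and was recorded just above, so this proposition is essentially a formal consequence.
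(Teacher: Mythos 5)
Your proposal is correct and matches the paper's approach: the paper simply derives the proposition from the fact that $\mathfrak{X}(A[1])$ is a graded Lie algebra and $[[X_Q,X_Q]]=0$, exactly the two inputs you identify. Your write-up merely makes explicit the routine verification (Leibniz from graded Jacobi with $|X_Q|=1$, and $d^2=0$ from $2[[X_Q,[[X_Q,\cdot]]]]=[[[[X_Q,X_Q]],\cdot]]=0$) that the paper leaves implicit.
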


\par The deformation of the Lie algebroid structure over $A$ is equivalent to the deformation of the homological vector field $X_Q$ over $A[1]$. It is clear that $\tilde{X}_Q$ gives a deformation of $X_Q$, i.e. $X_Q+\tilde{X}_Q$ is again a homological vector field of degree $1$, if and only if 
\[
	[[X_Q+\tilde{X}_Q,X_Q+\tilde{X}_Q]]=0.
\]

\par Immediately we get the main theorem of this section:
\begin{theorem}
\label{thm1}
The deformation of the Lie algebroid $(A,[\cdot,\cdot],\rho)$ can be controlled by the DGLA in proposition \ref{dgla}. To be specific, $\tilde{X_Q}\in\mathfrak{X}(A[1])$ of degree $1$ makes $\tilde{X}_Q+X_Q$ a deformation of $X_Q$ if and only if it is a Maurer-Cartan element of the differential graded Lie algebra $(\mathfrak{X}(A[1],[[\cdot,\cdot]],[[X_Q,\cdot]])$, i.e. $\tilde{X}_Q$ satisfies the Maurer-Cartan equation
\[
	[[X_Q,\tilde{X}_Q]]+\frac{1}{2}[[\tilde{X}_Q,\tilde{X}_Q]]=0.
\]
\end{theorem}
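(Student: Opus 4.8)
The plan is to reduce everything to the equivalence, recorded in the Proposition preceding the statement, between Lie algebroid structures on the fixed vector bundle $A\to M$ and degree-$1$ homological vector fields on $A[1]$. A deformation of the Lie algebroid $(A,[\cdot,\cdot],\rho)$ is, by definition, a new Lie algebroid structure on the \emph{same} bundle $A$; under that equivalence this is precisely the datum of a degree-$1$ homological vector field $X'\in\mathfrak{X}(A[1])$, and writing $X'=X_Q+\tilde X_Q$ the deformation parameter is exactly $\tilde X_Q:=X'-X_Q$, which automatically has degree $1$ since both $X'$ and $X_Q$ do. So the task becomes: characterize which degree-$1$ elements $\tilde X_Q\in\mathfrak{X}(A[1])$ have the property that $X_Q+\tilde X_Q$ is homological, i.e. $[[X_Q+\tilde X_Q,\,X_Q+\tilde X_Q]]=0$.

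The computation is then a one-line expansion. By bilinearity of the graded bracket,
\[
 [[X_Q+\tilde X_Q,\,X_Q+\tilde X_Q]]=[[X_Q,X_Q]]+[[X_Q,\tilde X_Q]]+[[\tilde X_Q,X_Q]]+[[\tilde X_Q,\tilde X_Q]].
\]
Since $X_Q$ and $\tilde X_Q$ are both odd (degree $1$), graded antisymmetry of $[[\cdot,\cdot]]$ gives $[[\tilde X_Q,X_Q]]=[[X_Q,\tilde X_Q]]$, so the two middle terms combine to $2[[X_Q,\tilde X_Q]]$. We already know — from the construction of $X_Q$ via \eqref{X_Qdef}, equivalently from Proposition \ref{dgla} — that $[[X_Q,X_Q]]=0$. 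Hence $X_Q+\tilde X_Q$ is homological if and only if
\[
 2[[X_Q,\tilde X_Q]]+[[\tilde X_Q,\tilde X_Q]]=0,
\]
which, after dividing by $2$ and recalling $d=[[X_Q,\cdot]]$, is exactly the Maurer-Cartan equation $d\tilde X_Q+\tfrac12[[\tilde X_Q,\tilde X_Q]]=0$ in the DGLA $(\mathfrak{X}(A[1]),[[\cdot,\cdot]],d)$ of Proposition \ref{dgla}. Because this $L_\infty$-algebra is an honest DGLA (all higher structure maps $m_k$, $k\ge 3$, vanish), the Maurer-Cartan sum is finite and no convergence issue arises here.

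There is essentially no hard step; the only points requiring care are bookkeeping. First, one must confirm that "deformation" genuinely means "new homological vector field of degree $1$", with no extra smallness or formal-parameter hypothesis — this is justified by the cited Proposition on the correspondence between Lie algebroid structures and homological vector fields, together with the fact that $A[1]$ and its function algebra $\mathcal{C}^{\infty}(A[1])$ are unchanged by the deformation. Second, one should double-check the sign in $[[\tilde X_Q,X_Q]]=[[X_Q,\tilde X_Q]]$: with $|X_Q|=|\tilde X_Q|=1$ the Koszul sign is $-(-1)^{1\cdot 1}=+1$, so the bracket of two odd vector fields is symmetric, which is precisely what makes the cross terms add rather than cancel. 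With these two remarks in place the proof is complete.
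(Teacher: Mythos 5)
Your proposal is correct and follows essentially the same route as the paper: the paper likewise reduces a deformation to a new homological vector field $X_Q+\tilde X_Q$, imposes $[[X_Q+\tilde X_Q,X_Q+\tilde X_Q]]=0$, and reads off the Maurer--Cartan equation using $[[X_Q,X_Q]]=0$ and the symmetry of the bracket on odd elements. Your sign check and the remark that the sum is finite (so no convergence issue) are consistent with the paper's treatment.
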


\begin{remark}
\par M. Crainic and I. Moerdijk also find a DGLA to control the deformations of Lie algebroids(\cite{crainic}). It turns out the two DGLA's are isomorphic. Let's recall the construction in \cite{crainic} and show the equivalence.

\par Given the Lie algebroid $(A,[\cdot,\cdot],\rho)$, a derivation $D$ of degree $n$ is an antisymmetric $\mathbb{R}$-multi-linear map $\overbrace{\Gamma(A)\times\cdots\times\Gamma(A)}^{n+1\text{ times}}\to \Gamma(A)$ together with a map $\sigma_D:\overbrace{\Gamma(A)\times\cdots\times\Gamma(A)}^{n \text{ times}}\to \mathfrak{X}(M)$, called the symbol of $D$, such that
\[
	D(v_1,\cdots,v_n,f\cdot v_{n+1})=f\cdot D(v_1,\cdots,v_n)+(\sigma_{D}(v_1,\cdots,v_n)f)\cdot v_{n+1}
\]
for any $v_1,\cdots,v_{n+1}\in \Gamma(A)$ and $f\in \mathcal{C}^{\infty}(M)$.

\par Let $Der_n(A)$ be the set of all derivations of degree $n$, then $Der(A)=\sum_{n=0}^{\infty}Der_n(A)$ is a graded vector space. $Der(A)$ turns out to be a graded Lie algebra with respect to the bracket
\begin{eqnarray}
	\{D,E\}=-((-1)^{p\cdot q}D\circ E-E\circ D)
\end{eqnarray}
for any $D\in Der_p(A)$, $E\in Der_q(A)$.
\par It is easy to see the bracket $[\cdot,\cdot]$ of the Lie algebroid is a derivation of degree $1$ with symbol the anchor $\rho$. Denote this derivation by $D_Q$ and it can be checked $D_Q$ satisfies $\{D_Q,D_Q\}=0$. Thus $(Der(A),\{\cdot,\cdot\}, \{D_Q,\cdot\})$ is a DGLA. Given the local frame $(\xi_1,\cdots,\xi_n)$ and local coordinate $(x^1,\cdots,x^m)$ as before, for any $D=\dfrac{1}{(p+1)!}D_{i_1\cdots i_{p+1}}^a \xi^{i_1}\wedge\cdots\wedge \xi^{i_{p+1}}\otimes\xi_a\in Der_p(A)$ with symbol $\sigma_D=\dfrac{1}{p!}b_{j_1\cdots j_p}^t \xi^{j_1}\wedge\cdots\wedge\xi^{i_p} \wedge\frac{\partial}{\partial x^t}$, the map defined by
\[
	D\mapsto \frac{1}{(p+1)!}D_{i_1\cdots i_{p+1}}^a \xi^{i_1}\wedge\cdots\wedge \xi^{i_{p+1}}\wedge\frac{\partial}{\partial \xi_a} -\frac{1}{p!}b_{j_1\cdots j_p}^t \xi^{j_1}\wedge\cdots\wedge\xi^{i_p} \wedge\frac{\partial}{\partial x^t}
\]
is an isomorphism between the differential graded Lie algebra $(Der(A),\{\cdot,\cdot\},\{D_Q,\cdot\})$ and the differential graded Lie algebra $(\mathfrak{X}(A[1]),[[\cdot,\cdot]],[[X_Q,\cdot]])$ in proposition \ref{dgla}. 
\end{remark}

\begin{remark}
\label{differentialview}
There is another viewpoint of deformations of a Lie algebroid. A Lie algebroid structure ($[\cdot,\cdot]$ and $\rho$) on $A$ is equivalent to a differential $d$ on the graded commutative associative algebra $\mathcal{A}=\Gamma(\oplus_{n=0}^{\infty}\wedge^n A^*)$ (\cite{ping}) in the following way
\begin{eqnarray*}
	d(\xi)(v_0,\cdots,v_n)=\sum_{i=0}^{n} (-1)^i \rho(v_i)(\xi(v_0,\cdots,\hat{v_i},\cdots,v_n))\\
		\qquad +\sum_{i<j}(-1)^{i+j}\xi([v_i,v_j],v_0,\cdots,\hat{v_i},\cdots,\hat{v_j},\cdots,v_n),
\end{eqnarray*}
for any $\xi\in\Gamma(\wedge^n A^*)$ and $v_0,\cdots,v_n\in \Gamma(A)$. Here the differential $d$ is linear map $d:\mathcal{A}\to\mathcal{A}$ of degree $1$ satisfying $d^2=0$. Then a deformation of the Lie algebroid structure on $A$ is equivalent to a deformation of the differential $d$ of $\mathcal{A}$. It is easy to see a linear map $\tilde{d}:\mathcal{A}\to\mathcal{A}$ of degree $1$ deforms $d$ if and only if
\[
	d \circ \tilde{d}+\frac{1}{2}\tilde{d}\circ\tilde{d}=0.
\]
which is a Maurer-Cartan equation of the same form as in Theorem \ref{thm1}.
\end{remark}

\begin{remark}
When the Lie algebroid degenerates to a Lie algebra, our theorem \ref{thm1} recovers the classical result of A. Nijenhuis and R. W. Richardson (\cite{nijenhuis}).
\end{remark}

\section{Deformations of Lie subalgebroids}
\label{sec4}
\par In this section we will use Voronov's higher derived bracket to construct an $L_\infty$-algebra to control deformations of Lie subalgebroids. Suppose $(A,[\cdot,\cdot],\rho)$ is a Lie algebroid over $M$, and $E\subset A$ a Lie subalgebroid over a closed submanifold $S$. In order to define the deformation of the Lie subalgebroid $E$, we need some preparations.
\begin{itemize}
\item Find a complement subbundle $F$ over $S$ such that $A|_S=E\oplus F$.
\item Identify the normal bundle $NS=\dfrac{TM|_S}{TS}$ as a tubular neighborhood of $S$. As we are considering deformations not far from $E\to S$, without loss of generality, we assume $M=NS$.
\item The subbundle $E$ and $F$ can be pulled back along the projection $NS\to S$. Denote their pull-backs by $\tilde{E}$ and $\tilde{F}$ separately, i.e.
\[
	\begin{array}{ll}
	\begin{CD}
	\tilde{E} @>>> E  \\
	@VVV	     @VVV \\
	NS @>>> S 	
	\end{CD}
	\qquad
	\begin{CD}
	\tilde{F} @>>> F  \\
	@VVV	     @VVV \\
	NS @>>> S.
	\end{CD}
	\end{array}
\]
As $A|_S=(\tilde{E}\oplus \tilde{F})|_S$ and $S$ is closed, $\tilde{E}$ and $\tilde{F}$ still form a direct sum isomorphic to $A$ in an open neighborhood of $S$. Shrinking the tubular neighborhood $NS$ if necessary, we assume $A=\tilde{E}\oplus \tilde{F}$ over $M=NS$.
\end{itemize}

A deformation of the Lie subalgebroid $E$ should be a subbundle $E'\subset A$ over some submanifold $S'\subset M$ which is a again a Lie subalgebroid. As $E'$ is not far away from $E$, we only need to consider the case when $S'$ is a smooth section of $NS$ and $E'$ is the graph of some bundle map $\tilde{E}|_{S'}\to \tilde{F}|_{S'}$. In the following, denote the image of some $\sigma\in\Gamma(NS)$ by $S_{\sigma}$. We make the following definition.  

\begin{definition}
A deformation of the Lie subalgebroid $E\subset A$ is a pair $(\sigma,\phi)$ where $\sigma$ is a section of $NS$ and $\phi:\tilde{E}|_{S_{\sigma}}\to \tilde{F}|_{S_{\sigma}}$ is a bundle map such that the graph $gr(\phi)$ is a Lie subalgebroid of $A$, i.e.
\begin{itemize}
\item $\rho(gr(\phi))\subset TS_{\sigma}$,
\item For any $\xi,\eta \in \Gamma(gr(\phi))$ and any extension $\tilde{\xi},\tilde{\eta}\in \Gamma(A)$, the bracket satisfies $[\tilde{\xi},\tilde{\eta}]|_{S_{\sigma}}\in\Gamma(gr(\phi))$ and it does not depend on the choice of the extensions.
\end{itemize}
Any $(\sigma,\phi)\in \Gamma(NS\times (\tilde{E}^*|_{S_\sigma}\otimes \tilde{F}^*|_{S_\sigma}))$ is called a possible deformation.
\end{definition}

\par The following construction of $L_\infty$-algebra is motivated by the idea of strong homotopy Lie algebroid (\cite{schatz1}, \cite{cattaneo}). Denote the bundle projection $E\to S$ by $p$. Since $E$, $F$ and $NS$ are all vector bundles over $S$, we can pull back $F\to S$ and $NS\to S$ along $E\to S$ to get
\[
	\begin{CD}
		p^*F @>>> F\\
		@VVV	@VVV \\
		E @>p>> S
	\end{CD}\qquad
	\begin{CD}
		p^*NS @>>> NS\\
		@VVV	  @VVV\\
		E @>p>> S
	\end{CD}.
\]
Let $p^*F[1]$ be the degree $-1$ component, $p^*NS$ the degree $0$ component, then $p^*F[1]\oplus p^*NS$ becomes a super vector bundle over $E[1]$. Let $\mathfrak{a}=\Gamma(p^*F[1]\oplus p^*NS)$ be the collection of all sections of the super vector bundle.

\par The kernel of the bundle map
\[
	\begin{CD}
		TA @>>> T{\tilde{E}} @>>>  TE\\
		@VVV	@VVV		   @VVV\\
		A  @>>> \tilde{E}    @>>>  E.
	\end{CD}
\]
is the pull-back $(p^*)^3(F\oplus NS)$ of the bundle $p^*(F\oplus NS)\to E$ along $A\to \tilde{E} \to E$ by the diagram
\[
	\begin{CD}
		(p^*)^3(F\oplus NS)   @>>> (p^*)^2(F\oplus NS) @>>> p^*(F\oplus NS)\\	
		@VVV	@VVV	 @VVV \\
		A @>>> \tilde{E} @>>> E,
	\end{CD}
\]
hence we have an injection
\begin{eqnarray}
\label{injection}
	\Gamma(p^*(F\oplus NS))\to \Gamma((p^*)^3(F\oplus NS))\hookrightarrow \Gamma(TA). 
\end{eqnarray}
If letting $V=\mathfrak{X}(A[1])$ be the set of all super vector fields on $A[1]$, the map (\ref{injection}) induces an injective linear map $I:\mathfrak{a}\to V$.

\par On the other hand, for any vector bundle $H$ over $Q$, we have $TH|_Q=TQ\oplus H$ by the splitting short exact sequence
\[
\begin{CD}
	0 @>>> TQ @>i>> TH|_Q @>j>> H @>>> 0
\end{CD}
\]
where $i$ is induced from the inclusion of $Q$ as zero section into $H$ and $j$ is from the natural isomorphism $H=\frac{TH|_Q}{TQ}$. The sequence is splitting since there is a natural projection $TH|_Q\to TQ$ induced by the bundle projection $H\to Q$. Applying to the vector bundle $A\to E$, immediately we get
\[
	TA|_E=TE\oplus p^*(F\oplus NS),
\]
so there is also a projection $P:V\to \mathfrak{a}$ induced by
\begin{eqnarray}
	\Gamma(TA)\to \Gamma(TA|_E) \to \Gamma(p^*(F\oplus NS)).
\end{eqnarray}
It can be checked that $P\circ I=Id_\mathfrak{a}$, and hence $\mathfrak{a}$ can be treated as a direct summand of $V$. Moreover $I(\mathfrak{a})$ is an abelian subalgebra of $V$, and $ker(P)$ is closed under the bracket $[[\cdot,\cdot]]$, so the following result holds for our construction.

\begin{proposition}
\label{propv}
$(\mathfrak{X}(A[1]),[[\cdot,\cdot]],\mathfrak{a},P)$ froms a V-algebra.
\end{proposition}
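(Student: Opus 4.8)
The plan is to verify the four defining features of a V-algebra for the data $(\mathfrak{X}(A[1]),[[\cdot,\cdot]],\mathfrak{a},P)$, namely: (1) $(\mathfrak{X}(A[1]),[[\cdot,\cdot]])$ is a graded Lie algebra; (2) $I(\mathfrak{a})\subset V$ is an abelian Lie subalgebra; (3) $\ker(P)\subset V$ is a Lie subalgebra; and (4) $V=I(\mathfrak{a})\oplus\ker(P)$ as graded vector spaces. Item (1) is already recalled in the text: $[[X,Y]]=X\circ Y-(-1)^{|X||Y|}Y\circ X$ is the usual graded commutator of derivations on $\mathcal{C}^\infty(A[1])$, which is automatically a graded Lie bracket. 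Item (4) follows at once from $P\circ I=\mathrm{Id}_{\mathfrak{a}}$: this identity forces $I$ injective and $P$ surjective, and the decomposition $V=I(\mathfrak{a})\oplus\ker(P)$ is then the standard splitting associated with an idempotent $I\circ P$; so the content is concentrated in items (2) and (3), plus a careful justification of $P\circ I=\mathrm{Id}_{\mathfrak{a}}$ itself.

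First I would pin down the three maps concretely. For $P\circ I=\mathrm{Id}_{\mathfrak{a}}$: the injection $I$ sends a section of $p^*(F\oplus NS)\to E$ into $\Gamma(TA)$ by regarding it, via the kernel identification $(p^*)^3(F\oplus NS)=\ker(TA\to TE)$, as a vertical (fiberwise-constant in a suitable sense) vector field along $A$; the projection $P$ first restricts a vector field on $A$ to $A|_E$ and then uses the canonical splitting $TA|_E=TE\oplus p^*(F\oplus NS)$ coming from the zero-section inclusion $E\hookrightarrow A$ and the bundle projection $A\to E$. Chasing a section through $I$ and then through $P$ returns the same section because the image of $I$ lands in the $p^*(F\oplus NS)$-summand of $TA|_E$ and is unchanged by restriction to $E$; this is a direct local computation in adapted coordinates (fiber coordinates on $E$, on $F$, and base/normal coordinates), and I would present it in exactly those coordinates, where $I$ attaches $\partial/\partial(\text{$F$- and $NS$-fiber coords})$ with coefficients pulled back from $E$, and $P$ reads off precisely those components.

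Next, for item (2), that $I(\mathfrak{a})$ is abelian: in the adapted coordinates above, $I(\mathfrak{a})$ consists of vector fields of the form $\sum f^\alpha(e)\,\partial/\partial y^\alpha$ where $y^\alpha$ are the $\tilde F[1]$- and $\tilde{NS}$-fiber coordinates, $e$ denotes the $E[1]$-coordinates, and the coefficients $f^\alpha$ depend only on $e$ (being pulled back along $A\to E$). The graded commutator of two such vector fields vanishes since neither one differentiates the coordinates the other involves — $\partial f^\beta/\partial y^\alpha=0$ because $f^\beta=f^\beta(e)$. I would spell this out once and note the signs work out trivially. For item (3), that $\ker(P)$ is closed under $[[\cdot,\cdot]]$: $\ker(P)$ consists of vector fields whose restriction to $A|_E$ lies in the $TE$-summand, equivalently (in coordinates) vector fields $X$ with the property that, along $E$, $X$ has no $\partial/\partial y^\alpha$ component and its $\partial/\partial e^i$ components are the pullbacks of vector fields on $E$ — more invariantly, $X$ is $(A\to E)$-projectable and the projection, restricted to $E$, is tangent to $E$. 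Projectability is preserved by the bracket, and tangency along $E$ of the projected field is preserved because the bracket of two vector fields tangent to a submanifold is again tangent to it; restricting this statement to $A|_E$ and then to the base gives $[[\ker(P),\ker(P)]]\subset\ker(P)$.

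The step I expect to be the main obstacle is giving a clean, coordinate-free description of $I$ and of $\ker(P)$ that makes items (2) and (3) transparent rather than a morass of triple-pullback index-chasing; once the right local model (adapted coordinates $e,y^\alpha$ with $I$-image spanned by $f(e)\,\partial/\partial y^\alpha$ and $P$ reading off the $\partial/\partial y$-components along $E$) is fixed, the verifications become the two one-line observations above. I would therefore devote most of the written proof to setting up that local model carefully, and only a sentence each to the commutator computations.
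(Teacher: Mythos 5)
Your proposal is correct and follows essentially the same route as the paper, which likewise reduces the claim to the three checks $P\circ I=\mathrm{Id}_{\mathfrak{a}}$, $I(\mathfrak{a})$ abelian, and $\ker(P)$ closed under $[[\cdot,\cdot]]$, leaving them as routine verifications; your adapted local model supplies exactly the details the paper omits. One small caveat: $\ker(P)$ is simply the set of vector fields tangent to the submanifold $E[1]$ (no projectability along $A\to E$ is needed), but since you also give this characterization and close with the fact that tangency to a submanifold is preserved by brackets, the argument stands.
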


\par In fact the bundle $p^*(F\oplus NS)\to E$ is isomorphic to the normal bundle $NE$ of $E$ in $A$ if considering $E$ as a submanifold of $A$. It is natural to identify the set of all possible deformations of the Lie subalgebroid $E$ with $\mathfrak{a}_0$. Denote the projection $\tilde{E}\to E$ by $p_E$ and $\tilde{F}\to F$ by $p_F$. Given any possible deformation $(\sigma,\phi)\in \Gamma(NS\oplus(\tilde{E}^*|_{S_\sigma}\otimes \tilde{F}|_{S_\sigma}))$, since $p_E:\tilde{E}|_{S_\sigma}\to E$ and $p_F: \tilde{F}|_{S_\sigma}\to F$ are isomorphisms, $\phi$ determines a map $\tilde{\phi}:E\to F$ by $p_F|_{\tilde{F}|_{S_\sigma}}\circ\phi\circ p_E^{-1}|_{\tilde{E}|_{S_\sigma}}$. Now $(\sigma,\tilde{\phi})$ is in $\Gamma(NS\oplus (E^*\otimes F))$ and thus can be pulled back to get a section $X_{\sigma,\phi}$ of $p^*NS\oplus p^*F[1]$ over $E[1]$. $X_{\sigma,\phi}$ is of degree $0$ and it is uniquely determined by $(\sigma,\phi)$.

\par Let us write out the explicit formulas for the correspondence $(\sigma,\phi)\mapsto X_{\sigma,\phi}$ in local coordinates. Let $\{x^1,\cdots,x^p\}$ be the local coordinates of an open subset $U\subset S$, $\{y_1,\cdots,y_q\}$ a local frame of $NS|_U$, $\{\xi_1,\cdots,\xi_l\}$ a local frame of $E|_U$ and $\{\xi_{l+1},\cdots,\xi_n\}$ a local frame of $F|_U$. Automatically the dual coframe $(y^1,\cdots,y^q)$, $\{\xi^1,\cdots,\xi^l\}$ and $\{\xi^{l+1},\cdots,\xi^n\}$ are coordinates of fibers of $NS|_U$, $E|_U$ and $F|_U$. Suppose the deformation $(\sigma,\phi)$ is given by
\begin{eqnarray}
	\sigma:S\to NS \quad (x^1,\cdots,x^p)\mapsto (\sigma^1(x),\cdots,\sigma^q(x))\\
	\phi: \tilde{E}|_{S_\sigma}\to \tilde{F}|_{S_\sigma} \quad (\sigma(x),\xi^i) \mapsto \sum_{j=l+1}^{n}(\sigma(x),\phi_i^j(x) \xi^j).
\end{eqnarray}
Then the section $X_{\phi,\sigma}\in \mathfrak{a}$ determined by $(\sigma,\phi)$ equals
\begin{eqnarray}
	X_{\sigma,\phi} = \sum_{i=1}^{l}\sum_{j=l+1}^n\phi_i^j(x)\xi^i\wedge \xi_j + \sigma^k(x)y_k,
\end{eqnarray}
which is of degree $0$. As $\tilde{E}$ and $\tilde{F}$ are pull-backs of $E$ and $F$, the sections $\xi_i$'s can be pulled back to independent sections $\tilde{\xi}_i$'s of $\tilde{E}$ and $\tilde{F}$ over $\pi_{NS}^{-1}(U)$ where $\pi_{NS}:NS\to S$ is the projection. Similar for the coframes. The image $I(X)$ of $X_{\sigma,\phi}$ under the map $I: \mathfrak{a}\to V$ looks like
\begin{eqnarray}
	I(X_{\sigma,\phi}) = \sum_{i=1}^{l}\sum_{j=l+1}^n\phi_i^j(x)\tilde{\xi}^i\wedge\frac{\partial}{\partial \tilde{\xi}^j} + \sigma^k(x)\frac{\partial}{\partial y^k}.
\end{eqnarray}

\par From the local expressions, it is clear that the map $(\sigma,\phi)\mapsto X_{\sigma,\phi}$ is a 1-1 correspondence between the possible deformations of the Lie subalgebroid $E$ and the elements in $\mathfrak{a}_0$. We will identify $\mathfrak{a}_0$ as the collection of all possible deformations of the Lie subalgebroids and abuse the notation to write $(\sigma,\phi)\in\mathfrak{a}_0$ to indicate it is a possible deformation of $E$. 

\par We are going to show under certain conditions, this $L_\infty[1]$-algebra controls the deformation of the Lie subalgebroid $E$. Geometrically $E\subset A$ is a Lie subalgebroid if and only if the homological vector field $X_Q$ is tangent to the submanifold $E[1]$. This can be seen in the following way. $X_Q$ acting on smooth function $f\in\mathcal{C}^\infty(A[1])$ is the same as the differential $d$ acting on $f\in\Gamma(\wedge^* A^*)$ (see \ref{X_Qdef} and Remark \ref{differentialview}). It is well known that $E$ is a Lie subalgebroid if and only if $d$ can restrict to a differential over $\Gamma(\wedge^* E^*)$, but this is equivalent to the condition that $X_Q$ is tangent to $E[1]$. To describe the conditions for $X_Q$ tangent to $E[1]$, we need the following lemma.

\par Suppose $Q$ is a manifold and $R\subset Q$ a submanifold. Identify the normal bundle $NR$ as a tubular neighborhood of $R$ in $Q$. As before, a deformation of $R$ is defined to be the graph of some $\gamma\in\Gamma(NR)$. Without loss of generality identify $NR$ with $Q$. Similarly $TQ|_R=TR\oplus NR$ and the kernel of the projection $TQ\to TR$ induced by the projection $Q=NR\to R$ is the pull-back $p^*(NR)$ of $NR\to R$ along $NR\to R$, so any $\gamma\in\Gamma(NR)$ corresponds to a vector field $Y_\gamma\in\mathfrak{X}(Q)$ by
\[
	\Gamma(NR)\to \Gamma(p^*(NR))\to \mathfrak{X}(Q),
\]
and there is also a projection $Pr: \mathfrak{X}(Q)\to \Gamma(TQ|_R) \to \Gamma(NR)$. 
\par Here for simplicity we make the conversion that a function $f\in\mathcal{C}^{\infty}(Q)$ is analytic along the fibers of $NR$ if for any $x\in R$, $f|_{N_xR}$ has a Taylor expansion at $(x,0)$ convergent along the whole fiber $N_xR$. A vector field $Z\in\mathfrak{X}(Q)$ is analytic along the fibers of $NR$ if the coefficients of $Z$ under any smooth local frame of $TQ$ is analytic along the fibers of $NR$.

\begin{lemma}
\label{lem1}
Suppose $Z\in\mathfrak{X}(Q)$ is a vector field tangent to $R$ and analytic along the fibers of $NR$. Given a section $\gamma\in\Gamma(NR)$, $Z$ is tangent to the graph of $\gamma$ if and only if
\begin{eqnarray}
	\sum_{n=1}^{\infty}\frac{1}{n!} Pr([\cdots[Z,\overbrace{-Y_\gamma],\cdots,-Y_\gamma]}^{n\text{ times}})=0,
\end{eqnarray}
\end{lemma}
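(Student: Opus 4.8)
The plan is to reduce the statement to a computation inside a tubular‑neighbourhood chart and then exploit the flow generated by $Y_\gamma$. First I would fix a point $x\in R$ and work in a local trivialization $NR|_U\cong U\times \mathbb{R}^d$ with fibre coordinates $y=(y^1,\dots,y^d)$, so that $R$ is cut out by $y=0$ and the graph of $\gamma$ is cut out by $y=\gamma(x)$. In these coordinates the vector field $Y_\gamma$ associated to $\gamma$ is the constant‑in‑$y$ vertical field $\sum_k \gamma^k(x)\,\partial/\partial y^k$, and its (complete, since it is vertical and fibrewise linear‑affine) flow at time $1$ is exactly the fibrewise translation $\Phi\colon (x,y)\mapsto (x,y+\gamma(x))$, which carries $R$ diffeomorphically onto the graph of $\gamma$. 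Hence $Z$ is tangent to $\mathrm{gr}(\gamma)$ if and only if $\Phi^{-1}_* Z = (\Phi^{-1})_* Z$ is tangent to $R$, i.e.\ if and only if the pushforward $e^{-\mathcal{L}_{Y_\gamma}}Z = \sum_{n\ge 0}\frac{(-1)^n}{n!}[\,\underbrace{[\,Z,Y_\gamma],\dots,Y_\gamma}_{n}\,]$ has vanishing $NR$‑component along $R$.

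Next I would make precise why the exponential series is legitimate and why it computes $(\Phi^{-1})_*Z$. Here the analyticity hypothesis does the work: since $Z$ is analytic along the fibres of $NR$, the coefficients of $Z$ in the local frame are fibrewise‑convergent Taylor series, and because $Y_\gamma$ is fibrewise constant the iterated brackets $[\cdots[Z,-Y_\gamma],\dots,-Y_\gamma]$ are (up to sign) iterated $y$‑derivatives of those coefficients; so at each point $(x,\gamma(x))$ the series $\sum_{n\ge0}\frac{1}{n!}[\cdots[Z,-Y_\gamma],\dots,-Y_\gamma]$ converges to the Taylor expansion of $Z$ transported to $(x,0)$, which is precisely $(\Phi^{-1})_*Z$ evaluated along $R$. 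Applying the projection $Pr\colon \mathfrak{X}(Q)\to\Gamma(NR)$ term by term, the $n=0$ term contributes $Pr(Z)$, which vanishes because $Z$ is tangent to $R$; hence the condition ``$(\Phi^{-1})_*Z$ tangent to $R$'' becomes exactly $\sum_{n\ge1}\frac{1}{n!}Pr([\cdots[Z,-Y_\gamma],\dots,-Y_\gamma])=0$, which is the asserted equation.

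Finally I would check coordinate‑independence: the statement of the lemma is intrinsic (tangency, $Pr$, and the bracket are all frame‑independent), so it suffices to establish it in one atlas of tubular‑neighbourhood charts adapted to the splitting $TQ|_R=TR\oplus NR$, and the local computation above glues without further ado. I expect the main obstacle to be the analytic bookkeeping in the second paragraph: one must verify that the termwise projection of the bracket series really does reproduce the pushforward $(\Phi^{-1})_*Z$ on the nose, i.e.\ that ``$\Phi^{-1}$ followed by evaluation at $R$'' equals ``fibrewise Taylor expansion followed by setting the fibre coordinate to $-\gamma$'', including the sign placement that turns $Y_\gamma$ into $-Y_\gamma$. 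Once the flow interpretation of the exponential is nailed down, everything else is routine.
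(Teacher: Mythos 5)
Your argument is correct, but it is organized differently from the paper's. The paper proves the lemma by brute force in an adapted chart: it writes $Z=Z^i(z,y)\partial_{z^i}+Y^j(z,y)\partial_{y^j}$, expresses tangency to $\mathrm{gr}(\gamma)$ as $Y^j(z,\gamma(z))=Z^i(z,\gamma(z))\,\partial\gamma^j/\partial z^i$, Taylor-expands both sides in the fibre variable using the analyticity hypothesis, and then asserts that the resulting identity coincides term by term with $\sum_{n\ge 1}\frac{1}{n!}Pr([\cdots[Z,-Y_\gamma],\cdots,-Y_\gamma])=0$. You instead observe that $Y_\gamma$ is complete with time-one flow the fibrewise translation $\Phi$ carrying $R$ onto $\mathrm{gr}(\gamma)$, so that tangency of $Z$ to $\mathrm{gr}(\gamma)$ is equivalent to tangency of $(\Phi^{-1})_*Z$ to $R$, and you identify $(\Phi^{-1})_*Z$ with the exponential series $\sum_{n\ge 0}\frac{1}{n!}\,\mathrm{ad}_{-Y_\gamma}^{\,n}Z$ (the $n=0$ term dying under $Pr$ because $Z$ is tangent to $R$). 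This is a genuine gain: it explains conceptually why the derived-bracket series appears, and it reduces the paper's unproved claim that the two local identities are ``identical'' to the standard fact that the pullback of a vector field along a flow is the exponential of the Lie derivative, with analyticity along the fibres guaranteeing convergence of that exponential at $t=1$. The one point you should still write out carefully is the convergence step you flag yourself: the identity $\frac{d^n}{dt^n}\big|_{t=0}(\Phi_{-t})_*Z=\mathrm{ad}_{Y_\gamma}^{\,n}Z$ is formal and always true, but you must check that $t\mapsto(\Phi_{-t})_*Z$ restricted to $R$ is real-analytic in $t$ with everywhere-convergent Taylor series (this follows because its coefficients at $(x,0)$ are finite combinations of the coefficients of $Z$ at $(x,t\gamma(x))$ and of $\partial\gamma/\partial z$, and the former are entire in the fibre by hypothesis); also note that the iterated brackets are not purely fibre derivatives of the coefficients of $Z$ --- they pick up $\partial\gamma^j/\partial z^i$ terms coming from the horizontal part of $Z$ --- so the parenthetical in your second paragraph is loose, though the flow identity itself does not depend on it.
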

\begin{proof}
\par Working locally, let $(z^i)$ be the local coordinate of $R$ and $(y^j)$ the local coordinates of the fibers of $NR$. $Z$ is of the form
\begin{eqnarray}
	Z=Z^i(z,y)\frac{\partial}{\partial z_i}+ Y^j(z,y)\frac{\partial}{\partial y^j},
\end{eqnarray}
satisfying $Y^j|_R=0$ or $Y^j(z,0)=0$ for all $j$. Suppose the section $\gamma:R\to NR$ is defined by $z\mapsto y$ where $y^j=\gamma^j(z)$, then locally the vector field $Y_\gamma$ is given by
\begin{eqnarray}
	Y_\gamma=\gamma^j(z)\frac{\partial}{\partial y^j}.
\end{eqnarray}
$Z$ is tangent to the graph $gr(\gamma)$ if and only if $Y^j(z,\gamma(z))\frac{\partial}{\partial y^j}=\gamma_* (Z^i(z,\gamma(z))\frac{\partial}{\partial z^i})$ or
\begin{eqnarray}
\label{taylor0}
	Y^j(z,\gamma(z))\frac{\partial}{\partial y^j}=Z^i(z,\gamma(z))\frac{\partial \gamma^j}{\partial x^i}\frac{\partial}{\partial y^j}
\end{eqnarray}
Since $Z$ is analytic along the fibers of $NR$, both $Z^i(z,y)$ and $Y^j(z,y)$ are analytic in $y$ at $(z,0)$ whose Taylor expansions are convergent for all $y$. If we write the equation (\ref{taylor0}) into Taylor expansion with respect to $y$-coordinates, we get
\begin{eqnarray}
	\sum_{n=0}^{\infty}\frac{1}{n!}\sum_{|K|=n}\frac{\partial^K Y^j(z,0)}{\partial y^K}(\gamma^j(z))^K=\sum_{n=0}^{\infty}\frac{1}{n!}\frac{\partial \gamma^j(z)}{\partial z^i} \sum_{|K|=n}\frac{\partial^K X^i(z,0)}{\partial y^K}(\gamma^j(z))^K.
\end{eqnarray}
Here $K$ stands for multi-indices, and if $K=(k_1,\cdots,k_l)$ then $|K|=\sum_{i=1}^{l}k_i$. It turns out this equation is identical with $\sum_{n=1}^{\infty}\frac{1}{n!} Pr([\cdots[Z,\overbrace{-Y_\gamma],\cdots,-Y_\gamma]}^{n\text{ times}})=0$ in local coordinates. The equation does not depend on the local coordinates and we proved our lemma. 
\end{proof}

\par Since $E$ is a Lie subalgebroid, $X_Q$ is tangent to $E[1]$, as a result $P(X_Q)=0$. Besides $X_Q$ is of degree $1$ satisfying $[[X_Q,X_Q]]=0$, so we can apply Theorem \ref{v1}. Immediately we get the first part of the following theorem.

\begin{theorem}
\label{thm2}
$\mathfrak{a}_{X_Q}^P$ is a flat $L_\infty[1]$-algebra with the structure maps
\begin{eqnarray}
	m_k(a_1,\cdots,a_k)=P[[\cdots[[X_Q,I(a_1)]],\cdots,I(a_k)]]
\end{eqnarray} 
for all $k\ge 1$. If the Lie algebroid structure $X_Q$ is analytic along the fibers of $NS$, $(\sigma,\phi)\in \Gamma(NS\oplus(\tilde{E}^*|_{S_\sigma}\otimes \tilde{F}|_{S_\sigma}))$ is a deformation of the Lie subalgebroid $E$ if and if $-X_{\sigma,\phi}$ is a Maurer-Cartan element of the $L_\infty[1]$-algebra $\mathfrak{a}_{X_Q}^P$.
\end{theorem}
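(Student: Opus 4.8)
The first assertion requires essentially no work: since $E\subset A$ is a Lie subalgebroid, $X_Q$ is tangent to $E[1]$ and hence $P(X_Q)=0$, so $X_Q\in \mathrm{Ker}(P)_1$, and we already know $[[X_Q,X_Q]]=0$. Thus Theorem \ref{v1} applies with $\Delta=X_Q$, producing the $L_\infty[1]$-algebra $\mathfrak{a}_{X_Q}^P$ with higher derived brackets $m_k(a_1,\dots,a_k)=P[[\cdots[[X_Q,I(a_1)]],\cdots,I(a_k)]]$; it is flat because $m_0=P(X_Q)=0$.

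For the second assertion, the plan is to reduce everything to Lemma \ref{lem1}. The geometric principle already recalled above — that a subbundle over a submanifold is a Lie subalgebroid exactly when $X_Q$ is tangent to the corresponding graded submanifold of $A[1]$ — should be applied not to $E[1]$ but to the deformed graded submanifold $gr(\phi)[1]\subset A[1]$ attached to a possible deformation $(\sigma,\phi)$. So the first step is to identify this submanifold intrinsically: under the standing identifications $NS\cong M$ and $A\cong\tilde E\oplus\tilde F$, the pair $(\sigma,\phi)$ determines a section $\gamma$ of the normal bundle $N(E[1])$ of $E[1]$ in $A[1]$, and by the very construction of $I$ and of $(\sigma,\phi)\mapsto X_{\sigma,\phi}$, the vector field on $A[1]$ associated to $\gamma$ in the sense of Lemma \ref{lem1} is exactly $I(X_{\sigma,\phi})$, while the projection ``$Pr$'' of that lemma is exactly $P$. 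Consequently $gr(\phi)$ is a Lie subalgebroid if and only if $X_Q$ is tangent to the graph of $\gamma$.

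Next I would invoke Lemma \ref{lem1} with $Q=A[1]$, $R=E[1]$, $Z=X_Q$ and $\gamma$ as above; the hypothesis that $X_Q$ is analytic along the fibres of $NS$ is precisely the hypothesis the lemma needs (the odd $\tilde F$-directions being nilpotent, they contribute only finitely many terms, so no convergence condition is needed there). The lemma then says that $X_Q$ is tangent to $gr(\gamma)$ if and only if
\[
\sum_{n=1}^{\infty}\frac{1}{n!}\,P\bigl([[\cdots[[X_Q,-I(X_{\sigma,\phi})]],\cdots,-I(X_{\sigma,\phi})]]\bigr)=0,
\]
with $n$ copies of $-I(X_{\sigma,\phi})$. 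Finally I would rewrite the left-hand side: using multilinearity of the $m_k$, the formula $m_k(a_1,\dots,a_k)=P[[\cdots[[X_Q,I(a_1)]],\cdots,I(a_k)]]$, and $m_0=0$, this sum equals $\sum_{k=0}^{\infty}\frac{1}{k!}m_k(-X_{\sigma,\phi},\dots,-X_{\sigma,\phi})$, which is the Maurer--Cartan equation for $-X_{\sigma,\phi}$ in $\mathfrak{a}_{X_Q}^P$. Chaining the three equivalences (deformation $\Leftrightarrow$ $X_Q$ tangent to $gr(\gamma)$ $\Leftrightarrow$ the displayed sum vanishes $\Leftrightarrow$ $-X_{\sigma,\phi}$ Maurer--Cartan) completes the proof.

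I expect the main obstacle to be the identification carried out in the second paragraph: making rigorous the dictionary between the (sub)algebroid language and the (graded) submanifold/tangency language in the \emph{deformed} setting, and in particular checking that the section $\gamma$ of $N(E[1])$ attached to $(\sigma,\phi)$ really has $I(X_{\sigma,\phi})$ as its associated vector field, and that ``$gr(\phi)$ is a Lie subalgebroid'' matches ``$X_Q$ tangent to $gr(\phi)[1]$'' including the anchor condition $\rho(gr(\phi))\subset TS_\sigma$. A secondary point demanding care is that Lemma \ref{lem1} is phrased for ordinary manifolds, so one must verify it transfers verbatim to the supermanifold $A[1]$, with ``analytic along the fibres'' interpreted in the even $NS$-directions only.
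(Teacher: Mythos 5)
Your proposal is correct and follows essentially the same route as the paper: Theorem \ref{v1} applied to $\Delta=X_Q$ with $P(X_Q)=0$ for the first assertion, and for the second, the translation ``$gr(\phi)$ is a Lie subalgebroid $\Leftrightarrow$ $X_Q$ is tangent to $gr(\phi)$'' followed by Lemma \ref{lem1} applied in the tubular neighborhood $p^*(NS\oplus F)$ of $E$ in $A$, with the resulting tangency equation recognized as the Maurer--Cartan equation for $-X_{\sigma,\phi}$. The obstacles you flag (identifying $I(X_{\sigma,\phi})$ as the vector field of the normal section, and the fact that analyticity in the odd $F$-directions is automatic because $X_Q$ is polynomial there) are exactly the points the paper addresses, in the same way.
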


\begin{remark}
Recall that the local expression of $X_Q$ is of the form
\begin{eqnarray}
	X_Q|_U = \frac{1}{2}C_{ij}^k(x,y) \tilde{\xi}^i\wedge \tilde{\xi}^j \wedge \frac{\partial}{\partial \tilde{\xi}^k}-b_i^t(x,y)\tilde{\xi}^i\frac{\partial}{\partial x^t}
\end{eqnarray}
which is already analytic (in fact polynomial) along the fibers of $E$ and $F$(over $S$). The Lie algebroid structure is analytic along the fibers of $NS$ is equivalent to the structure constants $C_{ij}^k(x,y)$ and $b_i^t(x,y)$ are analytic in $y$ at $(x,0)$ and convergent for all $y$.
\end{remark}

\begin{proof}
$(\sigma,\phi)\in \Gamma(NS\oplus(\tilde{E}^*|_{S_\sigma}\otimes \tilde{F}|_{S_\sigma}))$ is a deformation of the Lie subalgebroid $E$ if and only if $X_Q$ is tangent to the graph of $\phi$. $X_Q$ is already analytic along the fiber of $F$, and hence analytic along the fibers of $p^*F[1]$ over $E[1]$. By assumption it is also analytic along the fibers of $NS$, and hence analytic along the fiber of $p^*NS$ over $E[1]$. As $p^*(NS\oplus F)$ is the tubular neighborhood of $E$ in $A$, by Lemma \ref{lem1}, $X_Q$ is tangent to $gr(\phi)$ if and only if the vector field $I(X_{\sigma,\phi})$ corresponding to $(\sigma,\phi)$ satisfies the following equation
\begin{eqnarray}
	\sum_{n=1}^{\infty}\frac{1}{n!}[[\cdots,[[X_Q, -I(X_{\sigma,\phi})]],\cdots,-I(X_{\sigma,\phi})]]=0.
\end{eqnarray}
This is the same as the Maurer-Cartan equation of $\mathfrak{a}_{X_Q}^P$, which means $-X_{\sigma,\phi}$ is a Maurer-Cartan element of $\mathfrak{a}_{X_Q}^P$.
\end{proof}

\par If we fix $\sigma=0$, then a deformation $(\sigma,\phi)$ of $E$ is simply a bundle map $\phi:E\to F$. Let $\mathfrak{a}'=\Gamma(p^*F[1])$ where $p^*F$ is the pull-back of $F\to S$ over $E\to S$. Then $\mathfrak{a}'_0$ becomes the subset of $\mathfrak{a}_0$ consisting of possible deformations of the form $(0,\phi)$. We can identify $\mathfrak{a}'_0$ with the collection of all possible deformations over the fixed base $S$. Since $\mathfrak{a}'\subset\mathfrak{a}$, automatically $I(\mathfrak{a}')$ is an abelian subalgebra of $V$. Moreover the projection $p^*(F\oplus NS)\to p^*(F)$ induces a projection $P_1: \mathfrak{a}\to\mathfrak{a'}$ and the map $\tilde{P}:V\to \mathfrak{a}\to\mathfrak{a}'$ satisfying $\tilde{P}\circ I=Id_{\mathfrak{a}'}$. It also can be checked that $ker(\tilde{P})$ is a Lie subalgebra of $V$.
\begin{corollary}
\label{cor1}
$(V,[[\cdot,\cdot]],\mathfrak{a}',\tilde{P})$ is a V-algebra and ${\mathfrak{a}'}_{X_Q}^{\tilde{P}}$ is an $L_\infty[1]$-algebra. The graph of $\phi:E\to F$ is a deformation of the Lie subalgebroid if and only if $-X_{0,\phi}$ is a Maurer-Cartan element of this $L_\infty[1]$-algebra.
\end{corollary}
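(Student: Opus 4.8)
The plan is to follow the proof of Theorem~\ref{thm2} step by step, with the projection $\tilde P$ in place of $P$; the argument splits into three stages. (i) Verify that $(V,[[\cdot,\cdot]],\mathfrak{a}',\tilde P)$ satisfies the axioms of a $V$-algebra. (ii) Check that $X_Q$ is a Maurer--Cartan element of this $V$-algebra, so that Theorem~\ref{v1} produces the flat $L_\infty[1]$-algebra ${\mathfrak{a}'}_{X_Q}^{\tilde P}$. (iii) Show that, for $\sigma=0$, the Maurer--Cartan equation of ${\mathfrak{a}'}_{X_Q}^{\tilde P}$ at $-X_{0,\phi}$ is equivalent to $X_Q$ being tangent to $gr(\phi)[1]$, i.e. to $gr(\phi)$ being a Lie subalgebroid.

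Stage (i) is immediate from the facts assembled just before the statement: $\tilde P\circ I=Id_{\mathfrak{a}'}$ gives the decomposition $V=I(\mathfrak{a}')\oplus Ker(\tilde P)$ (any $v$ equals $I\tilde P(v)+(v-I\tilde P(v))$ with the second term in $Ker(\tilde P)$, and $I(\mathfrak{a}')\cap Ker(\tilde P)=0$), while $I(\mathfrak{a}')$ is abelian and $Ker(\tilde P)$ is a Lie subalgebra. For stage (ii), since $E$ is a Lie subalgebroid, $X_Q$ is tangent to $E[1]$, hence $P(X_Q)=0$ and therefore $\tilde P(X_Q)=P_1(P(X_Q))=0$; thus $X_Q\in Ker(\tilde P)_1$, it has degree $1$ and satisfies $[[X_Q,X_Q]]=0$, so Theorem~\ref{v1} applies and the resulting $L_\infty[1]$-algebra is flat because $m_0=\tilde P(X_Q)=0$. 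Its structure maps are $m_k(a_1,\cdots,a_k)=\tilde P[[\cdots[[X_Q,I(a_1)]],\cdots,I(a_k)]]=P_1(P[[\cdots[[X_Q,I(a_1)]],\cdots,I(a_k)]])$, the $P_1$-projections of the structure maps of $\mathfrak{a}_{X_Q}^P$.

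For stage (iii), note that when $\sigma=0$ the graph $gr(\phi)$ is a subbundle of $A$ over the fixed base $S$, so as a submanifold of $A$ it lies inside $A|_S$, and under the identification $A=\tilde E\oplus\tilde F$ over $NS$ the inclusion $A|_S\hookrightarrow A$ corresponds to the subbundle $p^*F\hookrightarrow p^*(F\oplus NS)$ of the normal bundle of $E$ in $A$. The idea is to apply Lemma~\ref{lem1} with the $p^*F$-directions playing the role of the normal bundle, which is exactly what passing from $P$ to $\tilde P=P_1\circ P$ does: the Maurer--Cartan equation of ${\mathfrak{a}'}_{X_Q}^{\tilde P}$ at $-X_{0,\phi}$ is the $P_1$-projection of $\sum_{n\ge1}\frac{1}{n!}[[\cdots[[X_Q,-I(X_{0,\phi})]],\cdots,-I(X_{0,\phi})]]=0$, which by Theorem~\ref{thm2} (specialized to $\sigma=0$) expresses the vanishing of the $\mathfrak{a}'=\Gamma(p^*F[1])$-component of the obstruction to $X_Q$ being tangent to $gr(\phi)[1]$.

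I expect stage (iii) to be the main obstacle. The $\tilde P$-Maurer--Cartan equation only records the $\Gamma(p^*F[1])$-part of the full obstruction of Theorem~\ref{thm2}, so one must show that, for a deformation over the fixed base $S$, vanishing of this part already forces the $\Gamma(p^*NS)$-part --- which geometrically encodes the anchor condition $\rho(gr(\phi))\subset TS$ --- to vanish as well. Verifying this implication is the delicate step: it calls for a careful local analysis along the chain $E\subset A|_S\subset A$, using the analyticity of $X_Q$ along the fibers of $NS$, and it is the one place in the proof that goes beyond a verbatim transcription of the argument for Theorem~\ref{thm2}.
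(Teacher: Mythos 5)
Your stages (i) and (ii) follow the paper's intent, and in fact the paper offers nothing more than this: its entire justification of Corollary \ref{cor1} is the sentence that it ``follows from our Theorem \ref{thm2} immediately,'' together with the observation that analyticity of $X_Q$ along the fibers of $NS$ is no longer needed because the deformation stays over $S$. (Even stage (i) leans on the paper's unproved assertion that $Ker(\tilde{P})$ is a Lie subalgebra of $V$; this deserves scrutiny, since for $Z\in Ker(P)$ and a constant $NS$-vertical field $Y\in I(\Gamma(p^*NS))$ the bracket $[[Z,Y]]$ can acquire a nonzero $p^*F[1]$-component along $E[1]$.)

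The real issue is stage (iii), and you have located it exactly --- but the implication you defer (``vanishing of the $\Gamma(p^*F[1])$-part forces the $\Gamma(p^*NS)$-part to vanish'') is not a delicate step waiting to be filled in; it fails in general, so the proof cannot be completed along these lines. Concretely, the degree-one part of the full Maurer--Cartan expression of Theorem \ref{thm2} at $-X_{0,\phi}$ splits into a component in $\Gamma(\wedge^2E^*\otimes F)$ (the bracket condition) and a component in $\Gamma(E^*\otimes NS)$ that records precisely the $NS$-part of $\rho(e+\phi(e))$ for $e\in E$, i.e. the obstruction to the anchor condition $\rho(gr(\phi))\subset TS$; the projection $P_1$ discards the latter. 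Take $A=T\mathbb{R}^2$, $S=\{y=0\}$, $E=TS$, $F=\mathrm{span}(\partial_y)|_S$ and $\phi(\partial_x)=f(x)\,\partial_y$: here $\wedge^2E^*=0$, so the $\tilde{P}$-Maurer--Cartan equation is vacuously satisfied for every $f$, yet $gr(\phi)$ is a Lie subalgebroid only for $f\equiv 0$, because otherwise $\rho(gr(\phi))\not\subset TS$. Hence the ``if and only if'' of Corollary \ref{cor1} cannot hold as stated for general $S\subsetneq M$: one must either retain the full projection $P$ of Theorem \ref{thm2}, impose the anchor condition $\rho(gr(\phi))\subset TS$ as an extra hypothesis, or restrict to $S=M$ as in Corollary \ref{cor4}, where the paper's own remark explains that the anchor condition is automatic precisely because $TS=TM$. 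Your instinct that this is the one place the argument goes beyond Theorem \ref{thm2} is correct; the honest conclusion is that it does not go through.
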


\par One thing to mention is that as we only consider the deformation of $E$ over $S$, the behavior of the Lie algebroid structure $X_Q$ outside the zero section of $NS$ does not matter here, so we do not require $X_Q$ being analytic along the fibers of $NS$ any more. The corollary follows from our Theorem \ref{thm2} immediately.

\par Another thing to mention is that only the first $3$ structure maps $m_1$, $m_2$, $m_3$ are nonzero and $m_k=0$ for $k\ge 4$ in this flat $L_\infty[1]$-algebra. We can also give the explicit formulas for these structure maps. Denote the projection $A \to E$ by $\pi_{E}$ and $A\to F$ by $\pi_{F}$. As vector spaces, $\mathfrak{a}'=\Gamma(p^*F[1])$ is isomorphic to $\Gamma(\sum_{n=0}^{\infty}\wedge^n E^*\otimes F)[1]$ by identifying the tangent space $T(F_s)$ at the point $s\in S$ with $F_{s}$. Let the degree of the component $\Gamma(\wedge^n E^*\otimes F)$ to be $n$. As $\tilde{E}$ and $\tilde{F}$ are pull-backs of $E$ and $F$ along $NS\to S$, any $\xi\in \Gamma(\wedge^*E^*\otimes F)$ or $a\in E$ can be extended to a unique section $\tilde{\xi}\in\Gamma(\tilde{E}^*\otimes \tilde{F})$ or $\tilde{a}\in \Gamma(\tilde{E})$ by constantly pulling back along the fibers of $NS$. The structure maps of the $L_\infty[1]$-algebra $\mathfrak{a}'\cong\Gamma(\sum_{n=0}^{\infty}\wedge^n E^*\otimes F)[1]$ in \ref{cor1} are given by
\begin{eqnarray*}
	m_1(\xi[1])(a_1,\cdots,a_{k+1}) = \sum_{\tau\in S_{k,1}} (-1)^{\tau} \pi_{F}([\tilde{\xi}(\tilde{a}_{\tau(1)},\cdots,\tilde{a}_{\tau(k)}),\tilde{a}_{\tau(k+1)}])|_S\\
         \quad -(-1)^{k-1} \sum_{\tau\in S_{2,k-2}} (-1)^{\tau} \xi(\pi_{E}([\tilde{a}_{\tau(1)},\tilde{a}_{\tau(2)}]|_S),a_{\tau(3)},\cdots,a_{\tau(k+1)})
\end{eqnarray*}
\begin{eqnarray*}
	m_2(\xi[1],\psi[1])(a_1,\cdots,a_{k+l}) =  (-1)^k\sum_{\tau\in S_{l,k}} (-1)^{\tau} \pi_{F}([\tilde{\xi}(\tilde{a}_{\tau(l+1)},\cdots,\tilde{a}_{\tau(l+k)}),\tilde{\psi}(\tilde{a}_{\tau(1)},\cdots, \tilde{a}_{\tau(l)})])|_S\\
- (-1)^k\sum_{\tau\in S_{l,1,k-1}} (-1)^{\tau} \xi (\pi_{E}([\tilde{\psi}(\tilde{a}_{\tau(1)},\cdots,\tilde{a}_{\tau(l)}),\tilde{a}_{\tau(l+1)}])|_S,a_{\tau(l+2)},\cdots,a_{\tau(k+l)})\\
 + (-1)^{(l-1)k}\sum_{\tau\in S_{k,1,l-1}} (-1)^{\tau} \psi(\pi_{E}([\tilde{\xi}(\tilde{a}_{\tau(1)},\cdots,\tilde{a}_{\tau(k)}),\tilde{a}_{\tau(k+1)}])|_S,a_{\tau(k+2)},\cdots,a_{\tau(k+l)})
\end{eqnarray*}
\vspace{1mm}
\[
m_3(\xi[1],\psi[1],\phi[1])(a_1,\cdots,a_{k+l+m-1})=
\]
\vspace{-5mm}
\begin{eqnarray*}
\pm  \sum_{\tau\in S_{k,l,m-1}} (-1)^{\tau} \phi(\pi_{E}([\tilde{\xi}(\tilde{a}_{\tau(1)},\cdots, \tilde{a}_{\tau(k)}),\tilde{\psi}(\tilde{a}_{\tau(k+1)},\cdots,\tilde{a}_{\tau(k+l)})])|_S,a_{\tau(k+l+1)},\cdots,a_{\tau(k+l+m-1)})\\
\pm  \sum_{\tau\in S_{l,m,k-1}} (-1)^{\tau} \xi(\pi_{E}([\tilde{\psi}(\tilde{a}_{\tau(1)},\cdots,\tilde{a}_{\tau(l)}),\tilde{\phi}(\tilde{a}_{\tau(l+1)},\cdots, \tilde{a}_{\tau(l+m)})])|_S,a_{\tau(l+m+1)},\cdots,a_{\tau(k+l+m-1)})\\
\pm  \sum_{\tau\in S_{m,k,l-1}} (-1)^{\tau} \psi(\pi_{E}([\tilde{\phi}(\tilde{a}_{\tau(1)},\cdots,\tilde{a}_{\tau(m)}),\tilde{\xi}(\tilde{a}_{\tau(m+1)},\cdots, \tilde{a}_{\tau(m+k)})])|_S,a_{\tau(k+m+1)},\cdots,a_{\tau(k+l+m-1)})\\
\end{eqnarray*}
for any $\xi\in\Gamma(\wedge^k E^*\otimes F)$, $\psi\in\Gamma(\wedge^l E^*\otimes F)$, $\phi\in\Gamma(\wedge^m E^*\otimes F)$ and $a_i\in \Gamma(E)$.

\par If $S=M$ and consider the deformation of $E$, the $L_\infty[1]$-algebra ${\mathfrak{a}'}_{X_Q}^{\tilde{P}}$ is even simpler. In this case $A=E\oplus F$. Our Corollary \ref{cor1} can be rewritten as the following

\begin{corollary} 
\label{cor4}
If $E$ is a Lie subalgebroid over $M$, there is a flat $L_\infty[1]$-algebra structure on $\Gamma(\sum_{n=0}^{\infty}\wedge^n E^*\otimes F)[1]$ with structure maps given by
\begin{eqnarray*}
	m_1(\xi[1])(a_1,\cdots,a_{k+1}) = \sum_{\tau\in S_{k,1}} (-1)^{\tau} \pi_{F}([\xi(a_{\tau(1)},\cdots,a_{\tau(k)}),a_{\tau(k+1)}])\\
         \quad -(-1)^{k-1} \sum_{\tau\in S_{2,k-2}} (-1)^{\tau} \xi(\pi_{E}[a_{\tau(1)},a_{\tau(2)}],a_{\tau(3)},\cdots,a_{\tau(k+1)})
\end{eqnarray*}
\begin{eqnarray*}
	m_2(\xi[1],\psi[1])(a_1,\cdots,a_{k+l}) =  (-1)^k\sum_{\tau\in S_{l,k}} (-1)^{\tau} \pi_{F}([\xi(a_{\tau(l+1)},\cdots,a_{\tau(l+k)}),\psi(a_{\tau(1)},\cdots, a_{\tau(l)})])\\
- (-1)^k\sum_{\tau\in S_{l,1,k-1}} (-1)^{\tau} \xi (\pi_{E}([\psi(a_{\tau(1)},\cdots,a_{\tau(l)}),a_{\tau(l+1)}]),a_{\tau(l+2)},\cdots,a_{\tau(k+l)})\\
 + (-1)^{(l-1)k}\sum_{\tau\in S_{k,1,l-1}} (-1)^{\tau} \psi(\pi_{E}([\xi(a_{\tau(1)},\cdots,a_{\tau(k)}),a_{\tau(k+1)}]),a_{\tau(k+2)},\cdots,a_{\tau(k+l)})
\end{eqnarray*}
\vspace{1mm}
\[
m_3(\xi[1],\psi[1],\phi[1])(a_1,\cdots,a_{k+l+m-1})=
\]
\vspace{-5mm}
\begin{eqnarray*}
\pm  \sum_{\tau\in S_{k,l,m-1}} (-1)^{\tau} \phi(\pi_{E}([\xi(a_{\tau(1)},\cdots, a_{\tau(k)}),\psi(a_{\tau(k+1)},\cdots,a_{\tau(k+l)})]),a_{\tau(k+l+1)},\cdots,a_{\tau(k+l+m-1)})\\
\pm  \sum_{\tau\in S_{l,m,k-1}} (-1)^{\tau} \xi(\pi_{E}([\psi(a_{\tau(1)},\cdots,a_{\tau(l)}),\phi(a_{\tau(l+1)},\cdots, a_{\tau(l+m)})]),a_{\tau(l+m+1)},\cdots,a_{\tau(k+l+m-1)})\\
\pm  \sum_{\tau\in S_{m,k,l-1}} (-1)^{\tau} \psi(\pi_{E}([\phi(a_{\tau(1)},\cdots,a_{\tau(m)}),\xi(a_{\tau(m+1)},\cdots, a_{\tau(m+k)})]),a_{\tau(k+m+1)},\cdots,a_{\tau(k+l+m-1)})\\
\end{eqnarray*}
and $m_i=0$ for $i\ge 4$. This $L_\infty[1]$-algebra governs the deformation of the Lie subalgebroid $E$, i.e. if $\phi\in \Gamma(E^*\otimes F)$ such that $gr(\phi)$ is a Lie subalgebroid of $A$ if and only if $\phi[1]$ is a Maurer-Cartan element of this $L_\infty[1]$-algebra.
\end{corollary}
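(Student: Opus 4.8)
This statement is the $S=M$ case of Corollary~\ref{cor1} together with the discussion following it. When $S=M$ one has $A=\tilde E\oplus\tilde F$ with $\tilde E=E$, $\tilde F=F$, so the tildes disappear from the structure maps; no analyticity hypothesis on $X_Q$ is needed (the remark after Corollary~\ref{cor1}); and under the identification $\mathfrak a'=\Gamma(p^{*}F[1])\cong\Gamma(\bigoplus_{n\ge 0}\wedge^{n}E^{*}\otimes F)[1]$, whose degree-$0$ component is the space of bundle maps $E\to F$ and which is set up so that the Maurer--Cartan element $-X_{0,\phi}$ of Corollary~\ref{cor1} becomes $\phi[1]$, the last assertion is immediate. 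Hence the only new content is (i) the vanishing $m_i=0$ for $i\ge 4$ and (ii) the closed formulas for $m_1,m_2,m_3$, and the plan is to read both off the higher-derived-bracket formula $m_k(a_1,\dots,a_k)=\tilde P[[\cdots[[X_Q,I(a_1)]],\dots,I(a_k)]]$ of Theorem~\ref{v1} with the help of an auxiliary grading.

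I would first put the \emph{$F$-weight} on $\mathcal C^{\infty}(A[1])=\Gamma(\wedge^{\bullet}(E^{*}\oplus F^{*}))$: since $A=E\oplus F$ this is a genuine $\mathbb Z_{\ge 0}$-grading, counting the number of $F^{*}$-factors, and it is respected by the product. Every super vector field on $A[1]$ then splits into components homogeneous for the induced weight shift; because weight is bounded below by $0$, none of them shifts weight by less than $-1$; the derivation $I(a)$ (for $a\in\mathfrak a'$), which replaces an $F^{*}$-covector by $a$, shifts weight by exactly $-1$; and $\tilde P$ is precisely the projection onto the weight-shift-$(-1)$ part (a weight-shift-$(-1)$ vector field has no $\partial/\partial x$- or vertical-$E$-component, and its vertical-$F$-component has weight-$0$ coefficients, which is exactly the range of $\tilde P$). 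The structural fact I need is: $X_Q=d$ raises $F$-weight by at most $2$ (because $d$ is a derivation generated by its values on the degree-$\le 1$ part of $\mathcal C^{\infty}(A[1])$, where it raises weight by at most $2$), and $E\subset A$ is a Lie subalgebroid exactly when $X_Q$ is tangent to $E[1]$, which in this splitting says precisely that $d$ has no weight-lowering component. Thus $X_Q=d_0+d_1+d_2$ with $d_s$ homogeneous of weight shift $s$.

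Since $[[\cdots[[d_s,I(a_1)]],\dots,I(a_k)]]$ is homogeneous of weight shift $s-k$, after applying $\tilde P$ only the summand built from $d_{k-1}$ survives: there is no $d_s$ with $s\ge 3$, so $m_k=0$ for $k\ge 4$, while $m_1=\tilde P[[d_0,I(\xi)]]$, $m_2=\tilde P[[[[d_1,I(\xi)]],I(\psi)]]$ and $m_3=\tilde P[[[[[[d_2,I(\xi)]],I(\psi)]],I(\phi)]]$. I would then identify the three pieces geometrically: $d_0$ is the Chevalley--Eilenberg differential of the Lie algebroid $E$ with coefficients in the natural representation of $E$ on the normal bundle $A/E\cong F$; $d_1$ gathers the remaining mixed terms, namely the action of $\Gamma(E)$ on $\Gamma(F)$, the anchor restricted to $F$, and the $\pi_F$-part of the bracket of two sections of $F$; and $d_2=\frac{1}{2}\,\pi_E\circ[\,\cdot\,,\cdot\,]|_{\wedge^{2}F}$ is the tensor, well defined because $E$ is a subalgebroid, giving the $\pi_E$-component of the bracket of two sections of $F$. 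Expanding the iterated graded commutators above and evaluating the resulting derivations on sections $a_1,a_2,\dots\in\Gamma(E)$, via the Cartan-type formula~(\ref{X_Qdef}) for $d$ and the explicit form of $I(\xi)$, yields the asserted expressions, the several summands in $m_2$ and $m_3$ recording the ways the $\Gamma(F)$-valued quantities $\xi(\,\cdot\,),\psi(\,\cdot\,),\phi(\,\cdot\,)$ can be bracketed against one another and against sections of $E$.

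The only hard part is the sign- and shuffle-bookkeeping in that last step: nothing conceptual remains once $X_Q=d_0+d_1+d_2$ is in hand, but pinning down the coefficients $(-1)^{k}$, $(-1)^{k-1}$, $(-1)^{(l-1)k}$ and the undetermined $\pm$'s in $m_3$ requires a patient expansion of the nested commutators in a local frame adapted to $A=E\oplus F$ (using the local expression~(\ref{X_Qlocal}) with structure functions split into $E$- and $F$-indices, and recalling that the subalgebroid condition forces $C^{c}_{\alpha\beta}=0$ for an $F$-index $c$ and $E$-indices $\alpha,\beta$), together with the passage between the $\otimes^{n}$-normalization implicit in the derived brackets and the $\wedge^{n}E^{*}$-normalization of the statement (through the shift isomorphism and the shuffle sums in Definition~\ref{l1}); afterwards one checks frame-independence. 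The ``governs the deformation'' clause then carries over verbatim from Corollary~\ref{cor1}.
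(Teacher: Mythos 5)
Your proposal is correct and follows the same overall route as the paper: Corollary \ref{cor4} is obtained there simply by setting $S=M$ in Corollary \ref{cor1} (so $\tilde E=E$, $\tilde F=F$, $NS$ is trivial, the anchor condition is vacuous, and no analyticity is needed), with the explicit structure maps and the vanishing $m_i=0$ for $i\ge 4$ merely asserted. Where you genuinely add something is the $F$-weight filtration: the paper only hints at this through the remark that $X_Q$ is polynomial of degree at most $2$ along the fibers of $F$, whereas you turn it into an actual argument --- $X_Q=d_0+d_1+d_2$ with $d_s$ of weight shift $s$ (the shift-$(-1)$ part killed by the subalgebroid condition), $I(a)$ of shift $-1$, and $\tilde P$ the projection onto the shift-$(-1)$ part, so that $m_k=\tilde P[[\cdots[[d_{k-1},I(a_1)]],\cdots,I(a_k)]]$ and in particular $m_k=0$ for $k\ge 4$. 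This both proves the vanishing claim and tells you exactly which piece of $X_Q$ produces each $m_k$, which is more than the paper records. One small caution in your geometric identification of the pieces: the Bott-type action $\pi_F([e,f])$ of $\Gamma(E)$ on $\Gamma(F)$ has weight shift $0$ and therefore belongs to $d_0$ (it is the coefficient representation of your Chevalley--Eilenberg differential), while the mixed term that actually sits in $d_1$ is the $E$-component $\pi_E([e,f])$ of the bracket of a section of $E$ with a section of $F$ --- consistent with the $\xi(\pi_E([\psi(\cdots),a]),\cdots)$ terms appearing in $m_2$; this is a labelling slip that your ``patient expansion'' step would catch, not a gap in the argument.
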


\begin{remark}
For the special case described in Corollary \ref{cor4}, as $M=S$, $NS$ is trivial and $\mathfrak{a}=\mathfrak{a}'=\mathfrak{X}(p^*(F))$. The non-involvement of $p^*(NS)$ means the anchor condition is automatically satisfied and this is obviously true since $\rho(gr(\phi))\subset TS=TM$ always holds for all deformations $\phi:E\to F$.
\end{remark}

\begin{remark}
If the Lie algebroid $A$ degenerates to a Lie algebra, then $E$ degenerates to a Lie subalgebra and our Corollary \ref{cor1} and \ref{cor4} recover the result of deformations of Lie subalgebras in \cite{zambon}.
\end{remark}

\begin{remark}
\label{complex}
This corollary also applies to the deformations of complex Lie subalgebroid (see \cite{ping2} for definitions). As $S=M$, the step of finding tubular neighborhoods is skipped. All other constructions can be carried out parallely to complex bundles, so the complex version of Corollary \ref{cor4} also works.
\end{remark}

\begin{remark}
The $L_\infty$-algebra controlling the deformation of the Lie subalgebroid $E$ can also be constructed from the viewpoint of Poisson manifold. Let us show the idea briefly. The dual $A^*$ of the Lie algebroid $A$ is a Poisson manifold(\cite{weinstein}). Using the same local coordinates as before, the Poisson bivector field is of the form
\[
	\Pi|_U = \frac{1}{2}C_{ij}^k(x,y) \tilde{\xi}_k \frac{\partial}{\partial \tilde{\xi}_i} \wedge \frac{\partial}{\partial \tilde{\xi}_j}-b_i^t(x,y)\frac{\partial}{\partial \tilde{\xi}_i}\wedge\frac{\partial}{\partial x^t}.
\]
$E$ is a Lie subalgebroid is equivalent to $E^{\bot}\cong F^*$ is a coisotropic submanifold of $A^*$. (Here $E^{\bot}$ is defined to be the set $\bigcup_{p\in S}\{a^*\in A^*_p|a^*(e)=0 \quad \forall e\in E_p\}$) A.S.Cattaneo and G.Felder showed in \cite{cattaneo} that 
\begin{quote}
(*) given a Poisson manifold $(Q,\Pi)$ and a coisotropic submanifold $R$, there is an $L_\infty$-algebra structure on $\Gamma(\sum_{n=0}^{\infty}\wedge^n NR)$.
\end{quote}
F.Sch$\ddot{a}$tz and M.Zambon showed in \cite{zambon1} that 
\begin{quote}
(**) if the Poisson bivector field $\Pi$ is analytic along the fiber of $NR$, for any $\gamma\in\Gamma(NR)$, the graph $gr(\gamma)$ is a deformation of the coisotropic submanifold if and only if $\gamma$ is a Maurer-Cartan element of the $L_\infty$-algebra $\Gamma(\sum_{n=0}^{\infty}\wedge^n NR)$.
\end{quote}
In our case, $F^*$ is the coisotropic submanifold and it is not hard to see the normal bundle of $NF^*$ in $A^*$ is isomorphic to the pull-back of $E^*\oplus NS$ along $F\to S$, i.e.
\[
\begin{CD}
	pr^*(E^*\oplus NS)	@>>>	E^*\oplus NS	\\
	@VVV 		@VVV		\\
	F^* 	@>pr>> 	S.
\end{CD}
\]
By (*) $\Gamma (\sum_{n=0}^{\infty}\wedge^n pr^*(E^*\oplus NS))$ is an $L_\infty$-algebra. $(\sigma,\phi)\in \Gamma(NS\oplus (\tilde{E}^*|_{S_{\sigma}}\otimes \tilde{F}|_{S_{\sigma}}))$ is a deformation of the Lie subalgebroid $E$ if and only if $(gr(\phi))^{\bot}=gr(-{\phi}^*)$ is a coisotropic submanifold of $A^*$. Notice that $\phi^*:\tilde{F}^*|_{S_{\sigma}}\to \tilde{E}^*|_{S_{\sigma}}$ corresponds to a unique $Z_{\sigma,\phi}\in\Gamma(pr^*(E^*\oplus NS))$ by pull-backs similar to $(\sigma,\phi)\mapsto X_{\sigma,\phi}$, so $(\sigma,\phi)\mapsto -Z_{\sigma,\phi}$ defines an injective map from the set of the deformations of the Lie subalgebroid $E$ to $\Gamma(pr^*(E^*\oplus NS))$. Denote the image of this injective map by $\mathfrak{b}$. It can be checked $\mathfrak{b}$ is an $L_\infty$-subalgebra of $\Gamma (\sum_{n=0}^{\infty}\wedge^n pr^*(E^*\oplus NS))$. If we assume the Lie algebroid structure $X_Q$ is analytic along the fibers of $NS$, then the Poisson bivector $\Pi$ is analytic along the fibers of $NF^*$ and hence by (**) $(\gamma,\phi)$ gives a deformation of $E$ as Lie subalgebroids if and only if $-Z_{\sigma,\phi}$ is a Maurer-Cartan element of $\mathfrak{b}$.
\end{remark}

\section{Simultaneous deformations of Lie algebroids and their Lie subalgebroids}
\label{sec5}
\par By Theorem \ref{thm1} we have a DGLA $(\mathfrak{X}(A[1]),[[\cdot,\cdot]],[[X_Q,\cdot]])$ to control deformations of the Lie algebroid structure on $A$, and by Theorem \ref{thm2} we have an $L_\infty[1]$-algebra $\mathfrak{a}_{X_Q}^P$ to control the deformation of the Lie subalgebroid $E$. In this section we are going to combine the two structures to control the simultaneous deformation of the Lie algebroid $A$ and its Lie subalgebroid $E$. The tool we are going to use is Theorem \ref{v2} and Theorem \ref{zam1}.

\par By Propostion \ref{propv} $(\mathfrak{X}(A[1]),[[\cdot,\cdot]],\mathfrak{a},P)$ forms a V-algebra. By Theorem \ref{v2}, 
\[
	(\mathfrak{X}(A[1])[1]\oplus\mathfrak{a})_{X_Q}^P
\] 
is an $L_\infty[1]$-algebra. Then by Theorem \ref{zam1} and Remark \ref{convergence} we have
\begin{corollary}
\label{cor2}
If the homological vector field $X_Q$ is analytic along the fibers of $NS$ and $(\sigma,\phi)\in \Gamma(NS\oplus (\tilde{E}^*|_{S_\sigma}\otimes \tilde{F}|_{S_\sigma}))$ gives a deformation of the Lie subalgebroid $E$, then $\mathfrak{a}_{X_Q}^{P_{-I(X_{\sigma,\phi})}}$ is a flat $L_\infty[1]$-algebra such that for any $(\tilde{\sigma},\tilde{\phi})$,
\begin{eqnarray*}
	X_{\tilde{\sigma},\tilde{\phi}}\text{ is a MC-element of } \mathfrak{a}_{X_Q}^{P_{-I(X_{\sigma,\phi})}}\\
\qquad \Leftrightarrow (\sigma+\tilde{\sigma},\phi+\tilde{\phi}) \text{ is a new deformation of } E.
\end{eqnarray*}
Moreover, for any $\tilde{X}_Q\in ker(P)_1$ and any $(\tilde{\sigma},\tilde{\phi})\in \Gamma(NS\oplus (\tilde{E}^*|_{S_{\tilde{\sigma}}}\otimes \tilde{F}|_{S_{\tilde{\sigma}}}))$,
\[
(\tilde{X}_Q,X_{\tilde{\sigma},\tilde{\phi}})\text{ is a MC-element of }(\mathfrak{X}(A[1])[1]\oplus\mathfrak{a})_{X_Q}^{P_{-I(X_{\sigma,\phi})}}\Leftrightarrow
\]
\vspace{-0.5cm}
\[
				\left\{\begin{array}{l}
				X_Q+\tilde{X}_Q \text{ defines a Lie algebroid structure on } A\\
				(\sigma+\tilde{\sigma},\phi+\tilde{\phi})\text{ gives a Lie subalgebroid w.r.t. the new Lie algebroid structure.}
				\end{array}\right.
\]
\end{corollary}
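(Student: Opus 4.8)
The plan is to obtain Corollary \ref{cor2} as an assembly of Theorem \ref{v2}, Theorem \ref{zam1} (in the form of Remark \ref{convergence}), Theorem \ref{thm1} and Theorem \ref{thm2}; the content lies entirely in matching hypotheses and in checking that every infinite sum occurring converges under the stated analyticity assumption. First I would record the standing data. By Proposition \ref{propv}, $(\mathfrak{X}(A[1]),[[\cdot,\cdot]],\mathfrak{a},P)$ is a V-algebra, and since $E$ is a Lie subalgebroid the homological vector field $X_Q$ is tangent to $E[1]$, so $P(X_Q)=0$, i.e. $X_Q\in\ker(P)_1$, together with $[[X_Q,X_Q]]=0$. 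Hence Theorem \ref{v1} produces $\mathfrak{a}_{X_Q}^P$ and Theorem \ref{v2} produces the flat $L_\infty[1]$-algebra $(\mathfrak{X}(A[1])[1]\oplus\mathfrak{a})_{X_Q}^P$. I would also note that $(\sigma,\phi)\mapsto X_{\sigma,\phi}$ identifies the set of possible deformations of $E$ with the vector space $\mathfrak{a}_0$ in an additive manner, so that the shorthand ``$(\sigma+\tilde\sigma,\phi+\tilde\phi)$'' appearing in the statement is literally addition performed in $\mathfrak{a}_0$ and then transported back.

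Next I would bring in the deformation $(\sigma,\phi)$. Because $X_Q$ is analytic along the fibers of $NS$, Theorem \ref{thm2} says that $(\sigma,\phi)$ is a deformation of $E$ if and only if $-X_{\sigma,\phi}\in\mathfrak{a}_0$, equivalently $-I(X_{\sigma,\phi})\in V=\mathfrak{X}(A[1])$, is a Maurer-Cartan element of $\mathfrak{a}_{X_Q}^P$; by the criterion recalled just before Theorem \ref{zam1}, this is the same as $P_{-I(X_{\sigma,\phi})}(X_Q)=0$. That is exactly what is needed to invoke Theorem \ref{zam1} with $\Delta=X_Q$ and twisting element $-I(X_{\sigma,\phi})$. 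The nilpotency hypothesis there is replaced, as in Remark \ref{convergence}, by the analyticity of $X_Q$ along the fibers of $NS$: iterating the operator $[[\cdot,-I(X_{\sigma,\phi})]]$ amounts to Taylor-expanding along the normal directions to $E$ inside $A$ (compare the proof of Lemma \ref{lem1}), so analyticity makes the exponential $P_{-I(X_{\sigma,\phi})}=P\circ e^{[[\cdot,-I(X_{\sigma,\phi})]]}$ and all the ensuing Maurer-Cartan series well defined. The first bullet of Theorem \ref{zam1} then gives that $(\mathfrak{X}(A[1]),[[\cdot,\cdot]],\mathfrak{a},P_{-I(X_{\sigma,\phi})})$ is a V-algebra, that $X_Q$ still lies in $\ker(P_{-I(X_{\sigma,\phi})})_1$ (hence $\mathfrak{a}_{X_Q}^{P_{-I(X_{\sigma,\phi})}}$ is flat), and that $\tilde\psi\in\mathfrak{a}_0$ is a Maurer-Cartan element of $\mathfrak{a}_{X_Q}^{P_{-I(X_{\sigma,\phi})}}$ precisely when $-I(X_{\sigma,\phi})+\tilde\psi$ is one of $\mathfrak{a}_{X_Q}^P$. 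Taking $\tilde\psi=-X_{\tilde\sigma,\tilde\phi}$ and applying Theorem \ref{thm2} a second time, this becomes the statement that $(\sigma+\tilde\sigma,\phi+\tilde\phi)$ is a deformation of $E$; once the (single) minus sign is tracked honestly, this is the first assertion of the corollary.

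For the second assertion I would run the same twisting argument on the larger algebra. Theorem \ref{v2} supplies $(\mathfrak{X}(A[1])[1]\oplus\mathfrak{a})_{X_Q}^P$, and the second bullet of Theorem \ref{zam1} twists it to $(\mathfrak{X}(A[1])[1]\oplus\mathfrak{a})_{X_Q}^{P_{-I(X_{\sigma,\phi})}}$ and characterizes its Maurer-Cartan pairs $(\tilde X_Q,X_{\tilde\sigma,\tilde\phi})$ with $\tilde X_Q\in\ker(P)_1$ by the two conditions $[[X_Q+\tilde X_Q,X_Q+\tilde X_Q]]=0$ and a Maurer-Cartan condition relative to $\mathfrak{a}_{X_Q+\tilde X_Q}^P$. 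By Theorem \ref{thm1} the first condition is exactly the statement that $X_Q+\tilde X_Q$ is a new Lie algebroid structure on $A$. For the second, note that $\tilde X_Q\in\ker(P)_1$ gives $P(X_Q+\tilde X_Q)=0$, so $X_Q+\tilde X_Q$ is again tangent to $E[1]$; hence $E$ remains a subbundle on which the new bracket restricts, and Theorem \ref{thm2} applied to the Lie algebroid $X_Q+\tilde X_Q$ --- still analytic along the fibers of $NS$, since $\tilde X_Q$ is --- reads the second condition as: $(\sigma+\tilde\sigma,\phi+\tilde\phi)$ is a Lie subalgebroid with respect to the deformed structure. This gives the second assertion. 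The main obstacle I expect is not conceptual but a matter of discipline: keeping the embedding $I$, the degree shift $[1]$, the additivity of $(\sigma,\phi)\mapsto X_{\sigma,\phi}$, and the sign conventions of Theorems \ref{v2} and \ref{zam1} mutually consistent, and --- more substantively --- verifying that the analyticity of $X_Q$ (and of $X_Q+\tilde X_Q$) along the fibers of $NS$ really does make every series in sight converge, so that Remark \ref{convergence} applies uniformly and Theorem \ref{zam1} may be used legitimately without the nilpotency hypothesis.
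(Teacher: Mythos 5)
Your proposal is correct and follows essentially the same route as the paper, which likewise obtains Corollary \ref{cor2} by assembling Proposition \ref{propv}, Theorem \ref{v2}, Theorem \ref{zam1} (with the nilpotency hypothesis replaced via Remark \ref{convergence} by the analyticity assumption), and Theorems \ref{thm1} and \ref{thm2} to translate the Maurer--Cartan conditions back into deformations; the paper's own justification is exactly this one-paragraph assembly, so your version is just a more explicit spelling-out of it. The sign discrepancy you flag (the statement uses $X_{\tilde\sigma,\tilde\phi}$ where Theorem \ref{thm2} would suggest $-X_{\tilde\sigma,\tilde\phi}$) is already present in the paper's formulation and is not a defect of your argument.
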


\par Here $P_{-I(X_{\sigma,\phi})}=P\circ e^{[[\cdot,-I(X_{\sigma,\phi})]]}$(see \ref{exponents}). At this point we get an $L_\infty[1]$-algebra governing the simultaneous deformation the Lie algebroid $A$ and its Lie subalgebroid $E$. If we take $(\sigma,\phi)=(0,0)$, i.e. the trivial deformation the Lie subalgebroid $E$ ($E$ itself), then the Corollary \ref{cor2} reads
\begin{corollary}
If the homological vector field $X_Q$ is analytic along the fibers of $NS$, for any $\tilde{X}_Q\in ker(P)_1$ and any $(\tilde{\sigma},\tilde{\phi})\in \Gamma(NS\oplus (\tilde{E}^*|_{S_{\tilde{\sigma}}}\otimes \tilde{F}|_{S_{\tilde{\sigma}}}))$,
\[
(\tilde{X}_Q,X_{\tilde{\sigma},\tilde{\phi}})\text{ is a MC-element of }(\mathfrak{X}(A[1])[1]\oplus\mathfrak{a})_{X_Q}^P\Leftrightarrow
\]
\vspace{-0.5cm}
\[
				\left\{\begin{array}{l}
				X_Q+\tilde{X}_Q \text{ defines a Lie algebroid structure on } A\\
				(\tilde{\sigma},\tilde{\phi})\text{ gives a Lie subalgebroid w.r.t. the new Lie algebroid structure.}
				\end{array}\right.
\]
\end{corollary}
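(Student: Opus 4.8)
The plan is to obtain this statement as the special case $(\sigma,\phi)=(0,0)$ of Corollary \ref{cor2}. First I would observe that $(0,0)$ is always a deformation of the Lie subalgebroid $E$, namely the trivial one given by $E$ itself: the anchor condition $\rho(gr(0))=\rho(E)\subset TS$ and the bracket-closure condition hold precisely because $E$ is assumed to be a Lie subalgebroid. Hence, under the standing hypothesis that $X_Q$ is analytic along the fibers of $NS$, the hypothesis of Corollary \ref{cor2} is met with this choice.

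Next I would unwind the twisting. Under the correspondence $(\sigma,\phi)\mapsto X_{\sigma,\phi}$ the trivial deformation is sent to $X_{0,0}=0$, so $I(X_{0,0})=0$ and therefore $P_{-I(X_{0,0})}=P\circ e^{[[\cdot,0]]}=P$. Consequently the twisted V-algebra and the associated $L_\infty[1]$-algebras collapse to the untwisted ones, $\mathfrak{a}_{X_Q}^{P_{-I(X_{0,0})}}=\mathfrak{a}_{X_Q}^{P}$ and $(\mathfrak{X}(A[1])[1]\oplus\mathfrak{a})_{X_Q}^{P_{-I(X_{0,0})}}=(\mathfrak{X}(A[1])[1]\oplus\mathfrak{a})_{X_Q}^{P}$, while the ``new deformation'' $(\sigma+\tilde{\sigma},\phi+\tilde{\phi})$ appearing in Corollary \ref{cor2} becomes simply $(\tilde{\sigma},\tilde{\phi})$. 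Substituting all of this into the second equivalence of Corollary \ref{cor2} yields exactly the claimed statement.

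Alternatively, and at bottom this is the same argument one level down, one can invoke Theorem \ref{v2} and Theorem \ref{zam1} directly with the parameter $\phi=0$. Since $E$ is a Lie subalgebroid, $X_Q$ is tangent to $E[1]$, so $P(X_Q)=0$, i.e. $X_Q\in ker(P)_1$, and $[[X_Q,X_Q]]=0$; thus by Proposition \ref{propv} and Theorem \ref{v2} the object $(\mathfrak{X}(A[1])[1]\oplus\mathfrak{a})_{X_Q}^{P}$ is a flat $L_\infty[1]$-algebra, and $0\in\mathfrak{a}_0$ is a (trivially convergent) Maurer-Cartan element of $\mathfrak{a}_{X_Q}^{P}$ because $P_0(X_Q)=P(X_Q)=0$. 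The second bullet of Theorem \ref{zam1} with $\phi=0$ then states that $(\tilde{X}_Q,X_{\tilde{\sigma},\tilde{\phi}})$ is a Maurer-Cartan element of $(\mathfrak{X}(A[1])[1]\oplus\mathfrak{a})_{X_Q}^{P}$ if and only if $[[X_Q+\tilde{X}_Q,X_Q+\tilde{X}_Q]]=0$ and $X_{\tilde{\sigma},\tilde{\phi}}$ is a Maurer-Cartan element of $\mathfrak{a}_{X_Q+\tilde{X}_Q}^{P}$; by Theorem \ref{thm1} the first condition is exactly that $X_Q+\tilde{X}_Q$ defines a Lie algebroid structure on $A$, and by (the proof of) Theorem \ref{thm2} the second is exactly that $(\tilde{\sigma},\tilde{\phi})$ gives a Lie subalgebroid with respect to that new structure.

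I expect no genuine obstacle here; the two points that deserve a moment's care are (i) verifying that the analyticity of $X_Q$ along the fibers of $NS$ is precisely what guarantees convergence of $P_\phi$ and of the Maurer-Cartan series, so that Theorem \ref{zam1} applies without the nilpotency hypothesis as in Remark \ref{convergence}, and (ii) tracking the sign conventions in the identification between possible deformations $(\sigma,\phi)$ and the degree-$0$ elements $X_{\sigma,\phi}$ of $\mathfrak{a}$, so that the phrase ``$X_{\tilde{\sigma},\tilde{\phi}}$ is a Maurer-Cartan element'' in the conclusion is consistent with the $-X_{\sigma,\phi}$ appearing in Theorem \ref{thm2} and with the statement of Corollary \ref{cor2}.
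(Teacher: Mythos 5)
Your proposal is correct and follows exactly the paper's own derivation: the paper obtains this corollary by specializing Corollary \ref{cor2} to the trivial deformation $(\sigma,\phi)=(0,0)$, under which $X_{0,0}=0$ and the twisted projection $P_{-I(X_{0,0})}$ reduces to $P$. Your additional unwinding via Theorems \ref{v2} and \ref{zam1} is just the same argument one level down, and your two points of care (convergence via analyticity, sign conventions in $(\sigma,\phi)\mapsto X_{\sigma,\phi}$) are appropriate.
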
 

\par If we fix the base $S$ and consider the deformations of $E$ over $S$ only, by Corollary \ref{cor1} and Theorem \ref{zam1} we get another special case of Corollary \ref{cor2}. This time $(\mathfrak{X}(A[1]),[[\cdot,\cdot]],\mathfrak{a}',\tilde{P})$ is our V-algebra and $\tilde{P}_{-I(X_{0,\phi})}=P_1\circ P_{-I(X_{0,\phi})}$.

\begin{corollary}
\label{cor3}
If $\phi\in \Gamma(E^*\otimes F)$ gives a deformation of the Lie subalgebroid $E$ over $S$, then ${\mathfrak{a}'}_{X_Q}^{P_{-I(X_{0,\phi})}}$ is a flat $L_\infty[1]$-algebra such that for any $\tilde{\phi}\in \Gamma(E^*\otimes F)$
\begin{eqnarray*}
	X_{0,\phi}\text{ is a MC-element of } {\mathfrak{a}'}_{X_Q}^{\tilde{P}_{-I(X_{0,\phi})}}\Leftrightarrow \phi+\tilde{\phi} \text{ is a new deformation of } E\text{ over } S.
\end{eqnarray*}
Moreover, for any $\tilde{X}_Q\in ker(P)_1$ and any $\tilde{\phi}\in \Gamma(E^*\otimes F)$,
\[
(\tilde{X}_Q,X_{0,\tilde{\phi}})\text{ is a MC-element of }(\mathfrak{X}(A[1])[1]\oplus\mathfrak{a}')_{X_Q}^{\tilde{P}_{-I(X_{0,\phi})}}\Leftrightarrow
\]

\vspace{-0.5cm}
\[
				\left\{\begin{array}{l}
				X_Q+\tilde{X}_Q \text{ defines a new Lie algebroid structure on } A\\
				\phi+\tilde{\phi}\text{ gives a Lie subalgebroid w.r.t. the new Lie algebroid structure over }S.
				\end{array}\right.
\]
\end{corollary}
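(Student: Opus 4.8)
The plan is to obtain both assertions by twisting, via the Fr\'egier--Zambon result (Theorem~\ref{zam1}), the $V$-algebra $(\mathfrak{X}(A[1]),[[\cdot,\cdot]],\mathfrak{a}',\tilde P)$ of Corollary~\ref{cor1}: I take as Maurer--Cartan element of this $V$-algebra $\Delta=X_Q$, and as twisting element the degree-$0$ section $-X_{0,\phi}\in\mathfrak{a}'_0$ that the given deformation $\phi\colon E\to F$ determines.

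First I would verify the hypotheses of Theorem~\ref{zam1}. As $E$ is a Lie subalgebroid, $X_Q$ is tangent to $E[1]$, so $P(X_Q)=0$ and hence $\tilde P(X_Q)=P_1(P(X_Q))=0$; together with $[[X_Q,X_Q]]=0$ this puts $X_Q$ in $\ker(\tilde P)_1$ as a Maurer--Cartan element of the $V$-algebra, so ${\mathfrak{a}'}_{X_Q}^{\tilde P}$ and $(\mathfrak{X}(A[1])[1]\oplus\mathfrak{a}')_{X_Q}^{\tilde P}$ are the flat $L_\infty[1]$-algebras of Corollary~\ref{cor1} and of Theorem~\ref{v2}. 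The hypothesis that $gr(\phi)$ is a Lie subalgebroid over $S$ is, by Corollary~\ref{cor1}, exactly the statement that $-X_{0,\phi}$ is a Maurer--Cartan element of ${\mathfrak{a}'}_{X_Q}^{\tilde P}$. The remaining point is convergence of the twist $e^{[[\cdot,-I(X_{0,\phi})]]}$: I would check that $\mathrm{ad}_{I(X_{0,\phi})}$ is nilpotent on $\mathfrak{X}(A[1])$. Assigning weight $1$ to the fibre coordinates of $A[1]$ coming from $E^*$ and weight $0$ to every other generator gives a grading of $\mathfrak{X}(A[1])$ which takes, on nonzero homogeneous vector fields, only the values $-1,\dots,\mathrm{rank}(E)$, and under which $I(X_{0,\phi})=\sum_{i,j}\phi^j_i\,\tilde\xi^i\wedge\partial/\partial\tilde\xi^j$ is homogeneous of weight $+1$; hence $(\mathrm{ad}_{I(X_{0,\phi})})^{\mathrm{rank}(E)+2}=0$, the exponential is a finite sum, and $-X_{0,\phi}$ is nilpotent in the sense of Theorem~\ref{zam1}. (This is why, unlike in Corollary~\ref{cor2}, no analyticity assumption on $X_Q$ is needed here; the structure maps of ${\mathfrak{a}'}_{X_Q}^{\tilde P}$ are moreover polynomial, only $m_1,m_2,m_3$ surviving.)

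With this in hand, the first assertion is the first bullet of Theorem~\ref{zam1}: $(\mathfrak{X}(A[1]),[[\cdot,\cdot]],\mathfrak{a}',\tilde P_{-I(X_{0,\phi})})$ is again a $V$-algebra, ${\mathfrak{a}'}_{X_Q}^{\tilde P_{-I(X_{0,\phi})}}$ is a flat $L_\infty[1]$-algebra, and $-X_{0,\tilde\phi}$ is a Maurer--Cartan element of it iff $-X_{0,\phi}+(-X_{0,\tilde\phi})=-X_{0,\phi+\tilde\phi}$ is a Maurer--Cartan element of ${\mathfrak{a}'}_{X_Q}^{\tilde P}$, which by Corollary~\ref{cor1} says that $\phi+\tilde\phi$ is a new deformation of $E$ over $S$. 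For the ``moreover'' part I would feed the companion algebra $(\mathfrak{X}(A[1])[1]\oplus\mathfrak{a}')_{X_Q}^{\tilde P_{-I(X_{0,\phi})}}$ into the second bullet of Theorem~\ref{zam1}: for $\tilde X_Q\in\ker(P)_1\subset\ker(\tilde P)_1$ and $\tilde\phi\in\Gamma(E^*\otimes F)$, the pair $(\tilde X_Q,-X_{0,\tilde\phi})$ is a Maurer--Cartan element iff $[[X_Q+\tilde X_Q,X_Q+\tilde X_Q]]=0$ and $-X_{0,\phi+\tilde\phi}$ is a Maurer--Cartan element of ${\mathfrak{a}'}_{X_Q+\tilde X_Q}^{\tilde P}$. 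By Theorem~\ref{thm1} the former means $X_Q+\tilde X_Q$ defines a Lie algebroid structure on $A$; and since $\tilde X_Q\in\ker(P)$, the vector field $X_Q+\tilde X_Q$ is still tangent to $E[1]$, so $E$ remains a Lie subalgebroid for $X_Q+\tilde X_Q$ and Corollary~\ref{cor1} applied to this new structure rewrites the latter condition as ``$gr(\phi+\tilde\phi)$ is a Lie subalgebroid over $S$ with respect to $X_Q+\tilde X_Q$''.

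The only genuinely non-formal ingredient is the nilpotency of $\mathrm{ad}_{I(X_{0,\phi})}$ sketched above; everything else is a translation through Corollary~\ref{cor1}, Theorem~\ref{thm1} and Theorem~\ref{zam1}, together with the routine bookkeeping of which projection ($P$, $P_1$, $\tilde P=P_1\circ P$, or their $e^{[[\cdot,-I(X_{0,\phi})]]}$-twists) and which of $\mathfrak{a}'\subset\mathfrak{a}\subset\mathfrak{X}(A[1])$ enters each cited statement, and of the sign dictionary $\phi\leftrightarrow -X_{0,\phi}$ of Corollary~\ref{cor1}.
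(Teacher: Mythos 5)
Your proposal is correct and follows the same route as the paper, which obtains the corollary by feeding the $V$-algebra $(\mathfrak{X}(A[1]),[[\cdot,\cdot]],\mathfrak{a}',\tilde P)$ of Corollary \ref{cor1} with the Maurer--Cartan element $-X_{0,\phi}$ into Theorem \ref{zam1} (and Theorem \ref{v2} for the ``moreover'' part). Your explicit weight-grading argument for the nilpotency of $\mathrm{ad}_{I(X_{0,\phi})}$ is a welcome justification of the convergence of the twist that the paper only addresses implicitly (via the observation that $m_k=0$ for $k\ge 4$ and the dispensability of the analyticity hypothesis in the fixed-base case), and your reading of the signs, with $-X_{0,\tilde\phi}$ as the Maurer--Cartan element of the twisted algebra, is the one consistent with Theorem \ref{thm2} and Corollary \ref{cor1}.
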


\begin{remark}
If the Lie algebroid $A$ degenerates to a Lie algebra and the Lie subalgebroid $E$ degenerates to a Lie subalgebra, our Corollary \ref{cor3} recovers the result of Y. Fr$\acute{e}$gier and M. Zambon. The $L_\infty[1]$-algebra controlling the simultaneous deformation of the Lie algebra and the Lie subalgebra in \cite{zambon} becomes a special case of our result.
\end{remark}

\section{Applications}
\label{sec6}
\subsection{Deformation of Foliations}
\par Consider the smooth manifold $M$ and a foliation $\mathcal{F}$ over it. According to Frobenius' Theorem, the foliation $\mathcal{F}$ is equivalent to a unique integrable distribution $D$, i.e. a subbundle $D\subset TM$ satisfying $[\Gamma(D),\Gamma(D)]\subset\Gamma(D)$. Here the bracket $[\cdot,\cdot]$ is the usual Schouten-Nijenhuis bracket of vector fields.

\par A deformation of $\mathcal{F}$ is just the deformation of $D$ as integrable distributions. First we choose a subbundle $F\subset TM$ such that $TM=D\oplus F$. Define the deformation of the integrable distribution $D$ to be the graph of some bundle map $\Phi:D\to F$ which is also integrable.

\par Notice that $(TM,[\cdot,\cdot],Id_{TM})$ is a Lie algebroid and a subbundle $E\subset TM$ over $M$ is a Lie subalgebroid if and only if $E$ is integrable, thus deformations of the integrable distribution $D$ are the same as deformations of the Lie subalgebroid $D$. In this case, the base manifold $S$ of our Lie subalgebroid is $M$ and we can apply Corollary \ref{cor4}.

\begin{proposition}
There is a flat $L_\infty[1]$-algebra structure on $\mathbb{L}=\Gamma(\sum_{n=0}^{\infty}\wedge^*D\otimes F)[1]$ with the structure maps the same as in Corollary \ref{cor4}. The graph $gr(\Phi)$ of some $\Phi\in \Gamma(D^*\otimes F)$ is a deformation of the distribution $D$ if and only if $\Phi[1]$ is a Maurer-Cartan element of this $L_\infty[1]$-algebra.
\end{proposition}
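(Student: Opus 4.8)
The plan is to read the statement as an immediate specialization of Corollary \ref{cor4} to the tangent Lie algebroid, so that almost all the content is the bookkeeping needed to see that the two notions of deformation coincide.

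First I would record the elementary observation that $(TM,[\cdot,\cdot],Id_{TM})$ is a Lie algebroid, and that for a subbundle $E\subset TM$ defined over all of $M$ the two clauses in the definition of a Lie subalgebroid degenerate: the anchor clause $\rho(E)\subset TS$ becomes $E\subset TM$, which is automatic, and since $S=M$ every section of $E$ is its own extension, so the bracket clause $[\tilde e_1,\tilde e_2]|_S\in\Gamma(E)$ is simply $[\Gamma(E),\Gamma(E)]\subset\Gamma(E)$. Hence a subbundle $D\subset TM$ is a Lie subalgebroid of the tangent algebroid over $S=M$ exactly when $D$ is involutive, i.e.\ (by Frobenius) exactly when $D$ is the integrable distribution associated to a foliation.

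Next I would fix the complement $F$ with $TM=D\oplus F$ chosen above and apply Corollary \ref{cor4} with $A=TM$ and $E=D$. Here the homological vector field $X_Q$ on $TM[1]$ is the de Rham differential — this is (\ref{X_Qdef}) with $\rho=Id$ and $[v_i,v_j]$ the Lie bracket of vector fields, which is Cartan's formula for $d$ — so Corollary \ref{cor4} directly yields the flat $L_\infty[1]$-algebra structure on $\mathbb{L}$ with the asserted maps $m_1,m_2,m_3$ and $m_k=0$ for $k\geq 4$. I would note explicitly that no analyticity hypothesis is needed, because $S=M$ forces $NS$ to be the zero bundle, so the convergence issues behind Lemma \ref{lem1} do not arise; this is exactly the situation already flagged in the discussion after Corollary \ref{cor1} and in the remark following Corollary \ref{cor4}.

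Finally I would match the Maurer-Cartan condition with integrability. By the discussion preceding Corollary \ref{cor1} (fixing $\sigma=0$), a possible deformation of $D$ over the fixed base $M$ is precisely a bundle map $\Phi\in\Gamma(D^*\otimes F)$, and Corollary \ref{cor4} says $gr(\Phi)$ is a Lie subalgebroid of $TM$ if and only if $\Phi[1]$ is a Maurer-Cartan element of $\mathbb{L}$. Combining this with the first paragraph, $gr(\Phi)$ is a Lie subalgebroid iff $gr(\Phi)$ is involutive, i.e.\ iff $gr(\Phi)$ is a deformation of the distribution $D$ in the sense defined for foliations, which is the claim. The only point that demands genuine care — the ``main obstacle,'' such as it is — is the first paragraph: one must be certain that the anchor clause contributes nothing here, so that deforming $D$ as a distribution and deforming it as a Lie subalgebroid are literally the same problem with no extra constraint coming from $NS$, and that $X_Q$ for the tangent algebroid really is the de Rham differential so that the structure maps of Corollary \ref{cor4} are the ones stated. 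Everything else is a verbatim transcription of Corollary \ref{cor4}.
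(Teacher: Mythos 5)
Your proposal is correct and follows the same route as the paper: identify $(TM,[\cdot,\cdot],Id_{TM})$ as a Lie algebroid, observe that a subbundle over the whole base $S=M$ is a Lie subalgebroid precisely when it is involutive (the anchor clause being vacuous), and then invoke Corollary \ref{cor4} directly, with no analyticity hypothesis needed since $NS$ is trivial. The extra details you supply (that $X_Q$ is the de Rham differential, and that the two notions of deformation literally coincide) are exactly the implicit content of the paper's preparatory paragraph.
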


\par Given a smooth family $\Psi:[0,1]\times D\to F$ of deformations of the integrable distribution $D$, i.e. $\Psi$ is smooth and for any $t\in[0,1]$, the graph of the bundle map $\Psi_t=\Psi|_{\{t\}\times D}:D\to F$ is an integrable distribution. An infinitesimal deformation of the foliation $\mathcal{F}$ is defined to be a bundle map of the form $\frac{d \Psi_t}{dt}|_{t=0}$. It is well-known the first order cohomology of a flat $L_\infty[1]$-algebra describes the first order obstruction of the deformation. For more details on the obstruction theory of deformations see \cite{kieserman}.

\begin{proposition}
The graph of a bundle map $\psi:D\to F$ gives an infinitesimal deformation of the integrable distribution $D$ if and only if $\psi$ is closed in the cochain complex $(\mathbb{L},m_1)$ (see \ref{cohomology}), in other words, $m_1(\psi)=0$ or 
\[
	\pi_F([\psi(v),w]+[v,\psi(w)])=\psi([v,w])
\]
for any $v,w\in \Gamma(D)$.
\end{proposition}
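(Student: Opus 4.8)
The plan is to read the statement off from the previous proposition, which identifies deformations of $D$ with Maurer--Cartan elements of the flat $L_\infty[1]$-algebra $\mathbb{L}=\Gamma(\sum_{n\ge 0}\wedge^n D^*\otimes F)[1]$, and then to differentiate. For the ``only if'' direction, let $\Psi_t$ be a smooth family of integrable bundle maps $D\to F$ with $\Psi_0=0$ and $\tfrac{d}{dt}\big|_{0}\Psi_t=\psi$. By that proposition each $\Psi_t[1]$ satisfies the Maurer--Cartan equation $\sum_{k=1}^{3}\tfrac{1}{k!}m_k(\Psi_t[1],\dots,\Psi_t[1])=0$. I would differentiate this at $t=0$: since the $m_k$ are multilinear, the $t$-derivative of $m_2(\Psi_t[1],\Psi_t[1])$ is $m_2(\psi[1],\Psi_0[1])+m_2(\Psi_0[1],\psi[1])$ and that of $m_3(\Psi_t[1],\Psi_t[1],\Psi_t[1])$ is $3\,m_3(\psi[1],\Psi_0[1],\Psi_0[1])$, both of which vanish because $\Psi_0=0$. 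What remains is $m_1(\psi[1])=0$, so every infinitesimal deformation is an $m_1$-cocycle.

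Next I would unwind $m_1(\psi[1])=0$ into the displayed identity using the explicit formula for $m_1$ from Corollary \ref{cor4}, evaluated on $\psi\in\Gamma(D^*\otimes F)=\Gamma(\wedge^1 D^*\otimes F)$ and two sections $v,w\in\Gamma(D)$. The first sum runs over $S_{1,1}=\{\mathrm{id},(1\,2)\}$ and gives $\pi_F([\psi(v),w])-\pi_F([\psi(w),v])$, which equals $\pi_F([\psi(v),w]+[v,\psi(w)])$ by skew-symmetry of the Lie bracket of vector fields; the second sum contributes $-\psi(\pi_D([v,w]))$. Here integrability of $D$ enters: $[v,w]\in\Gamma(D)$, so $\pi_D([v,w])=[v,w]$, and the vanishing of $m_1(\psi[1])$ becomes exactly $\pi_F([\psi(v),w]+[v,\psi(w)])=\psi([v,w])$ for all $v,w\in\Gamma(D)$. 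Conversely, $m_1(\psi[1])\in\Gamma(\wedge^2 D^*\otimes F)$ is determined by its values on pairs of sections, so this identity, if it holds for all $v,w$, forces $m_1(\psi[1])=0$.

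For the ``if'' direction one must produce, from a given $m_1$-cocycle $\psi$, an actual smooth family $\Psi_t$ of integrable bundle maps with $\Psi_0=0$ and $\tfrac{d}{dt}\big|_{0}\Psi_t=\psi$. The path $t\mapsto t\psi$ already solves the Maurer--Cartan equation to first order in $t$ precisely because $m_1(\psi[1])=0$; the difficulty is to correct it at higher orders into an honest family, where the successive obstructions lie in $H^2(\mathbb{L},m_1)$ (cf.\ Remark \ref{cohomology} and standard obstruction theory). This is the step I expect to be the real obstacle: one either restricts to genuinely first-order deformations, for which the equivalence is immediate from the computation above, or invokes unobstructedness (or a convergence argument in the analytic setting) to integrate the cocycle to a curve, after which the ``only if'' computation shows its derivative at $0$ is $\psi$.
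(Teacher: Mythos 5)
Your computation is exactly the argument the paper leaves implicit: the paper states this proposition without proof, appealing only to the remark that the first cohomology of the flat $L_\infty[1]$-algebra describes first-order deformations. Your two steps --- differentiating the Maurer--Cartan equation $\sum_{k=1}^{3}\tfrac{1}{k!}m_k(\Psi_t[1],\dots,\Psi_t[1])=0$ at $t=0$ with $\Psi_0=0$ so that only $m_1(\psi[1])=0$ survives, and then unwinding $m_1$ on a degree-one element via the formula of Corollary \ref{cor4} (the $S_{1,1}$-sum giving $\pi_F([\psi(v),w])-\pi_F([\psi(w),v])$ and the second sum giving $-\psi(\pi_D([v,w]))=-\psi([v,w])$ by integrability of $D$) --- are both correct and are precisely how the displayed identity arises.

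The one substantive issue is the one you yourself flag, and you are right to flag it: with the paper's literal definition of an infinitesimal deformation as $\tfrac{d}{dt}\big|_{0}\Psi_t$ for an \emph{actual} smooth family of integrable graphs, the ``if'' direction is not a formal consequence of $m_1(\psi[1])=0$. Integrating a $1$-cocycle to a genuine family meets obstructions in $H^2(\mathbb{L},m_1)$, and deformations of foliations are obstructed in general, so the converse as literally stated would fail without an unobstructedness hypothesis. The proposition is therefore to be read at first order --- $\psi$ is a cocycle if and only if $t\psi$ solves the Maurer--Cartan equation modulo $t^2$ --- which is consistent with the paper's preceding sentence about the ``first order obstruction'' and makes the equivalence immediate from your computation. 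Your proposal handles this honestly by separating the two readings; nothing in your argument is wrong, and the first-order reading is the one under which the statement is a theorem.
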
 

\par Huebschmann (\cite{huebschmann}) first proved that a Lie subalgebroid in a Lie algebroid gives rise to a "quasi-Lie-Rinehart algebra", which is a homotopy version of a Lie-Rinehart algebra. More recently, Vitagliano (\cite{luca}) remarked that quasi-Lie-Rinehart algebras are strong homotopy Lie-Rinehart algebras and described the strong homotopy Lie-Rinehart algebra of a foliation in terms of Frolicher-Nijenhuis calculus. His approach applies to Lie subalgebroids in general as well. Finally, he presented the brackets in a strong homotopy Lie-Rinehart algebra as higher derived brackets, and, in this way, he could relate Huebschmann's results to ours.

\subsection{Deformation of Complex Submanifolds}
\par Suppose $X$ is a complex manifold with the canonical almost complex structure $J\in \Gamma(T^*X \otimes TX)$ associated with its complex structure. Let $T_{\mathbb{C}}X=TX\otimes\mathbb{C}$ and $\mathcal{J}$ the $\mathbb{C}$-linear extension of $J$. $T_{\mathbb{C}}X=T^{1,0}X\oplus T^{0,1}X$ where $T^{1,0}X$ and $T^{0,1}X$ are the $\mathcal{J}$-eigenbundles of $i$ and $-i$ separately. Let $\pi_{1,0}:T_{\mathbb{C}}X\to T^{1,0}X$ and $\pi_{0,1}:T_{\mathbb{C}}X\to T^{0,1}X$ stand for the projections.

\par If deforming the almost complex structure $J$ into some $J'\in \Gamma(T^*X \otimes TX)$, the $\mathbb{C}$-linear extension $\mathcal{J}'$ of $J'$ determines another decomposition $T_{\mathbb{C}}X=(T^{1,0}X)'\oplus (T^{0,1}X)'$ which satisfies $(T^{1,0}X)'$ and $(T^{0,1}X)'$ are conjugate to each other, i.e. $\overline{(T^{1,0}X)'}=(T^{0,1}X)'$. The almost complex structure $J'$ and the decomposition $T_{\mathbb{C}}X=(T^{1,0}X)'\oplus (T^{0,1}X)'$ are in one-one correspondence. As $J'$ is not far away from $J$, the map $(T^{0,1}X)'\overset{\pi_{0,1}}{\longrightarrow} T^{0,1}X$ is an isomorphism and $J$ determines a map $\xi_{J'}:T^{0,1}X\overset{(\pi_{0,1})^{-1}}{\longrightarrow} (T^{0,1}X)'\overset{\pi_{1,0}}{\longrightarrow}T^{1,0}X$. Here $\xi_{J'}$ can be treated as an element in $\Gamma((T^{0,1}X)^* \otimes T^{1,0}X)$. Conversely, given an element $\xi\in\Gamma((T^{0,1}X)^* \otimes T^{1,0}X)$, $gr(\xi)$ and $\overline{gr(\xi)}$ forms a decomposition of $T_{\mathbb{C}}X$ if and only if
\begin{eqnarray}
\label{invertible}
	Id-\bar{\xi}\circ \xi:T^{0,1}X\to T^{0,1}X\text{ is invertible.}
\end{eqnarray}
Here $\bar{\xi}$ is the conjugate of $\xi$. Besides, a complex structure is just an almost complex structure which is integrable, so we make the following definition.

\begin{definition}
A deformation of the almost complex structure $J$ is an element $\xi\in\Gamma((T^{0,1}X)^* \otimes T^{1,0}X)$ satisfying \ref{invertible}. A deformation of the complex structure of $X$ is a deformation of the almost complex structure $J$ which is integrable.
\end{definition}

\par $T_{\mathbb{C}}X$ is a complex Lie algebroid(\cite{ping2}). By the Newlander-Nirenberg theorem(\cite{kodaira1}), the almost complex structure $J'$ is integrable if and only if $gr(\xi_{J'})=(T^{0,1}X)'$ is a Lie subalgebroid of $T_{\mathbb{C}}X$, so $\xi\in\Gamma((T^{0,1}X)^* \otimes T^{1,0}X)$ satisfying (\ref{invertible}) is a deformation of the complex structure on $X$ if and only if $gr(\xi)$ is a Lie subalgebroid. Notice $T^{0,1}X$ is a Lie subalgebroid as $J$ is integrable, so deforming the complex structure $J$ can be translated into deforming the Lie subalgebroids $T^{0,1}X\subset T_{\mathbb{C}}X$. Since $T^{0,1}X$ is over the whole base space $X$, we can apply Corollary \ref{cor4} (see Remark \ref{complex}) to get  $\Omega^{0,1}(X,T^{1,0}X)=\Gamma(\sum_{n=0}^{\infty}\wedge^n(T^{0,1}X)^*\otimes T^{1,0}X)$ is a flat $L_\infty$-algebra which controls the deformation of the complex structure. Moreover except $m_k=0$ for $k\ge 4$, $m_3$ also vanishes in this special case.
\begin{proposition}
There is a differential graded Lie algebra structure on $\Omega^{0,1}(X,T^{1,0}X)$ with differential $d=\bar{\partial}$:
\[
	\tau^j_{i_1,\cdots, i_k}d\bar{z}^{i_1}\wedge\cdots\wedge d\bar{z}^{i_k}\otimes\frac{\partial}{\partial z^j}\mapsto \frac{\partial\tau^j_{i_1,\cdots, i_k}}{\partial \bar{z}^l} d\bar{z}^l\wedge d\bar{z}^{i_1}\wedge\cdots\wedge d\bar{z}^{i_k}\otimes\frac{\partial}{\partial z^j}
\]
and the bracket $[\cdot,\cdot]:\Gamma(\wedge^k(T^{0,1}X)^*\otimes T^{1,0}X)\times \Gamma(\wedge^l(T^{0,1}X)^*\otimes T^{1,0}X)\to \Gamma(\wedge^{k+l}(T^{0,1}X)^*\otimes T^{1,0}X)$ defined by
\[
	[\phi,\psi]=d\bar{z}^I\wedge d\bar{z}^J\otimes[\phi_I^k(z,\bar{z})\frac{\partial}{\partial z^k},\psi_J^l(z,\bar{z})\frac{\partial}{\partial z^l}]
\]
Moreover $\xi\in\Gamma((T^{0,1}X)^* \otimes T^{1,0}X)$ satisfying (\ref{invertible}) is a deformation of the complex structure if and only if
\[
	\bar{\partial}\xi+\frac{1}{2}[\xi,\xi]=0.
\]
\end{proposition}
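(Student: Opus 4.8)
The plan is to specialize the flat $L_\infty[1]$-algebra of Corollary \ref{cor4} to the triple $(A,E,F) = (T_{\mathbb{C}}X, T^{0,1}X, T^{1,0}X)$. Here $S = M = X$, the anchor of $T_{\mathbb{C}}X$ is the $\mathbb{C}$-linear extension of the identity (so the Lie algebroid bracket is the ordinary bracket of complex vector fields), integrability of $J$ makes both $T^{1,0}X$ and $T^{0,1}X$ involutive, and $T^{0,1}X$ is a Lie subalgebroid over all of $X$ with $T^{1,0}X$ as complement; thus Corollary \ref{cor4} applies and already gives $m_k = 0$ for $k \geq 4$ together with explicit formulas for $m_1, m_2, m_3$. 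The remaining work is: (i) kill $m_3$; (ii) identify $m_1$ with $\bar{\partial}$ and $m_2$ with the stated bracket; (iii) read off the Maurer--Cartan equation and combine it with the already-established translation of ``deformation of the complex structure'' into ``$gr(\xi)$ is a Lie subalgebroid''.

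First I would show $m_3 = 0$. Each summand of the Corollary \ref{cor4} formula for $m_3$ contains a factor $\pi_E([\,\cdot\,,\,\cdot\,])$ whose two entries are values of $\xi,\psi$ or $\phi$, hence lie in $\Gamma(F) = \Gamma(T^{1,0}X)$; by involutivity $[\Gamma(T^{1,0}X),\Gamma(T^{1,0}X)] \subseteq \Gamma(T^{1,0}X) = \ker \pi_E$, so the factor vanishes and $m_3 = 0$. Combined with $m_k = 0$ for $k \geq 4$ and flatness ($m_0 = 0$), only $m_1$ and $m_2$ survive, so undoing the degree shift yields a differential graded Lie algebra on $\Gamma(\sum_{n\ge 0}\wedge^n(T^{0,1}X)^*\otimes T^{1,0}X)$ with $d := m_1$ and $[\cdot,\cdot] := m_2$ (up to the conventional overall signs of the shift); the DGLA axioms are then automatic from the $L_\infty[1]$ identities, and an element of degree $1$, i.e. $\phi \in \Gamma((T^{0,1}X)^*\otimes T^{1,0}X)$, is a Maurer--Cartan element exactly when $m_1(\phi) + \tfrac12 m_2(\phi,\phi) = 0$.

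Next I would identify $m_1$ and $m_2$ by a local computation in holomorphic coordinates $z^j$. The coordinate vector fields $\partial/\partial\bar{z}^j$ form a commuting local frame of $T^{0,1}X$, and since $[\partial/\partial z^a,\partial/\partial\bar{z}^b] = 0$ one has $[w,\partial/\partial\bar{z}^b] = -(\partial w^a/\partial\bar{z}^b)\,\partial/\partial z^a \in \Gamma(T^{1,0}X)$ for every $w = w^a\,\partial/\partial z^a \in \Gamma(T^{1,0}X)$. Evaluating $m_1(\xi)$ on $(\partial/\partial\bar{z}^{i_1},\dots,\partial/\partial\bar{z}^{i_{k+1}})$: the second sum in Corollary \ref{cor4} vanishes since $\pi_E[\partial/\partial\bar{z}^i,\partial/\partial\bar{z}^j] = 0$, and in the first sum $\pi_F$ acts as the identity by the observation above, so it collapses to the shuffle sum computing the $\bar{z}$-derivatives of the components of $\xi$ — that is, $\bar{\partial}\xi$ in the stated form, up to the conventional overall sign. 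Evaluating $m_2(\xi,\psi)$ on such a frame: its second and third sums each contain a factor $\pi_E([\,\cdot\,,\partial/\partial\bar{z}^b])$ with $T^{1,0}X$-valued first entry, hence vanish by the same observation, while the first sum reduces to the shuffle sum of $\pi_F([\xi(\cdots),\psi(\cdots)]) = [\xi^a_I\,\partial/\partial z^a,\psi^b_J\,\partial/\partial z^b]$, which is precisely $[\xi,\psi]$ of the statement evaluated on the frame. Since in each case both sides are sections of $\wedge^\bullet(T^{0,1}X)^*\otimes T^{1,0}X$, agreement on a frame gives equality; fixing the single overall sign so that the outcome matches the classical Kodaira--Spencer normalization, we get $d = \bar{\partial}$ and the stated bracket.

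Finally, Corollary \ref{cor4} says $gr(\xi)$ is a Lie subalgebroid of $T_{\mathbb{C}}X$ iff $\xi[1]$ is a Maurer--Cartan element, which by the above is the equation $\bar{\partial}\xi + \tfrac12[\xi,\xi] = 0$; since an element $\xi \in \Gamma((T^{0,1}X)^*\otimes T^{1,0}X)$ satisfying (\ref{invertible}) is a deformation of the complex structure of $X$ iff $gr(\xi)$ is a Lie subalgebroid (established before the proposition), the claim follows. The main obstacle is entirely bookkeeping: matching Koszul signs and shuffle indices between the general formulas of Corollary \ref{cor4} and the compact Dolbeault expressions, and pinning down the one overall sign; the conceptual content — that $m_{\geq 3}$ vanishes and that every ``$\pi_E$ of a bracket of $(1,0)$-vector fields'' term in $m_1,m_2$ vanishes on a coordinate frame — is a direct consequence of the integrability of $J$ and the commutativity of the coordinate frame of $T^{0,1}X$.
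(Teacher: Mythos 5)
Your proposal follows essentially the same route as the paper: specialize (the complex version of) Corollary \ref{cor4} to $(A,E,F)=(T_{\mathbb{C}}X,\,T^{0,1}X,\,T^{1,0}X)$, use involutivity of $T^{1,0}X$ to kill $m_3$ and the $\pi_E$-terms of $m_1,m_2$ on a holomorphic coordinate frame, and identify the surviving terms with $\bar{\partial}$ and the Dolbeault-type bracket, whence the Maurer--Cartan equation. The paper asserts this specialization with almost no computation, so your local verification is a correct filling-in of the same argument rather than a different approach.
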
 

\par This is a classical result of K.Kadaira. For a systematic development of the deformation theory of complex manifolds, see \cite{kodaira1}, \cite{kodaira}.

\subsection{Deformation of homomorphisms of Lie algebroids}
\par Let $(A,[\cdot,\cdot]_1,\rho_1)$ and $(B,[\cdot,\cdot]_2,\rho_2)$ be two Lie algebroids over $M$ and $N$ separately. A homomorphism of Lie algebroids from $A$ to $B$ is a bundle map $\Psi:A\to B$ with a smooth map $\psi:M\to N$ induced by $\Psi$ satisfying
\begin{itemize}
\item $\Psi$ induces a homomorphism of Lie algebras $\Gamma(A)\to \Gamma(B)$ (also denoted by $\Psi$);
\item The following diagram commutes
\[
\begin{CD}
		\Gamma(A) @>\Psi>> \Gamma(B)\\
		@V\rho_1VV	   @VV\rho_2V\\
		TM  @>\psi_*>> TN.
\end{CD}
\] 
\end{itemize}

\par $A\oplus B$ is a Lie algebroid over $M\times N$ with bracket $[\cdot,\cdot]=[\cdot,\cdot]_1+[\cdot,\cdot]_2$ and anchor $\rho=\rho_1\times \rho_2$, i.e. for any $a,a'\in\Gamma(A)$ and $b,b'\in\Gamma(B)$
\begin{eqnarray*}
	[(a,b),(a',b')]=([a,a']_1,[b,b']_2)\\
	\rho(a,b)=(\rho(a),\rho(b))\in T(M\times N)
\end{eqnarray*}
If the homological vector field of $A[1]$ and $B[1]$ are $X_Q^1$ and $X_Q^2$ separately, the homological vector field of $(A+B)[1]$ is $X_Q=X_Q^1+X_Q^2$. A deformation of the Lie algebroid structure on $A\oplus B$ is equivalent to a deformation of the homological vector field $X_Q$ which must be of the form $\tilde{X}_Q^1+\tilde{X}_Q^2$ where $\tilde{X}_Q^1\in\mathfrak{X}(A[1])$, $\tilde{X}_Q^2\in\mathfrak{X}(B[1])$ both of which are of degree $1$. By Proposition \ref{dgla} and Theorem \ref{thm1}
\begin{proposition}
$(\mathfrak{X}((A\oplus B)[1]),[[\cdot,\cdot]],[[X_Q,\cdot]])$ is a differential grade Lie algebra which governs the deformation of the Lie algebroid structure $X_Q$ on $A\oplus B$, i.e. $\tilde{X}_Q^1+\tilde{X}_Q^2$ deforms $X_Q$ if and only if
\[
	[[X_Q,\tilde{X}_Q^1+\tilde{X}_Q^2]]+\frac{1}{2}[[\tilde{X}_Q^1+\tilde{X}_Q^2,\tilde{X}_Q^1+\tilde{X}_Q^2]]=0
\]
which is equivalent to $\tilde{X}_Q^1$ is a deformation of $X_Q^1$ and $\tilde{X}_Q^2$ is a deformation of $X_Q^2$.
\end{proposition}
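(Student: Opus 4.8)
The plan is to realise the decoupling as an injective morphism of differential graded Lie algebras and then invoke the general fact that such a morphism identifies Maurer--Cartan elements. The DGLA assertion itself needs nothing new: $A\oplus B$ is a Lie algebroid over $M\times N$ with the block bracket and product anchor displayed above, so $(\mathfrak{X}((A\oplus B)[1]),[[\cdot,\cdot]],[[X_Q,\cdot]])$ being a DGLA that governs the deformations of $X_Q$ is just Proposition~\ref{dgla} and Theorem~\ref{thm1} applied to $A\oplus B$. The real content is the last sentence of the statement.

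First I would record that $X_Q=X_Q^1+X_Q^2$ inside $\mathfrak X((A\oplus B)[1])$, the summands being the lifts of $X_Q^1,X_Q^2$ along the pull-back of functions $\mathcal C^\infty(A[1])\hookrightarrow\mathcal C^\infty((A\oplus B)[1])\hookleftarrow\mathcal C^\infty(B[1])$: since the bracket on $A\oplus B$ is block diagonal and the anchor is a product, the defining formula \eqref{X_Qdef} for $A\oplus B$ separates into the two formulas for $A$ and $B$ (in coordinates this is the obvious splitting of \eqref{X_Qlocal}, where the $A$-part of $X_Q$ involves only the fibre coordinates of $A$ and the coordinates on $M$, the $B$-part only those of $B$ and of $N$). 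More generally, any field lifted from $\mathfrak X(A[1])$ is a derivation in the $A$-fibre and $M$-coordinates alone and any field lifted from $\mathfrak X(B[1])$ a derivation in the $B$-fibre and $N$-coordinates alone; two such derivations super-commute, so all cross brackets $[[Y^1,Z^2]]$ with $Y^1$ from the $A$-leg and $Z^2$ from the $B$-leg vanish. Consequently the map
\[
\iota\colon\ \mathfrak X(A[1])\times\mathfrak X(B[1])\longrightarrow\mathfrak X((A\oplus B)[1]),\qquad (Y^1,Y^2)\longmapsto Y^1+Y^2,
\]
is injective (a field in the $A$-directions and one in the $B$-directions summing to zero must each vanish), it intertwines the componentwise bracket $[[\cdot,\cdot]]\times[[\cdot,\cdot]]$ with $[[\cdot,\cdot]]$ because the cross brackets vanish, and it intertwines the componentwise differential $[[X_Q^1,\cdot]]\times[[X_Q^2,\cdot]]$ with $[[X_Q,\cdot]]$ because $X_Q=X_Q^1+X_Q^2$ and, again, the cross brackets $[[X_Q^1,Y^2]]$ and $[[X_Q^2,Y^1]]$ vanish; that is, $\iota$ is an injective morphism of DGLAs from the product of the two Lie-algebroid deformation DGLAs into the deformation DGLA of $A\oplus B$.

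The rest is formal. A morphism of DGLAs carries Maurer--Cartan elements to Maurer--Cartan elements, and since $\iota$ is injective, a block field $\tilde X_Q^1+\tilde X_Q^2$ satisfies $[[X_Q,\tilde X_Q]]+\tfrac{1}{2}[[\tilde X_Q,\tilde X_Q]]=0$ in $\mathfrak X((A\oplus B)[1])$ if and only if $(\tilde X_Q^1,\tilde X_Q^2)$ is a Maurer--Cartan element of the product DGLA; concretely the left side equals $\iota$ of $\big([[X_Q^1,\tilde X_Q^1]]+\tfrac{1}{2}[[\tilde X_Q^1,\tilde X_Q^1]],\ [[X_Q^2,\tilde X_Q^2]]+\tfrac{1}{2}[[\tilde X_Q^2,\tilde X_Q^2]]\big)$, which vanishes iff both components do. Finally a Maurer--Cartan element of a product of DGLAs is precisely a pair of Maurer--Cartan elements of the factors (the equation is componentwise), and by Theorem~\ref{thm1} a Maurer--Cartan element of $(\mathfrak X(A[1]),[[\cdot,\cdot]],[[X_Q^1,\cdot]])$ is exactly a deformation of $X_Q^1$, and likewise for $B$; this yields the asserted equivalence.

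The only step that is not bookkeeping is the super-commutativity of fields lifted from the two legs, i.e. making precise that $\mathcal C^\infty((A\oplus B)[1])$ is the (suitably completed) tensor product of $\mathcal C^\infty(A[1])$ and $\mathcal C^\infty(B[1])$ over the two factors of $M\times N$ and that the two classes of lifted derivations act on disjoint families of local coordinates; everything else is routine, most transparently checked in the local coordinates of \eqref{X_Qlocal}. I expect no obstruction beyond this: the proposition is in essence the statement that the deformation DGLA of a product Lie algebroid contains the product of the deformation DGLAs of the factors as the sub-DGLA of block-diagonal fields.
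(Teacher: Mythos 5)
Your proposal is correct and follows the same route as the paper: apply Proposition~\ref{dgla} and Theorem~\ref{thm1} to the product Lie algebroid $A\oplus B$ with $X_Q=X_Q^1+X_Q^2$, and observe that the Maurer--Cartan equation splits into its $A$- and $B$-components. The paper leaves the splitting implicit, whereas you justify it via the vanishing of the cross brackets between fields lifted from the two legs, which is exactly the right (and only nontrivial) point.
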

Here $[[\cdot,\cdot]]$ is the bracket of super vector fields as in Section \ref{sec3}.

\par It is immediate that $\Psi:A\to B$ is a homomorphism of Lie algebroids is equivalent to $gr(\Psi)\subset A\oplus B$ is a Lie subalgebroid over the submanifold $gr(\psi)\subset M\times N$. Denote the subbundle $gr(\Psi)$ by $E$ and the submanifold $gr(\psi)$ by $S$. Then the pull-back of $B$ along the natural projection $M\times N\to N$ restricting on $S$ is a complement of $E$ in $(A\oplus B)|_S$. Denote it by $F$ and we have $(A\oplus B)|_S= E\oplus F$. Identify $NS$ as a tubular neighborhood of $S$ in $M\times N$ and denote the pull-back of $E$ and $F$ along $NS\to S$ by $\tilde{E}$ and $\tilde{F}$ as before and we can identify $M\times N$ with $NS$, $A\oplus B$ with $\tilde{E}\oplus \tilde{F}$. Let $\tilde{\Psi}:A\to B$ be another homomorphism of Lie algebroids over $\tilde{\psi}:M\to N$. We say $\tilde{\Psi}$ is a deformation of $\Psi$ if
\begin{itemize}
\item $gr(\tilde{\psi})=gr(\sigma)$ for some $\sigma\in\Gamma(NS)$
\item $gr(\tilde{\Psi})=gr(\phi)$ for some $\phi: \tilde{E}|_{S_\sigma}\to \tilde{F}|_{S_\sigma}$
\item $gr(\tilde{\Psi})$ is a Lie subalgebroid of $A\oplus B$ over $gr(\tilde{\psi})$.
\end{itemize}
Thus a bundle map $\tilde{\Psi}:A\to B$ not far away from $\Psi:A\to B$ is equivalent to a pair $(\sigma,\phi)\in \Gamma(NS\oplus (\tilde{E}^*|_{S_\sigma}\otimes \tilde{F}|_{S_\sigma}))$. Let $p^*(NS\oplus F)$ denote the pull-back of $NS\oplus F$ over $S$ along $E\to S$ and $\mathfrak{a}=\Gamma(p^*NS\oplus p^*F[1])$. As in Section \ref{sec4}, the pair $(\sigma,\phi)$ corresponds to a unique $X_{\sigma,\phi}\in \mathfrak{a}$. Denote the map $\tilde{\Psi}\mapsto X_{\sigma,\psi}$ by $h$. Notice $X_Q^1$ is already analytic along the fibers of $NS$, by Theorem \ref{thm2} we have
\begin{proposition}
\label{homo1}
$(\mathfrak{X}(A\oplus B)[1],[[\cdot,\cdot]],\mathfrak{a}, P)$ is a V-algebra and $\mathfrak{a}_{X_Q}^P$ is a flat $L_\infty[1]$-algebra. If $X_Q^2$ is analytic along the fibers of $NS$, the $L_\infty[1]$-algebra $\mathfrak{a}_{X_Q}^P$ governs the deformation of the Lie algebroid homomorphism $\Psi:A\to B$. A bundle map $\tilde{\Psi}:A\to B$ over $\tilde{\psi}:M\to N$ is a deformation of $\Psi$ if and only if the vector field $X_{\sigma,\phi}=h(\tilde(\Psi))$ is a Maurer-Cartan element of $\frak{a}_{X_Q}^P$.
\end{proposition}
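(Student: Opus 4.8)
The plan is to obtain this proposition as a special case of Theorem \ref{thm2}, applied to the product Lie algebroid $A\oplus B$ over $M\times N$ together with its Lie subalgebroid $E=gr(\Psi)$ over $S=gr(\psi)$. As recorded just above the statement, $\Psi:A\to B$ (over $\psi:M\to N$) is a homomorphism of Lie algebroids exactly when $E=gr(\Psi)$ is a Lie subalgebroid of $(A\oplus B,[\cdot,\cdot],\rho)$ over $S=gr(\psi)$, and $S$ is a closed submanifold because $\psi$ is continuous; the homological vector field of $(A\oplus B)[1]$ is $X_Q=X_Q^1+X_Q^2$. Fixing the complement $F$, the tubular neighborhood $NS$, and the pull-backs $\tilde{E},\tilde{F}$, together with $\mathfrak{a}=\Gamma(p^*NS\oplus p^*F[1])$, exactly as in the paragraph preceding the statement, a bundle map $\tilde{\Psi}:A\to B$ close to $\Psi$ is the same datum as a possible deformation $(\sigma,\phi)$ of the Lie subalgebroid $E$, hence, via the local correspondence of Section \ref{sec4}, the same datum as an element $X_{\sigma,\phi}\in\mathfrak{a}_0$; the composite of these bijections is the map $h$. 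Now Proposition \ref{propv} applied to $A\oplus B$ gives the V-algebra $(\mathfrak{X}((A\oplus B)[1]),[[\cdot,\cdot]],\mathfrak{a},P)$, and since $E$ is a Lie subalgebroid the vector field $X_Q$ is tangent to $E[1]$, so $P(X_Q)=0$; as $X_Q$ also has degree $1$ and satisfies $[[X_Q,X_Q]]=0$, the first half of Theorem \ref{thm2} (that is, Theorem \ref{v1}) makes $\mathfrak{a}_{X_Q}^P$ a flat $L_\infty[1]$-algebra. This is the first assertion, and it needs no analyticity.

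For the Maurer--Cartan statement I would next verify that $X_Q$ is analytic along the fibers of $NS$. Choosing the tubular neighborhood so that the fibers of $NS$ lie in the $N$-directions --- legitimate because $S=gr(\psi)$ is a graph over $M$, so the fibers of the projection $M\times N\to M$ are transverse to $S$ --- the summand $X_Q^1$ depends only on the base ($M$-)coordinates of $NS$ and on the fiber coordinates of $A$, hence is constant, and in particular analytic, along the fibers of $NS$ (this is the remark made just before the statement), while $X_Q^2$ is analytic along the fibers of $NS$ by hypothesis; since a sum of two vector fields each analytic along the fibers of $NS$ is again analytic along them, $X_Q=X_Q^1+X_Q^2$ satisfies the hypothesis of Theorem \ref{thm2}. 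That theorem then yields that $(\sigma,\phi)$ is a deformation of the Lie subalgebroid $E$ if and only if $-X_{\sigma,\phi}$ is a Maurer--Cartan element of $\mathfrak{a}_{X_Q}^P$, convergence of the Maurer--Cartan series being part of that statement. Transporting this equivalence along the bijection $h$ and along the ``homomorphism versus Lie subalgebroid'' dictionary of the first paragraph gives the characterization of deformations of $\Psi$ asserted here.

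The main obstacle is bookkeeping rather than analysis: one must check that the notion ``deformation of the homomorphism $\Psi$'' coincides on the nose with ``deformation of the Lie subalgebroid $gr(\Psi)$'' as used in Theorem \ref{thm2} --- in particular that a bundle map $\tilde{\Psi}$ close to $\Psi$ automatically has $gr(\tilde{\Psi})$ the graph of a map $\tilde{E}|_{S_\sigma}\to\tilde{F}|_{S_\sigma}$ --- and one must reconcile the sign convention in the definition of $h$ with the $-X_{\sigma,\phi}$ that appears in Theorem \ref{thm2}. All genuinely analytic content is already packaged in Theorem \ref{thm2} and Lemma \ref{lem1}.
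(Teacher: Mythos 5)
Your proposal matches the paper's own (very brief) argument: the paper likewise deduces Proposition \ref{homo1} directly from Theorem \ref{thm2} applied to the Lie algebroid $A\oplus B$ with Lie subalgebroid $gr(\Psi)$ over $gr(\psi)$, after observing that $X_Q^1$ is already analytic along the fibers of $NS$ so that the hypothesis on $X_Q^2$ makes $X_Q=X_Q^1+X_Q^2$ analytic there. Your remark about reconciling the sign ($-X_{\sigma,\phi}$ in Theorem \ref{thm2} versus $X_{\sigma,\phi}$ in the proposition) is a fair catch of a discrepancy the paper glosses over, but it does not change the route.
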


\par If we only consider the deformation $\tilde{\Psi}:A\to B$ over the same base map $\psi:M\to N$, i.e. $\tilde{\psi}=\psi$, $X_Q^2$ is not required being analytic and the $L_\infty[1]$-algebra can be simplified the same as in Corollary \ref{cor1}.

\par We can also govern the simultaneous deformation by Corollary \ref{cor2}.
\begin{proposition}
\label{homo2}
If $X_Q^2$ is analytic along the fibers of $NS$ and $\tilde{\Psi}$ is a deformation of $\Psi$ with $X_{\sigma,\phi}=h(\tilde(\Psi))\in\mathfrak{a}$, the flat $L_\infty[1]$-algebra $((\mathfrak{X}(A\oplus B)[1])[1]\oplus \mathfrak{a})_{X_Q}^{P_{I(X_{\sigma,\psi})}}$ governs the simultaneous deformation of the Lie algebroid structures on $A$, $B$ and Lie algebroid homomorphisms between them, i.e. $\tilde{X}_Q^1$, $\tilde{X}_Q^2$ deforms the Lie algebroid structures of $A$, $B$ and $\tilde{\Psi}':A\to B$ such that $\tilde{\Psi}+\tilde{\Psi}'$ is a homomorphism of Lie algebroids with deformed Lie algebroids structures if and only if $(\tilde{X}_Q^1+\tilde{X}_Q^2,h(\tilde{\Psi}'))$ is a Maurer-Cartan element of $((\mathfrak{X}(A\oplus B)[1])[1]\oplus \mathfrak{a})_{X_Q}^{P_{I(X_{\sigma,\psi})}}$.
\end{proposition}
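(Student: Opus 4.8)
The plan is to deduce the proposition from Corollary \ref{cor2}, applied to the Lie algebroid $A\oplus B$ over $M\times N$, its Lie subalgebroid $E=gr(\Psi)$ over $S=gr(\psi)$, and the complement $F$ given by the restriction to $S$ of the pull-back of $B$ along the projection $M\times N\to N$. First I would check that the hypothesis of Corollary \ref{cor2} holds. The homological vector field of $A\oplus B$ is $X_Q=X_Q^1+X_Q^2$; identifying the tubular neighbourhood $NS$ of $S=gr(\psi)$ so that its fibres are the ``$N$-directions'', the coefficients of $X_Q^1$ depend only on the base coordinates of $S$, so $X_Q^1$ is trivially analytic along the fibres of $NS$, and together with the standing assumption on $X_Q^2$ the field $X_Q$ is analytic along the fibres of $NS$. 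By Proposition \ref{propv}, $(\mathfrak{X}((A\oplus B)[1]),[[\cdot,\cdot]],\mathfrak{a},P)$ is a V-algebra with $X_Q\in Ker(P)_1$ and $[[X_Q,X_Q]]=0$, so Corollary \ref{cor2} applies with twisting element $-I(X_{\sigma,\phi})$, where $X_{\sigma,\phi}=h(\tilde\Psi)\in\mathfrak{a}_0$ encodes the already-performed deformation $\tilde\Psi$ of $\Psi$.

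Next I would spell out the dictionary between the abstract data of Corollary \ref{cor2} and the geometric data here. By the Proposition preceding Proposition \ref{homo1}, a pair $(\tilde X_Q^1,\tilde X_Q^2)$ of degree-$1$ fields deforms the Lie algebroid structures on $A$ and $B$ respectively if and only if $\tilde X_Q:=\tilde X_Q^1+\tilde X_Q^2$ makes $X_Q+\tilde X_Q$ a Lie algebroid structure on $A\oplus B$. The map $h$ of Proposition \ref{homo1} is precisely the Section \ref{sec4} correspondence $(\tilde\sigma,\tilde\phi)\mapsto X_{\tilde\sigma,\tilde\phi}$ between bundle maps $A\to B$ near $\Psi$ (over base maps near $\psi$) and $\mathfrak{a}_0$; composing further deformations corresponds to adding in $\mathfrak{a}$, so $h(\tilde\Psi+\tilde\Psi')=X_{\sigma,\phi}+X_{\tilde\sigma,\tilde\phi}$. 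Under this dictionary Corollary \ref{cor2} says that $(\tilde X_Q,h(\tilde\Psi'))$ is a Maurer--Cartan element of $((\mathfrak{X}(A\oplus B)[1])[1]\oplus\mathfrak{a})_{X_Q}^{P_{-I(X_{\sigma,\phi})}}$ exactly when $X_Q+\tilde X_Q$ is a Lie algebroid structure on $A\oplus B$ and $(\sigma+\tilde\sigma,\phi+\tilde\phi)$ is a Lie subalgebroid for it. The first condition translates, as above, to ``$\tilde X_Q^1,\tilde X_Q^2$ deform the two structures''; the second, via the identification at the start of this subsection of Lie algebroid homomorphisms $A\to B$ with Lie subalgebroids of the form $gr(\cdot)\subset A\oplus B$, translates to ``$\tilde\Psi+\tilde\Psi'$ is a homomorphism of the deformed Lie algebroids''. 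Concatenating these equivalences yields the statement.

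The step I expect to be the main obstacle is making the first component land in $Ker(P)_1$, which is where Corollary \ref{cor2} lives. A naive lift $\tilde X_Q^1+\tilde X_Q^2$ of a deformation of the two separate structures need not be tangent to $E[1]=gr(\Psi)[1]$: in local coordinates this tangency is an extra relation, roughly $\tilde c^{\,s}_\alpha\,\Psi^\alpha_i=\tilde b^{\,t}_i\,\partial\psi^s/\partial x^t$ and its $F$-counterpart, tying the deformed anchors, the deformed brackets, and $\Psi$ together. To handle this I would either restrict to deformations with $\tilde X_Q^1+\tilde X_Q^2\in Ker(P)$ and argue that every simultaneous deformation is gauge-equivalent to one of this form after flowing the complement $F$ along a suitable $\mathfrak{a}$-valued vector field, or compute the Maurer--Cartan equation of Theorem \ref{v2} directly for an arbitrary degree-$1$ first component $\tilde X_Q$ and verify that the residual projection terms $P(\tilde X_Q+[[X_Q,a]])$ still reorganise into the two conditions claimed. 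A secondary, more bookkeeping obstacle is the behaviour of $h$ when the deformed base map $\tilde\psi$ differs from $\psi$: additivity of $h$ and the identification $M\times N\cong NS$ must be kept compatible, exactly as in the passage from Theorem \ref{thm2} to Corollary \ref{cor2}.
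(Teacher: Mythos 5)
Your proposal follows the paper's own route: the paper gives no argument for Proposition \ref{homo2} beyond invoking Corollary \ref{cor2} for the Lie algebroid $A\oplus B$ with Lie subalgebroid $E=gr(\Psi)$ over $S=gr(\psi)$, using exactly the dictionary (the complement $F$, the map $h$, the automatic analyticity of $X_Q^1$ along the fibers of $NS$) that you set up. The obstacle you anticipate about forcing $\tilde{X}_Q^1+\tilde{X}_Q^2$ into $Ker(P)_1$ is not addressed in the paper either; Corollary \ref{cor2} (via Theorem \ref{zam1}) simply quantifies only over $\tilde{X}_Q\in Ker(P)_1$, and that restriction is carried over implicitly into the Proposition, so neither your gauge-fixing argument nor a direct computation for general first components is needed to match the paper's (admittedly terse) proof.
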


\par Specially we can choose $\tilde{\Psi}=0$ (trivial deformation of $\Psi$), then $\tilde{X}_Q^1$, $\tilde{X}_Q^2$ deforms the Lie algebroid structures of $A$, $B$ and $\tilde{\Psi}':A\to B$ is a homomorphism of Lie algebroids with deformed Lie algebroids structures if and only if $(\tilde{X}_Q^1+\tilde{X}_Q^2,h(\tilde{\Psi}'))$ is a Maurer-Cartan element of $((\mathfrak{X}(A\oplus B)[1])[1]\oplus \mathfrak{a})_{X_Q}^{P}$.

\begin{remark}
If the Lie algebroids $A$ and $B$ degenerate to Lie algebras, then $\Psi$ degenerates to homomorphism of Lie algebras. Our Proposition \ref{homo1} and \ref{homo2} recover the results for Lie algebra homomorphisms in \cite{zambon}.
\end{remark}

\end{document}